\definecolor{navyblue}{rgb}{0.0, 0.0, 0.5}
\definecolor{LightPink}{rgb}{0.858, 0.188, 0.478}
\newcommand{\acomm}{\alpha_\mathrm{comm}}
\newcommand{\Heff}{H_\mathrm{eff}}
\newcommand{\epsint}{\epsilon_\mathrm{ext}}
\newcommand{\epsdata}{\epsilon_\mathrm{data}}
\newcommand{\tF}{\tilde{F}}
\newcommand{\cP}{\mathcal{P}}
\newcommand{\cS}{\mathcal{S}}
\newcommand{\cO}{\mathcal{O}}
\newcommand{\sym}{\sigma}
\newcommand{\width}{\ell}
\newcommand{\amax}{a_\mathrm{max}}
\newcommand{\Dmax}{D_\mathrm{max}}
\newcommand{\Cexp}{C_\mathrm{Trot}}
\DeclarePairedDelimiter\abs{\lvert}{\rvert}
\DeclarePairedDelimiter\norm{\lVert}{\rVert}
\DeclarePairedDelimiter\expval{\langle}{\rangle}
\DeclareMathOperator{\ad}{ad}
\DeclareMathOperator{\tr}{tr}
\DeclareMathOperator{\polylog}{polylog}
\Crefname{lemma}{Lemma}{Lemmas}
\Crefname{proposition}{Proposition}{Propositions}
\Crefname{definition}{Definition}{Definitions}
\Crefname{theorem}{Theorem}{Theorems}
\Crefname{conjecture}{Conjecture}{Conjectures}
\Crefname{corollary}{Corollary}{Corollaries}
\Crefname{example}{Example}{Examples}
\Crefname{section}{Section}{Sections}
\Crefname{appendix}{Appendix}{Appendices}
\Crefname{figure}{Fig.}{Figs.}
\Crefname{equation}{Eq.}{Eqs.}
\Crefname{table}{Table}{Tables}
\Crefname{item}{Property}{Properties}
\Crefname{remark}{Remark}{Remarks}
\Crefname{fact}{Fact}{Facts}
\newtheorem{theorem}{Theorem}
\newtheorem{corollary}[theorem]{Corollary}
\newtheorem{lemma}[theorem]{Lemma}
\newcommand\prob\textsc
\newcommand{\probleminput}[1]{\gdef\@probleminput{#1}}
\newcommand{\problemquestion}[1]{\gdef\@problemquestion{#1}}
\newcommand{\problempromise}[1]{\gdef\@problempromise{#1}}
  \par\addvspace{0\baselineskip}
\title{Exponentially Reduced Circuit Depths Using Trotter Error Mitigation}
\author[1,2]{\href{https://orcid.org/0000-0002-6077-4898}{James~D.~Watson}}
\author[3]{\href{https://orcid.org/0000-0003-1478-7230}{Jacob~Watkins}}
\affil[1]{Joint Center for Quantum Information \& Computer Science, National Institute of Standards and Technology and University of Maryland}
\affil[2]{Department of Computer Science and Institute for Advanced Computer Studies, University of Maryland}
\affil[3]{
Facility for Rare Isotope Beams, Michigan State University}
\date{}
\begin{document}

{\begingroup
		\hypersetup{urlcolor=navyblue}
\maketitle
		\endgroup}

\begin{abstract}
    Product formulae are a popular class of digital quantum simulation algorithms due to their conceptual simplicity, low overhead, and performance which often exceeds theoretical expectations. 
    Recently, Richardson extrapolation and polynomial interpolation have been proposed to mitigate the Trotter error incurred by use of these formulae.  
    This work provides an improved, rigorous analysis of these techniques for the task of calculating time-evolved expectation values. 
    We demonstrate that, to achieve error $\epsilon$ in a simulation of time $T$ using a $p^\text{th}$-order product formula with extrapolation, circuits depths of $O\left(T^{1+1/p}\polylog(1/\epsilon)\right)$ are sufficient --- an exponential improvement in the precision over product formulae alone. 
    Furthermore, we achieve commutator scaling, improve the complexity with $T$, and do not require fractional implementations of Trotter steps.  
    Our results provide a more accurate characterisation of the algorithmic error mitigation techniques currently proposed to reduce Trotter error.
   
\end{abstract}

\newpage
\tableofcontents
\newpage

\section{Introduction} \label{sec:introduction}
\yinipar{Q}uantum simulation --- the task of computing dynamical properties of a quantum system --- has been an early inspiration and impetus for quantum computing, and is among the most promising candidates for near-term quantum advantage. 
Scientific domains such as quantum chemistry, nuclear physics, materials science, and high energy physics stand to benefit from robust, programmable quantum devices which can implement a quantum Hamiltonian of interest \cite{cao2019quantum,shaw2020quantum,roggero2020quantum, watson2023quantum, clinton2024towards}. 
Beyond the study of quantum phenomena, Hamiltonian simulation forms a key component to more generic routines such as linear systems solvers~\cite{harrow2009quantum, clader2013preconditioned}, and even the investigation of non quantum phenomena~\cite{costa2019quantum, babbush2023exponential}. 

Hamiltonian simulation is, by now, a relatively mature subfield of quantum computing. Significant attention has been given to all steps of the simulation procedure: qubit mappings~\cite{steudtner2018fermion,derby2021compact, wang2023quantum}, state preparation~\cite{ciavarella2022preparation}, time evolution~\cite{lloyd1996universal,berry2015simulating,low2019hamiltonian}, and measurement~\cite{knill2007optimal, somma2019quantum}.
Time evolution has received particular attention as it is often the most expensive step in a full routine. 
Roughly speaking, there currently exist four families of time evolution algorithms: product formulae (also known as Trotterization)~\cite{lloyd1996universal,childs2021theory}, linear combination of unitaries (LCU)~\cite{childs2012LCU, haah2021quantum,low2019well,aftab2024multi}, quantum walks \cite{berry2012black}, and qubitization~\cite{low2019hamiltonian}.
Despite the inferior asymptotic scaling of product formulae, they have many desirable properties, including conceptual and practical simplicity, lack of auxiliary qubits, natural incorporation of Lieb-Robinson bounds, commutator scaling, and the tendency to conserve desirable properties and symmetries of the Hamiltonian~\cite{tran2020destructive,childs2021theory, tran2021faster, csahinouglu2021hamiltonian, zhao2022hamiltonian, zhao2024entanglement}.
Many of these desirable properties are believed not to hold for post-Trotter methods in general \cite{zlokapa2024hamiltonian}.
Remarkably, the empirical performance of product formulae is often comparable to the LCU and qubitization methods in numerical studies~\cite{babbush2015chemical,childs2018toward}, and far better than leading error bounds predict~\cite{heyl2019quantum}. 
These studies indicate a gap in our theoretical understanding of expected Trotter error. 

While the unexpectedly high empirical performance of product formulae is fortunate, the current constraints on quantum hardware make simulating large, complicated Hamiltonians mostly out of reach, motivating the search for improved Trotter-based approaches that do not significantly increase quantum resources.
A wide range of product formulae algorithms have been suggested to optimise performance according to properties of the Hamiltonian under consideration~\cite{yuan2019theory,campbell2019random,ouyang2020compilation,morales2024greatly, nakaji2024qswift,sharma2024hamiltonian, bosse2024efficient, chen2024adaptive} and by optimising the circuit itself \cite{mckeever2023classically}.
Additional progress has been made by recognising that time evolution is not a full algorithm, but ultimately a subroutine embedded within a measurement protocol.
With suitable choice of measurements, the resulting classical data can be processed to improve results without going beyond product formulae. 
This amounts to algorithmic error mitigation, which is conceptually similar to the hardware error mitigation being developed and employed on current devices. 

Several Trotter mitigation approaches have been proposed, including Richardson extrapolation~\cite{endo2019,vazquez2023well}, polynomial interpolation~\cite{rendon2024improved} and parametric matrix models~\cite{cook2024parametric}.
The Richardson extrapolation and polynomial interpolation techniques are remarkably simple. 
Both involve taking an observable of interest and computing its time-evolved expectation value under the approximate Trotterized evolution for different time step sizes.
It is then possible to extrapolate to the zero step-size limit, corresponding to the perfect (i.e. un-Trotterized) time evolution. 

In this work, we conduct a rigorous and unified performance analysis of the Richardson extrapolation and polynomial interpolation methods, taking inspiration from a recent treatment by~\citeauthor{aftab2024multi} for multiproduct formulas~\cite{aftab2024multi}. 
In particular, we demonstrate the expected commutator scaling and $O(T^{1+1/p})$ scaling with the simulation time for a $p$th-order formula.
Our analysis separately considers "coherent" and "incoherent" measurement protocols for acquiring the expectation values for various Trotter step sizes. 
While the incoherent scheme optimises for short circuit depths, the coherent scheme achieves Heisenberg-limited precision and overall fewer quantum operations.
For the short-depth method, we show that only circuit depths of 
\begin{align*}
    O\left( (\lambda T)^{1+1/p}\polylog\left(1/\epsilon\right)\right)
\end{align*}
are sufficient, where $\lambda$ is a factor depending on commutators of terms in the Hamiltonian.
For the method which optimises asymptotic resources, we show that the overall number of Trotter steps which need to be implemented scales as
\begin{align*}
    O\left( \frac{(\lambda T)^{1+1/p}}{\epsilon}\polylog\left(1/\epsilon\right)\right).
\end{align*}
Essential to achieving these results is the fact that our extrapolation approaches are well-conditioned, such that small errors in the data do not rapidly accumulate. Finally, the algorithms require no additional control gates or ancillary qubits compared to regular product formulae.

The rest of this paper is outlined as follows. 
In~\Cref{sec:summary-of-methods}, we provide background on product formulae, the variation of parameters formula, extrapolation techniques, and measurement protocols. 
The main results are stated in~\Cref{sec:results}. 
\Cref{Sec:error-analysis} contains the technical error analysis of time-evolved observables under product formulae, which we then use to inform the complexity analyses of Richardson extrapolation (\Cref{sec:richardson-extrapolation}) and polynomial interpolation (\Cref{sec:polynomial-interpolation}). We support these theoretical findings with small numerical implementations in~\Cref{sec:numerical-demonstration}. \Cref{sec:classical_shadows} augments our approach with  the framework of classical shadows to estimate many time-evolved observables efficiently. Finally, we provide some discussion and concluding remarks in~\Cref{sec:discussion-and-conclusions}.

\section{Summary of Methods} \label{sec:summary-of-methods}
In terms of the algorithms considered in this work, the primary elements are product formulae, amplitude measurement, and two extrapolation techniques: Richardson and polynomial. For the theoretical analysis, we rely heavily on the variation of parameters formula from the theory of first-order ordinary differential equations. This section briefly reviews all of these components to help the reader understand the main results and subsequent proofs. Those interested only in result statements are welcome to skip to Section~\ref{sec:results}.

\subsection{Product Formulae}
Consider a (time independent) Hamiltonian $H$ expressed as a sum of $\Gamma$ terms
\begin{equation}
    H = \sum_{\gamma = 1}^\Gamma H_\gamma
\end{equation}
for Hermitian $H_\gamma$. Product formulae are splittings of the exponential $e^{-i H t}$, the time evolution operator, along the various terms $H_\gamma$. For example, the simplest product formula, namely first order Trotter, is defined by
\begin{equation}
    \cP_1(t) \coloneqq \prod_{\gamma=1}^\Gamma e^{-iH_\gamma t}
\end{equation}
where, by convention, we take the product going right to left. The utility of product formulae arises from the fact that there often exists decompositions of $H$ into terms $H_\gamma$ such that each exponential $e^{-i H_\gamma t}$ may be computed efficiently. For example, $H_\gamma$ may be $k$-local, or $H$ may be sparse and thus decomposable into 1-sparse terms. 

Product formulae are meant to approximate the exact time evolution operator for short times $t$. Larger times can be approximated to arbitrary precision by breaking the simulation time $T\in \mathbb{R}$ into sufficiently many steps. A product formula $\cP$ is said to be \emph{order} $p$ if
\begin{equation}
    \cP(t) - e^{-i H t} = O(t^{p+1})
\end{equation}
for small $t$. Thus, for time $T \in \mathbb{R}$ and $r \in \mathbb{Z}_+$,
\begin{equation}
    e^{-i H T} - \cP(T/r)^r = O(T^{p+1}/r^p).
\end{equation}
In this work, we only consider formulae of order at least 1. The order is roughly a proxy for accuracy, although large constant factors typically negate the advantage of high-order formulae in typical instances. There exist product formulae of arbitrarily large order, the most well-known family of which is the Trotter-Suzuki formulae $S_{2k}$ of order $p = 2k$. These are defined recursively as follows. For $k = 1$, 
\begin{align*}
    S_2(t) \coloneqq  \prod_{\gamma=\Gamma}^1 e^{-iH_\gamma t/2} \prod_{\gamma=1}^\Gamma e^{-iH_\gamma t/2}
\end{align*}
and for $k\in \mathbb{Z}_+$ greater than $1$,
\begin{align*}
    S_{2k}(t)\coloneqq [S_{2(k-1)}(u_kt)]^2 S_{2(k-1)}((1-4u_k)t)[S_{2(k-1)}(u_kt)]^2
\end{align*}
with a value of $u_k \in \mathbb{R}$ that is presently unimportant. One useful property of $S_{2k}$ is that it is symmetric, meaning $S_{2k}(-t) = S_{2k}^{-1}(t)$.

We will find it useful to define $s\coloneqq 1/r$ and treat $s$ as a continuous variable. It can be shown that $\cP(sT)^{1/s}$ is analytic in a neighbourhood of $s = 0$~\cite{rendon2024improved}. In such a neighbourhood, the product formula can be written as an evolution under an effective Hamiltonian
\begin{align} \label{eq:Delta_Heff}
\begin{aligned}
    \cP^{1/s}(sT) &= e^{-iT\Heff(sT)} \\
    \Heff(sT) &= H + \sum_{j\geq p} E_{j+1} s^jT^j \\
     &= H + E(sT)
\end{aligned}
\end{align}
where $E(t) \coloneqq H_\mathrm{eff}(t) - H$ and $E_{j+1}$ are a set of coefficients with explicit form given by the Baker-Campbell-Haussdorf (BCH) formula. When $\cP$ is symmetric, the error series is even in $s$, meaning $E_{j+1} = 0$ for odd $j$.

\subsection{Variation of Parameters Formula}

Our primary technical tool in this work will be the variation of parameters formula for linear operators. Given two matrices $A, B$, the formula reads
\begin{equation} \label{eq:var_of_param_general}
    e^{(A+B) t} = e^{A t} + \int_0^t e^{A(t - \tau)} B e^{(A+B) \tau} d\tau.
\end{equation}
This can be understood as a special case of the formula from the theory ordinary first-order differential equations, treating $B$ as a perturbation~\cite{aftab2024multi}.

We apply this formula by focusing on the relevant linear operators acting on the space of observables (sometimes dubbed "superoperators"). 
The exact and Trotterized evolutions of an observable $O$ under $H$ are $O(T) = e^{iHT}Oe^{-iHT}$ and $\tilde{O}(T,s) = e^{iT(H+E(sT))}Oe^{-iT(H+E(sT))}$, respectively, with $E$ defined in \cref{eq:Delta_Heff}.
We see that these are solutions to the first-order differential equations
\begin{equation}
    \partial_T O(T) = i\ad_{H}O(T) \qquad  \partial_T \tilde{O}(T,s) = i\ad_{H+E(sT)}\tilde{O}(T,s)
\end{equation}
where $\ad_{H}(\cdot) \coloneqq [H,\cdot]$. Taking $A = i \ad_H$ and $B = i \ad_{E(sT)}$, the formula~\eqref{eq:var_of_param_general} reads
\begin{align*}
   \tilde{O}(T,s) &= e^{iT\ad_{\Heff}}(O)   
   \\
   &= e^{iT\ad_{H}}(O) + i\int_{0}^Td\tau e^{i(T-\tau)\ad_{H}} \ad_{E(sT)}(\tilde{O}(\tau,s)).
\end{align*}
We can recursively insert the variation of parameters formula into the right hand side. Iterating this $K-1$ times, for a value of $K$ we will choose appropriately, we get a series expansion for the time-evolved observable in terms of the inverse Trotter-step $s$.
\begin{align}\label{Eq:Series_Expansion}
    \tilde{O}(T,s) = O(T) + \sum_{j\geq p} s^j \tilde{E}_{j+1,K}(T)(O) + \tilde{F}_K(s,T)(O)
\end{align}
Here $\tilde{E}$ and $\tilde{F}$ are (super)operator coefficients. 
The $F_K$ term can be thought of a type of remainder, made smaller with larger $K$, and
the coefficients $\tilde{E}_{j+1,K}$ consist of nested commutators of the $H_\gamma$. 
These claims are made rigorous in \cref{Sec:Error_Expansion}. The value of this expansion is that, by increasing $K$, we can make the remainder term irrelevant compared to the dominant errors captured by the power series in $\tilde{E}_{j+1, K}$. 
Thus, by characterising the $\tilde{E}_{j+1,K}$, it becomes much easier to understand the effects of extrapolation on the resulting estimate.

\subsection{Extrapolation Methods}
\Cref{Eq:Series_Expansion} defines a series expansion in terms of the inverse Trotter step size $s$ for the Trotterized time evolution of an observable. This, in turn, gives us a series expansion for the scalar-valued function
\begin{equation} \label{eq:Trotter_evolved_EVs}
    f(s) \coloneqq \expval{\tilde{O}(T,s)}.
\end{equation}
Ideally, we wish to estimate $f(0)$ to obtain a time evolved expectation value without Trotter error. By taking measurements at particular values of $s$, we can extrapolate to the $s=0$ ideal.
Here we review the two methods to perform this extrapolation considered in this work: Richardson extrapolation and polynomial interpolation.

\subsubsection{Richardson Extrapolation} \label{sec:richardson-background}
Given a function $f$ with a series expansion, Richardson extrapolation is a method to get iteratively improved estimates of a particular value \cite{richardson1911approximate, sidi_2003}.
Suppose we have a function with the expansion
\begin{align*}
    f(s) = \sum_{k=1}^\infty c_ks^k
\end{align*}
for some coefficients $c_k$, and we wish to approximate $f(0)$ when the only information about $f$ we have access to values of $f(s)$ for $s\in \{s_1,\dots , s_m\}$.
Then we can iteratively construct an approximation as follows.
Suppose we choose some $s_0 = s >0$ and $s_1 = x/k_1$ for integer $k_1 \neq 0$. 
We see that if we evaluate the function at a point $f(s/k_1)$, we have
\begin{align*}
     f(s/k_1) = f(0) + c_1\frac{s}{k_1} + c_2 \left(\frac{s}{k_1}\right)^2 + O(s^3).
\end{align*}
Since $f(s/k_1)$ and $f(s)/k_1$ match at first order, by subtracting the two equations yields
\begin{align*}
    f(s)/k_1 - f(s/k_1) = (1/k_1-1)f(0) + \tilde{c}_2' s^2 + O(s^3)
\end{align*}
or, after dividing by the factor multiplying $f(0)$,
\begin{align}
    \frac{f(s)/k_1 - f(s/k_1)}{1/k_1 - 1} \equiv F^{(1)}(s) = f(0) + \tilde{c}_2 s^2 + O(s^3).
\end{align}
Thus, we have constructed an estimator $F^{(1)}(s)$ of $f(0)$ accurate to order $O(s^2)$. By comparison, $f(s)$ is only order $O(s)$. The precise form of $\tilde{c}_2$ is irrelevant for the present illustration. The procedure can be iterated again to eliminate the $\tilde{c}_2$ terms, and so on as desired.

In general, a linear combination of $m$ evaluations of $f$, with suitably chosen coefficients $b_j$, removes $m$ terms in the series expansion. An $m$-term Richardson extrapolation $F^{(m)}(s)$ has a small $s$ behaviour which satisfies
\begin{align*} 
    |F^{(m)}(s) - f(0)| &\leq \sum_{j=1}^m |b_j| \left| \sum_{k=m+1}^\infty c_k s^k  \right| 
    \\
    &=  O(s^{m+1}).
\end{align*}
Thus, by picking sufficiently small $s$ and sufficiently large $m$, we achieve a more accurate estimate of $f(0)$.
We can then apply this to the series expansion we have from \cref{Eq:Series_Expansion}.

When applying a well-condition Richardson extrapolation for time evolved expectation values, we will find that the error scaling as
\begin{align*}
    |F^{(m)}(s) - \expval{O(T)}|= O(s^{2m}T^{2m(1+1/p)})
\end{align*}
for a symmetric order $p$ product formula. By choosing the sampling points $\{s_i\}_{i=1}^m$ such that $s_1$ is the largest, and all other are related by the Chebyshev nodes starting from $s_1$, one can also show robustness of the estimator $F^{(m)}(s)$ to noisy data~\cite{low2019well}. 
We give a more thorough analysis of Richardson extrapolation applied to Trotter-evolved expectation values in \cref{sec:richardson-extrapolation}.

\subsubsection{Polynomial Interpolation}
Here we review the proposal of \citeauthor{rendon2024improved} for using polynomial interpolation as a Trotter extrapolation method~\cite{rendon2024improved}. Given a real-valued function $f(s)$ in a domain $D \subseteq \mathbb{R}$, and given a set of values $f(s_1), \dots, f(s_m)$, there is a unique $(m-1)$ degree polynomial $P_{m-1} f$ which matches the value of $f$ at each $s_i$. A concrete representation may be given terms of the Lagrange basis polynomials
\begin{align*}
    P_{m-1} f(s) = \sum_{i=1}^m f(s_i) \mathcal{L}_i(s),
\end{align*}
where 
\begin{align*}
    \mathcal{L}_i(s) \coloneqq \prod_{\substack{1\leq n\leq m\\ n\neq i}} \frac{(s-s_n)}{(s_i-s_n)}.
\end{align*}
More importantly, $P_{m-1} f$ may be computed by a number of standard techniques such as barycentric interpolation, which is efficient and stable to floating-point errors~\cite{higham2004numerical}. By choosing the sample points $s_i$ to be at the \emph{Chebyshev nodes} of an interval of interest, the interpolation will be robust to errors in the computed values $f(s_i)$~\cite{rivlin2020chebyshev}. These features make interpolation a plausible tool for Trotter extrapolation.

To apply polynomial interpolation to Trotter-evolved expectation values $f(s)$ of~\Cref{eq:Trotter_evolved_EVs}, we consider a neighbourhood $[-\width,\width]$ of the origin, where $\width\in \mathbb{R}_+$ is to be chosen based on problem parameters. This function can be computed for both positive and negative $s$ using product formulae evolutions on a quantum computer. Performing a Chebyshev interpolation of $m\in 2\mathbb{Z}_+$ sample points, the final estimate is given by $P_{m-1}f(0)$.~\Cref{Fig:Polynomial_Interpolation} depicts this setup. One can show~\cite{rendon2024improved} that the interpolation error at the point $s=0$ is bounded as
\begin{align}\label{Eq:Polynomial_Approx_Error_Intro}
     |f(0) - P_{m-1}f(0)| \leq \max_{\xi \in [-\width,\width]} \abs{f^{(m)}(\xi)} \left(\frac{\width}{2m}\right)^{m}.
\end{align}
Utilising the error expansion~\cref{Eq:Series_Expansion}, we can bound the derivative term as
\begin{align*}
     \partial_s^m \expval{\tilde{O}(T,s)} = O\left(\width^{m}T^{m(1+1/p)}\right).
\end{align*}
Hence, by choosing the interval with $\width = O(T^{-(1+1/p)})$, we get that the approximation error from~\cref{Eq:Polynomial_Approx_Error_Intro} decreasing exponentially with $m$.
See~\cref{sec:polynomial-interpolation} for the complete analysis.
\begin{figure}
    \centering
    \includegraphics[width=1.0\textwidth]{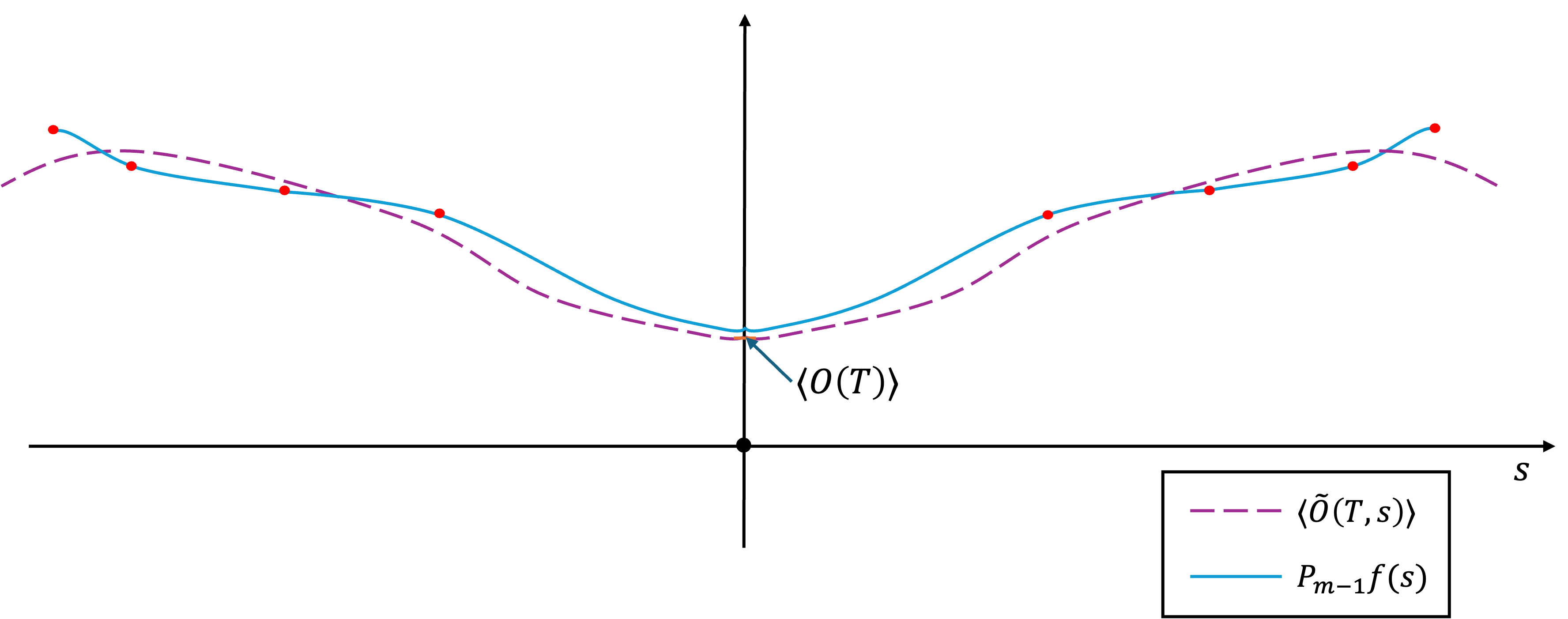}
    \caption{Schematic of the polynomial interpolation procedure. The dotted purple line is the true value of the  Trotter-evolved expectation value, $\langle \tilde{O}(T,s)\rangle$, and the blue line is the interpolating polynomial $P_{m-1} f(s)$ for $m=8$. The red points are estimates of the time-evolved expectation value obtained via product formula evolution with measurement, which are then used to construct the polynomial. The final estimator of the expectation value is given by $P_{m-1} f(0)$.}
    \label{Fig:Polynomial_Interpolation}
\end{figure}

\subsection{Taking the Measurements} \label{sec:measurements}
So far, we have been concerned with the extrapolation schemes and their accuracies.
However, to implement either Richardson extrapolation or polynomial interpolation, we need to be able to take measurements of the time-evolved observable at different value of $s$, and these measurements come with intrinsic errors (regardless of hardware effects). For successful overall error reduction, we will require our measurements to be within some error tolerance of the exact \emph{Trotter}-evolved expectation value, and from robustness guarantees, this allows us to extrapolate to within a small error of the \emph{ideal} evolution. In this work, we consider two schemes for performing the expectation value measurements.
\ \newline 
\ \newline
\noindent \textbf{Method 1 (Incoherent):}
Simply time-evolve the initial state $\rho_0$ under $\cP^{1/s_j}(s_j T)$, then measure the expectation value by repeated measurement of $O$. 
The precision will necessarily be shot noise limited, leading to a $O(1/\epsilon^2)$ cost for precision $\epsilon$. 
However, circuit depths will only be as long as needed to perform a single measurement, and often these can be estimated straightforwardly with low depth circuits (e.g. using classical shadows using randomised single-qubit measurements).
\ \newline
\ \newline 
\noindent \textbf{Method 2 (Coherent):}
Use iterative quantum amplitude amplification to achieve Heisenberg-limited scaling $\tilde{O}(1/\epsilon)$ in the measurement precision.
This is essentially a quadratic improvement over in error scaling Method 1 but involves longer circuit depths. Iterative protocols for amplitude estimation require few auxiliary qubits while still achieving competitive results compared to Fourier-based methods. For our analysis, we consider specifically the iterative Quantum Amplitude Estimation protocol of~\citeauthor{grinko2021iterative}~\cite{grinko2021iterative}.
\ \newline
\ \newline 
A flowchart illustrating the entire extrapolation process is given in \cref{Fig:Flowchart}.
Method 1 optimises for shorter circuit depths at the expense of greater overall resources by using incoherent measurements for observables. As such, this method may be more promising for NISQ devices. Method 2 uses coherent measurements to improve the overall scaling at the expense of longer circuits, and hence may be more useful for fault-tolerant devices. 
\begin{figure}
    \centering
    \includegraphics[width=0.8\textwidth]{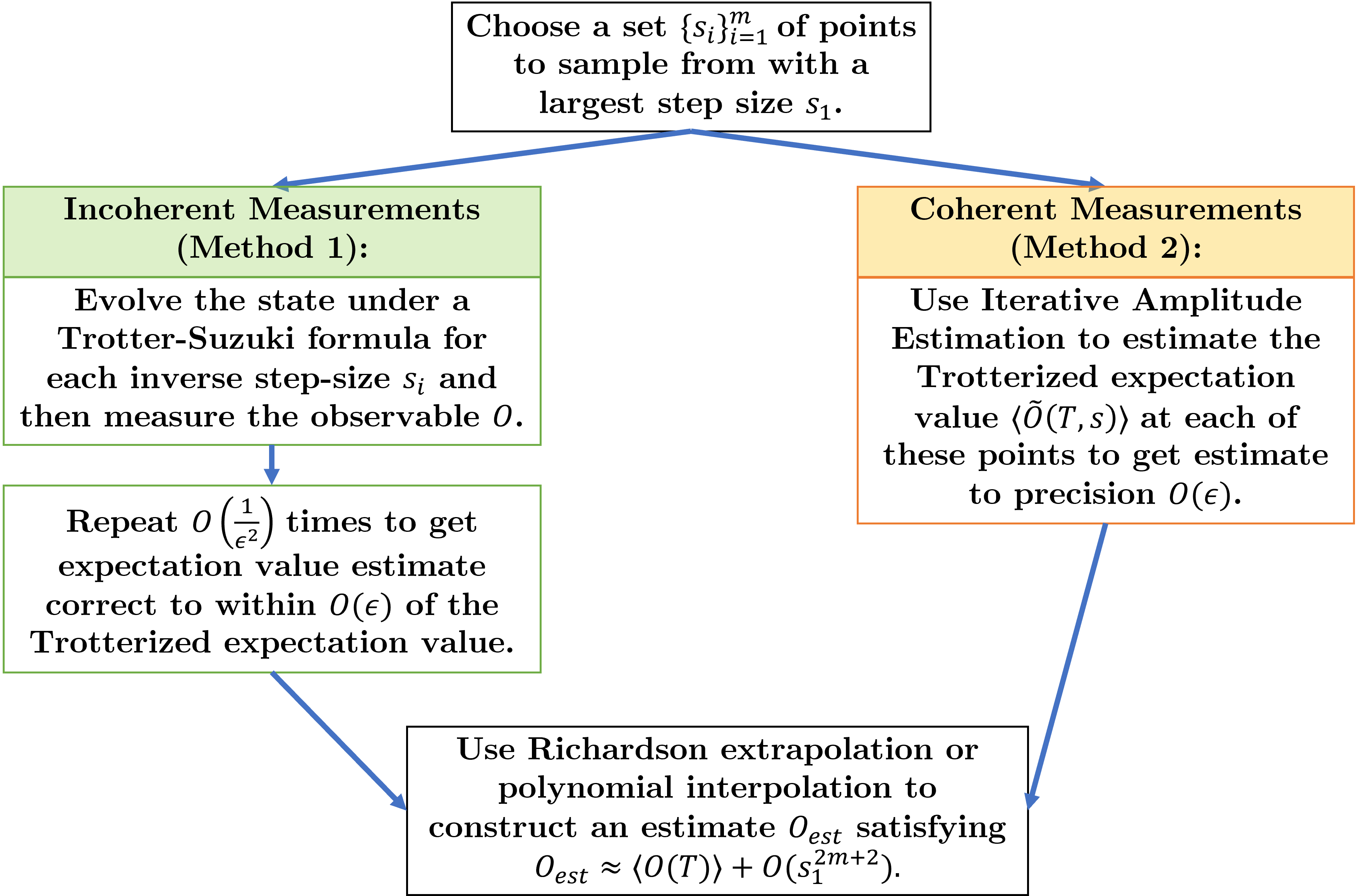}
    \caption{Flowchart indicating the incoherent (Method 1) and coherent (Method 2) schemes for performing extrapolation of the time-evolved expectation values using product formulae, where the specified time and final error are $T$ and $\epsilon$ respectively. For each time-step size, we need to make a measurement of the observable's expectation to precision $\epsilon$. While the incoherent scheme is simpler and requires shorter circuit depths, the coherent scheme achieves a quadratic speedup from using quantum amplitude estimation. Regardless of the measurement protocol, the algorithm concludes with the same classical calculation of the acquired data. }
    \label{Fig:Flowchart}
\end{figure}

\section{Results} \label{sec:results}

Our results are stated for two separate measurement routines for performing the expectation value calculations, as discussed in~\cref{sec:measurements}. \cref{Fig:Flowchart} gives a schematic workflow for the extrapolation protocols using either measurement type. We now informally state the main results of our work; see later sections for more rigorous formulations.

\begin{theorem}[Trotter Extrapolation Resource Counts (Informal)]
\label{Theorem:Informal_Main_Theorem}
     Let $O$ be an observable and $H = \sum_\gamma^\Gamma H_\gamma$ be a time independent Hamiltonian.
    Let $O_{\text{est}}$ be the estimate for a time-evolved expectation value on an arbitrary initial state, $\expval{O(T)}$, produced by varying the Trotter step-size of a $p^\text{th}$-order Trotter-Suzuki formula, taking $m$ measurement samples, and then extrapolating to the zero step-size limit using either Richardson extrapolation or polynomial interpolation.
    Then, for a simulation time $T$, a relative error $\epsilon$ can be achieved, with high probability, such that
    \begin{align*}
        |\expval{O(T)} - O_{\text{est}}|\leq \epsilon \norm{O}
    \end{align*}
    using $m=O(\log(1/\epsilon))$ extrapolation points. 
    Furthermore, the maximum circuit depth and total gate count for each of the estimation subroutines described above scales as given in the following table.
     \begin{table}[H]
        \centering
        \begin{tabular}{c||c|c}
           \diagbox{Method}{Cost}  & \textbf{Max Circuit Depth} & \textbf{Total Gate Cost} \\
           \hline\hline
           \textbf{Incoherent}    & $\tilde{O}\left( (\lambda T)^{1+1/p}\polylog\big( \frac{1}{\epsilon}\big) \right)$  & $\tilde{O}\left( \frac{(\lambda T)^{1+1/p}}{\epsilon^2}\polylog\big( \frac{1}{\epsilon}\big) \right)$ \\
           \hline 
           \textbf{Coherent}& $\tilde{O}\left( \frac{(\lambda T)^{1+1/p}}{\epsilon}\polylog\big( \frac{1 }{\epsilon}\big) \right)$ & $\tilde{O}\left( \frac{(\lambda T)^{1+1/p}}{\epsilon}\polylog\big( \frac{1}{\epsilon}\big) \right)$ \\
        \end{tabular}
        \label{Table:Resource_Costs_Main_Results}
    \end{table}
    
    \noindent Here $\tilde{O}$ hides $\log\log$ factors, and $\lambda \leq 4\sum_\gamma \norm{H_\gamma}$ is a nested commutator with full expression given in \cref{Lemma:Sufficient_Trotter_Depth}.
\end{theorem}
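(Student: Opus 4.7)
The plan is to assemble three ingredients: (i) a controllable power-series expansion of $f(s)\coloneqq \expval{\tilde{O}(T,s)}$ in the inverse step size $s$, whose coefficients exhibit commutator scaling, (ii) an extrapolation scheme whose bias decays like $s^{\Omega(m)}$ while its noise-amplification factor grows only polynomially in $m$, and (iii) a sample-cost analysis for estimating each $f(s_j)$ to the precision demanded by that stability bound, under each of the two measurement protocols.

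First I would make \Cref{Eq:Series_Expansion} quantitative. Starting from the variation-of-parameters formula with $A=i\ad_H$ and $B=i\ad_{E(sT)}$, iterating $K$ times yields an explicit expansion of $\tilde{O}(T,s)$ whose coefficients $\tilde{E}_{j+1,K}$ are sums of nested commutators of the $H_\gamma$ and whose remainder $\tilde{F}_K$ has order $s^{Kp}$. The nontrivial step is to bound these pieces with commutator-scaling norms: for symmetric $S_{2k}$ one should obtain something of the form $\|\tilde E_{j+1,K}(T)(O)\|\lesssim (\lambda T)^{j+1}\|O\|$ with $\lambda$ the nested-commutator quantity from \Cref{Lemma:Sufficient_Trotter_Depth}, together with $\|\tilde F_K(s,T)(O)\|\lesssim (s\lambda T)^{Kp}\|O\|$ up to a mild combinatorial prefactor. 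Choosing $K=O(\log(1/\epsilon))$ then renders the remainder negligible and reduces the problem to extrapolating a polynomial-like function with explicitly bounded derivatives.

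Next I would apply the two extrapolation schemes to this expansion. For Richardson extrapolation, pairing a maximum node $s_0$ with nodes scaled by Chebyshev-like ratios (following the well-conditioned construction of Low--Kliuchnikov--Wiebe cited earlier) gives an estimator $F^{(m)}(s_0)$ with bias $O\bigl((s_0\lambda T)^{2m}(\lambda T)\bigr)$ for symmetric formulae and with $\sum_j |b_j|=\mathrm{poly}(m)$. For polynomial interpolation at Chebyshev nodes in $[-\width,\width]$, the bound in \Cref{Eq:Polynomial_Approx_Error_Intro}, combined with the derivative estimate $|f^{(m)}(\xi)|=O((\lambda T)^m\width^{?})$ read off from the series expansion, yields an error of order $\bigl(\lambda T\width/(2m)\bigr)^m$. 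In both cases, choosing $s_0$ or $\width$ of order $(\lambda T)^{-1/p}/T = (\lambda T)^{-(1+1/p)}\cdot \lambda$ (up to constants) makes the $m$th-order tail decay like $m^{-m}$, so $m=O(\log(1/\epsilon))$ suffices to push the extrapolation error below $\epsilon\|O\|$.

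The maximum circuit depth is then dictated by the largest number of Trotter steps $1/s_0=O\bigl((\lambda T)^{1+1/p}\bigr)$, each step costing $O(\Gamma)$ local exponentials (absorbed into $\tilde O$). It remains to allocate the per-sample measurement budget. Because the estimator is a linear combination of the $f(s_j)$ with $\mathrm{poly}(m)$-bounded coefficients, it suffices to estimate each $f(s_j)$ to precision $\tilde O(\epsilon)$. For Method~1, Hoeffding/Chernoff gives $\tilde O(1/\epsilon^2)$ repetitions per node and per observable; multiplying by $m$ nodes and the depth $1/s_0$ yields the claimed $\tilde O((\lambda T)^{1+1/p}/\epsilon^2)$ total gate count. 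For Method~2, invoking the iterative QAE of Grinko et al.\ attains Heisenberg-limited $\tilde O(1/\epsilon)$ queries to a block-encoding of $\cP^{1/s_j}(s_jT)$, so both depth and total gate count become $\tilde O((\lambda T)^{1+1/p}/\epsilon)$. A final union bound over the $m=O(\log(1/\epsilon))$ samples (and a logarithmic boost of the per-sample confidence) gives the high-probability guarantee.

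The main obstacle I anticipate is the joint approximation/conditioning trade-off: one must exhibit a node placement and an $s_0$-scale for which the bias contracts exponentially in $m$ \emph{and} the noise-amplification (the Lebesgue constant for interpolation, or $\sum_j|b_j|$ for Richardson) grows at most polynomially, since a naive geometric grid blows up superexponentially. This is precisely where the Chebyshev-node construction, together with Cauchy-type derivative bounds derived from analyticity of $f$ in a complex strip of width controlled by the commutator bounds above, must be deployed carefully; everything else reduces to bookkeeping.
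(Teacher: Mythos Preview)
Your plan mirrors the paper's proof almost step for step: iterate variation of parameters to get the series~\eqref{Eq:Series_Expansion} with bounded remainder, feed this into either the well-conditioned Richardson scheme of Low--Kliuchnikov--Wiebe or Chebyshev interpolation, split the final error into extrapolation bias plus data error amplified by $\|b\|_1$ or the Lebesgue constant, and finish with Hoeffding or IQAE for the two measurement protocols.

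There is one concrete point where your intermediate bounds do not support your stated conclusion. You claim $\|\tilde E_{j+1,K}(T)(O)\|\lesssim(\lambda T)^{j+1}\|O\|$, which yields a bias $O\bigl((s_0\lambda T)^{2m}(\lambda T)\bigr)$; but that bias is killed already by $1/s_0\gtrsim\lambda T$, not $(\lambda T)^{1+1/p}$, so your subsequent choice of scale is unmotivated. The paper shows (\Cref{Lemma:Derivative_Bounds}) that in the relevant long-time regime the correct bound is $\|\tilde E_{j+1}(T)\|\lesssim(\lambda T)^{j(1+1/p)}$: the coefficient of $s^j$ receives contributions from up to $\lfloor j/p\rfloor$ iterations of the variation-of-parameters integral, each contributing an extra factor of $T$, so the $T$-power accompanying $s^j$ is $j+\lfloor j/p\rfloor\approx j(1+1/p)$ rather than $j+1$. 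This is precisely the mechanism that forces $1/s_0$ and $1/\width$ to scale as $(\lambda T)^{1+1/p}$, and it is the technical heart of the improvement over the $T^2$ scaling in prior work. Your placeholder ``$\width^{?}$'' in the interpolation derivative bound is the same gap in a different guise. Once you repair this one estimate, the rest of your argument goes through exactly as in the paper (with the minor bonus that $\|b\|_1$ and the Lebesgue constant are actually $O(\log m)$, not merely polynomial).
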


\noindent 
These results demonstrate an improvement in circuit depth over the direct application of product formulae by an exponential in the error scaling, while maintaining the same time scaling
(see  \cref{Table:Resource_Costs_Regular_Trotter} for a direct comparison for the results in  \cref{Theorem:Informal_Main_Theorem}). As we discuss in \cref{sec:Previous_Literature} below, our analysis improves on prior results for both Richardson extrapolation and polynomial interpolation.
The extrapolated techniques are shown to inherit all the of desirable properties of standard product-formulae methods such as locality, commutator scaling, respecting the system's symmetries, etc.
We note that the results have either $1/\epsilon$ or $1/\epsilon^2$ scaling for the coherent or incoherent measurement procedures respectively; this is a consequence of the measurement protocols, not the time evolution. 
Indeed, this scaling would be present even for LCU or quantum signal processing simulation techniques using the incoherent or coherent measurement protocols, and lower bounds can be shown from results in metrology \cite{giovannetti2006quantum}.

 \begin{table}[H]
        \centering
        \begin{tabular}{c||c|c}
        \multicolumn{3}{c}{\textbf{Trotter Performance without Extrapolation}} \\
           \hline \diagbox{Method}{Cost}  & \textbf{Max Circuit Depth} & \textbf{Total Gate Cost} \\
           \hline\hline
           \textbf{Incoherent} \cite{childs2021theory}  & $\tilde{O}\left(  \frac{(\acomm^{(p+1)})^{1/p} T^{1+1/p}}{\epsilon^{1/p}} \right)$  & $\tilde{O}\left( \frac{(\acomm^{(p+1)})^{1/p} T^{1+1/p}}{\epsilon^{2+1/p}} \right)$ \\
           \hline 
           \textbf{Coherent} \cite{childs2021theory} & $\tilde{O}\left(  \frac{(\acomm^{(p+1)})^{1/p} T^{1+1/p}}{\epsilon^{1+1/p}} \right)$ & $\tilde{O}\left(  \frac{(\acomm^{(p+1)})^{1/p} T^{1+1/p}}{\epsilon^{1+1/p}} \right)$ \\
        \end{tabular}
        \caption{Summary of asymptotic scalings for Trotter-based protocols for time-evolved expectation values without extrapolation. Results are obtained using current best scalings for product formulae (see reference). This should be compared to the results in \cref{Theorem:Informal_Main_Theorem}. The quantity $\acomm^{(p+1)} \leq 2^p (\sum_\gamma \norm{H_\gamma})^{p+1}$ is a sum of $p+1$ nested commutators defined in \cref{Eq:acomm_definition}.}
        \label{Table:Resource_Costs_Regular_Trotter}
    \end{table}

Additionally, method 1 (using incoherent measurements) is fully compatible with classical shadow-based techniques. 
That is, if we wish to estimate $M$ different local observables with probability $\geq 1- \delta$, then we get an additional resource cost $O(\log(M/\delta))$.
In~\cref{sec:Symmetries}, we also show how the prefactors in \cref{Theorem:Informal_Main_Theorem} can be improved by taking into account symmetries of the system.

Finally we note that the bounds in \cref{Theorem:Informal_Main_Theorem} are in terms of the number of Trotter steps. 
The total number of elementary exponential operations comes with a multiplicative factor $\Upsilon\Gamma$ which may contain additional information about scaling with system size. A true elementary circuit depth and gate cost would require more specification of a computational model, such as a $k$-local or sparse matrix model.

\subsection{Comparison to Previous Extrapolation Results}
\label{sec:Previous_Literature}

\paragraph{Richardson Extrapolation.} The use of Richardson extrapolation to reduce Trotter error was originally proposed in \cite{endo2019}. Although the authors note that Richardson extrapolation should reduce the error with repeated samples, the work does not provide rigorous resource estimates in terms of the basic simulation parameters.
Subsequent work in \cite{vazquez2023well} employs Richardson extrapolation on actual quantum hardware, observing improvements over bare simulation results. 
However, there appear to be errors in the formal treatment of the algorithmic error, which we discuss in \cref{sec:Errata}.   
As a result, a rigorous asymptotic analysis of the scaling of Richardson extrapolation for quantum simulation does not appear to be present in the literature.
We hope the present work provides a more complete theoretical picture of the Richardson approach.

\paragraph{Polynomial Interpolation.} The polynomial interpolation approach was introduced and rigorously analysed in \cite{rendon2024improved}. There it was shown that, to measure an observable at time $T$, it is sufficient to use a total number of Trotter steps (in the coherent measurement case) scaling as $\tilde{O}\left(\frac{(\chi T)^2}{\epsilon}\log\left(\frac{T}{\epsilon}\right) \right)$, \textit{independent of the order of product formula}, where $\chi = \Gamma \max_\gamma \norm{H_\gamma}$.
Besides scaling as a first-order product formula in $T$, regardless of the order, it additionally does not demonstrate the expected commutator scaling, and instead depends on the Hamiltonian's norm.
Furthermore, the prior work invokes the use of fractional implementations of the product formula, i.e. $\cP^{\delta}$ for $\delta \in (0,1)$.
Although this can be implemented by quantum signal processing methods, this may not be easily achievable on a NISQ device.
If the fractional implementation is not used, then previous analysis leads to a $O\left( \frac{ T^3}{\epsilon}\log\left(\frac{T}{\epsilon}\right) \right)$ scaling on account of the imperfect Chebyshev nodes. 
By contrast, the present work demonstrates improved time scaling, commutator scaling, and avoids the need to implement fractional product formulae.
Recent work in \cite{rendon2023towards} demonstrates how to deal with the fractional queries and achieves a $O(T^{1+1/p})$ scaling in the time parameter, but does not give explicit prefactors or commutator scaling.

\paragraph{Other Approaches using Classical Post-Processing.}

Other approaches at ``dequantising'' Hamiltonian simulation techniques exist, but are not directly related to the present work. For example, \cite{faehrmann2022randomizing} introduce a randomised version of the multiproduct formula presented in \cite{low2019well}, achieving near-optimal scaling. 
However, the method requires auxiliary qubits and control operations.
Further work on the dequantising the multiproduct formula can be seen in  \cite{zhuk2023trotter}, in which they improve the scaling of the multiproduct formula and show it satisfies commutator scaling with a quadratic improvement in the error associated with the size of the time-steps relative to standard $p^{th}$-order product formulae.
Other approaches include constructing non-unitary channels which can be computed using classical post-processing using randomised compiling methods \cite{gong2023improved, nakaji2024qswift}.

\section{Error Analysis}
\label{Sec:error-analysis}

In this section, we derive our primary technical results and develop our most important tool: an explicit series expansion, with respect to Trotter step, of a time evolved expectation value of an observable $O$ evolved under a product formula. The results contained here will be directly applied to Richardson extrapolation and polynomial interpolation in the appropriately labelled subsequent sections. Our approach is directly inspired by the recent work of~\citeauthor{aftab2024multi}~\cite{aftab2024multi}, and much of the logic follows closely. 

We begin with some essential concepts and definitions. A staged product formula $\cP$ is, in the sense of~\cite{childs2021theory}, one which can be expressed as
\begin{align}\label{Eq:TS_Ordering}
    \cP(t) \coloneqq \prod_{\upsilon=1}^\Upsilon \prod_{\gamma=1}^\Gamma e^{-i t a_{(\upsilon,\gamma) H_{\pi_\upsilon (\gamma)}}},
\end{align}
where $\Upsilon \in \mathbb{Z}_+$ is the number of \emph{stages}, $a_{(\upsilon,\gamma)} \in\mathbb{R}$ are coefficients and $\pi_\upsilon \in S_\Gamma$ are permutations. For $p \in\mathbb{Z}_+$, a $p$th order product formula satisfies
\begin{equation}
    \cP(t) - e^{-i H t} = O(t^{p+1})
\end{equation}
in the small $t$ limit. 
For ease of notation, let $[X_1 X_2 \ldots X_n]$ refer to a right-nested $n$-commmutator
\begin{equation}
    [X_1 X_2 \ldots X_n] \coloneqq [X_1,[X_2,[\ldots[X_{n-1},X_n]\ldots]]]
\end{equation}
and define
\begin{equation}\label{Eq:acomm_definition}
    \acomm^{(j)} \coloneqq \sum_{\gamma_1\gamma_2\ldots \gamma_j = 1}^\Gamma \norm{[H_{\gamma_1} H_{\gamma_2} \ldots H_{\gamma_j}]}.
\end{equation}
It will be useful, for our purposes, to employ a BCH formula based on right-nested commutators~\cite{arnal2021note}. We first define
\begin{equation} \label{eq:phi_def}
    \phi_j(Y_1, Y_2, \ldots, Y_j) \coloneqq \frac{1}{j^2} \sum_{\sigma \in S_j} (-1)^{d_\sigma} \binom{j-1}{d_\sigma}^{-1} [Y_{\sigma(1)}\ldots Y_{\sigma(j)}].
\end{equation}
where $S_j$ is the set of permutations over $j$ elements and $d_\sigma$ is the number of \emph{descents} in $\sigma$. An index $i$ is a descent of $\sigma$ if $\sigma(i) > \sigma(i+1)$. In these terms, the BCH formula reads
\begin{equation}
    \prod_{i=1}^n e^{X_i} = e^Z
\end{equation}
where 
\begin{equation} \label{eq:Z_BCH}
    Z = \sum_{i=1}^n X_i + \sum_{j=2}^\infty \frac{1}{j!} \sum_{\mathcal{J}} \binom{j}{j_1 \ldots j_n} \phi_j\left(X_1^{\times j_1}, \ldots, X_n^{\times j_n}\right).
\end{equation}
Here, the $\mathcal{J}$ sums appropriately over the multinomial
\begin{equation}
    \mathcal{J} \coloneqq \{(j_1,\ldots,j_n)\in \mathbb{N}^n \mid \sum_{j=1}^n j_i = j\}
\end{equation}
and $X^{\times i}$ refers to $i$ copies of $X$ in the argument. The expression~\cref{eq:Z_BCH} is formal and may not converge. Convergence is guaranteed provided that the nested commutators do not grow too rapidly. 

    

We begin with a lemma concerning an error series in the effective Hamiltonian $\Heff(t)$, which approximates $H$ for small $t$. It follows essentially from \cite[Theorem 9]{childs2021theory}.

\subsection{Error Expansion of the Trotterized Operator}
\label{Sec:Error_Expansion}

\begin{lemma}[Effective Hamiltonian Error Series]\label{Lemma:Effective_Hamiltonian_Error_Series}
    Let $\cP$ be a staged product formula with coefficients $a_{(\upsilon,\gamma)}$ of order $p \in \mathbb{Z}_+$, and let $\Heff$ be the effective Hamiltonian of $\cP$ defined by the relation
    \begin{align*}
        \cP(t) = e^{-it\Heff(t)}.
    \end{align*}
    for $t\in\mathbb{R}$. Suppose that there exists a $J\in \mathbb{Z}_+$ and $C\in\mathbb{R}_+$ such that
    \begin{align*}
        \sup_{j\geq J} \acomm^{(j)}\left(a_\mathrm{max}\Upsilon \abs{t}\right)^j\leq C,
    \end{align*}
    with $a_\mathrm{max} \coloneqq \max_{\upsilon,\gamma} \abs{a_{(\upsilon,\gamma)}}$.
    Then the effective Hamiltonian can be written as
    \begin{align*}
        \Heff(t) = H + \sum_{j = 1}^\infty E_{j+1} t^j 
    \end{align*}
    where 
    \begin{equation*}
        E_j \coloneqq \frac{(-i)^{j-1}}{j!} \sum_{\mathcal{J}} \binom{j}{j_1\ldots j_n}\left(\prod_{i=1}^n a_i^{j_i}\right) \phi_j\left(H_{\gamma_1}^{\times j_1},\ldots,X_{\gamma_n}^{\times j_n}\right)
    \end{equation*}
    and $n = \Upsilon \Gamma$. Moreover, $E_j$ satisfies the bound
    \begin{align*}
        \norm{E_j} \leq \frac{(a_\mathrm{max}\Upsilon)^j}{j^2}\acomm^{(j)}.
    \end{align*}
    .
\end{lemma}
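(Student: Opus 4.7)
My plan is to derive the series expansion of $\Heff$ directly from the right-nested BCH formula stated in Eq.~\eqref{eq:Z_BCH}, and then obtain the norm bound on each coefficient via a combinatorial reindexing of the multinomial sum.

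First, I would linearise the product by setting $n = \Upsilon\Gamma$ and $X_i = -it\,a_i H_{\gamma_i}$, where the single index $i$ enumerates the $(\upsilon,\gamma)$ pairs in the order specified by Eq.~\eqref{Eq:TS_Ordering}. Applying the BCH formula yields $\cP(t) = e^Z$ with $Z$ a sum of nested commutators of the $X_i$. Because $\phi_j$ defined in Eq.~\eqref{eq:phi_def} is a linear combination of right-nested $j$-commutators and is therefore $j$-multilinear, the factor $(-it)^j \prod_i a_i^{j_i}$ extracts cleanly from each summand. The order-1 condition $\sum_i a_i H_{\gamma_i} = H$ (which follows from $p\geq 1$) handles the linear piece. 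Setting $\Heff(t) \coloneqq Z/(-it)$, this gives precisely the stated form $\Heff(t) = H + \sum_{j\geq 1} E_{j+1}\, t^j$ with the claimed expression for $E_j$.

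For the norm estimate, I would apply the triangle inequality to $E_j$. The key observation is that $\phi_j$ is symmetric under permutation of its arguments, so the multinomial sum $\sum_\mathcal{J}\binom{j}{j_1\ldots j_n}\phi_j(H_{\gamma_1}^{\times j_1},\ldots,H_{\gamma_n}^{\times j_n})$ collapses to the unstructured sum $\sum_{\vec i\in[n]^j}\phi_j(H_{\gamma_{i_1}},\ldots,H_{\gamma_{i_j}})$. Expanding $\phi_j$ via Eq.~\eqref{eq:phi_def}, for each fixed $\sigma\in S_j$ I would reindex $\vec i\mapsto \vec i\circ\sigma^{-1}$ to remove the $\sigma$-dependence from the commutator arguments; the product $\prod_k |a_{i_k}|$ is invariant under this reindexing and is bounded uniformly by $a_\mathrm{max}^j$. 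The remaining sum over $[n]^j$ folds to $\Upsilon^j\acomm^{(j)}$ because each permutation $\pi_\upsilon$ is a bijection on $[\Gamma]$, so each tuple in $[\Gamma]^j$ is hit by $(\gamma_{i_1},\ldots,\gamma_{i_j})$ exactly $\Upsilon^j$ times. The crude bound $\sum_{\sigma\in S_j}\binom{j-1}{d_\sigma}^{-1}\leq j!$ then cancels the $1/j!$ factor from the BCH coefficient, leaving $\norm{E_j}\leq (a_\mathrm{max}\Upsilon)^j\acomm^{(j)}/j^2$.

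Absolute convergence of the series follows at once from the hypothesis: for $j\geq J$ we have $\norm{E_j\, t^{j-1}}\leq C/(j^2|t|)$, so the tail is dominated by $\sum 1/j^2$, and the first $J-1$ terms are finite sums of bounded operators. The main obstacle is the combinatorial bookkeeping in the norm bound --- exploiting the symmetry of $\phi_j$ to turn the multinomial sum over $\mathcal{J}$ into an unstructured sum over $[n]^j$, then using the bijectivity of the $\pi_\upsilon$ to extract the $\Upsilon^j\acomm^{(j)}$ factor. The BCH step itself is largely notational, and the convergence check is a standard comparison test.
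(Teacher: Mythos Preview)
Your BCH derivation of the series $\Heff(t)=H+\sum_{j\geq 1}E_{j+1}t^j$ and the convergence argument via comparison with $\sum 1/j^2$ are correct and essentially identical to the paper. The gap is in the norm bound: your claim that $\phi_j$ is symmetric under permutation of its arguments is false. Already for $j=2$ one has $\phi_2(Y_1,Y_2)=\tfrac12[Y_1,Y_2]=-\phi_2(Y_2,Y_1)$. As a consequence, the multinomial sum does \emph{not} collapse to the unstructured sum: for $j=2$ and $n=2$ the multinomial form gives $\binom{2}{1,1}\phi_2(X_1,X_2)=[X_1,X_2]$ (the correct BCH term), whereas $\sum_{\vec i\in[2]^2}\phi_2(X_{i_1},X_{i_2})=0$. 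Your subsequent reindexing $\vec i\mapsto\vec i\circ\sigma^{-1}$ is only a bijection on the unstructured domain $[n]^j$, so the whole chain leading to the $(a_\mathrm{max}\Upsilon)^j\acomm^{(j)}/j^2$ bound breaks.

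The paper avoids this by taking the triangle inequality \emph{before} any reindexing, which discards the signs $(-1)^{d_\sigma}$ and leaves a positive double sum over $(\vec j,\sigma)\in\mathcal J\times S_j$. It then reindexes that double sum by the free tuple $\vec i'=(i_{\sigma(1)},\ldots,i_{\sigma(j)})\in[n]^j$: each $\vec i'$ with count vector $\vec j$ is hit by exactly $j_1!\cdots j_n!$ pairs, and the crude estimate $\binom{j-1}{d_\sigma}^{-1}\leq 1$ over that fiber makes the multinomial coefficient cancel against the $1/j!$. Your instinct to reduce to a free $\vec i$-sum and then strip the $\sigma$-dependence is the right one, but it has to be executed on the $(\vec j,\sigma)$-sum via this fiber counting, not by an (incorrect) appeal to symmetry of $\phi_j$.
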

\begin{proof}
    Consider $\cP$ written as
    \begin{equation}
        \cP(t) = \prod_{i=1}^n e^{-i t a_i H_i}
    \end{equation}
    with $H_i \equiv H_{\gamma_i}$ notated as such for simplicity, and because we aren't concerned with the possibility that $H_i = H_j$ may frequently occur. By definition of $\Heff$ we have
    \begin{equation}
        \prod_{i=1}^n e^{-i t a_i H_i} = e^{-i \Heff t}
    \end{equation}
    and may therefore express $\Heff$ as a formal BCH expansion. Using~\cref{eq:Z_BCH} with $Z = -i t \Heff (t)$ and $X_i = -i t a_i H_i$,
    \begin{equation}
        -i t \Heff(t) = \sum_{i=1}^n -i t a_i H_i + \sum_{j=2}^\infty \frac{1}{j!} \sum_{\mathcal{J}} \binom{j}{j_1\ldots j_n}\phi_j\left((-ita_1 H_1)^{\times j_1},\ldots, (-it a_n H_n^{\times j_n})\right).
    \end{equation}
    Since $\cP$ is at least $1$st order, we have $\sum_i a_i H_{\gamma_i} = H$. Using the multilinearity of $\phi_j$,
    \begin{align}
    \begin{aligned}
        \Heff(t) &= H + \sum_{j=1}^\infty \frac{(-i t)^j}{(j+1)!} \sum_{\mathcal{J}} \binom{j+1}{j_1\ldots j_n}\left(\prod_{i=1}^n a_i^{j_i}\right) \phi_{j+1}\left(H_1^{\times j_1},\ldots,H_n^{\times j_n}\right) \\
        &= H + \sum_{j=1}^\infty t^j E_{j+1}
    \end{aligned}
    \end{align}
    where we've defined the Hermitian error operators
    \begin{equation}
        E_j \coloneqq \frac{(-i)^{j-1}}{j!}\sum_{\mathcal{J}} \binom{j}{j_1\ldots j_n} \left(\prod_{i=1}^n a_i^{j_i}\right)\phi_j\left(H_1^{\times j_1},\ldots,H_n^{\times j_n}\right).
    \end{equation}
    Applying the triangle inequality and using the definition of $\phi_j$ in~\cref{eq:phi_def} 
    \begin{equation}
        \norm{E_j} \leq \frac{1}{j!} \sum_{\mathcal{J}} \binom{j}{j_1\ldots j_n} \left(\prod_{i=1}^n \abs{a_i}^{j_i}\right) \frac{1}{j^2} \sum_{\sigma \in S_j} \binom{j-1}{d_\sigma}^{-1} \norm{[H_{\sigma(i_1)}\ldots H_{\sigma(i_j)}]},
    \end{equation}
    where 
    \begin{equation}
        H_{i_1}\ldots H_{i_j} = H_1^{\times j_1}\ldots H_n^{\times j_n}
    \end{equation}
    and $i_k \in \{1,\ldots,n\}$. We wish to reindex this sum to be over tuples $(i_1, \ldots, i_j)$ with $i_k \in \{1,\ldots n\}$ varying freely, and accomplish this by rehashing arguments used surrounding~\cite[Eq. (44)]{aftab2024multi}. For a given $[H_{i_1}\ldots H_{i_j}]$, there exists a unique $(j_1,\ldots,j_n)\in \mathcal{J}$ specifying the terms in that commutator, as the $j_k$ indices give the number of each $H_k$ present in the commutator, and this is a function of a full $(i_1, \ldots, i_j)$ specification. Thus, we may write
    \begin{equation}
        \norm{E_j} \leq \frac{1}{j! j^2} \sum_{i_1\ldots i_j = 1}^n \norm{[H_{i_1}\ldots H_{i_j}]} \binom{j}{\vec{\jmath}} \left(\prod_{k=1}^j \abs{a_{i_k}} \right) \sum_{\sigma \in S(i_1,\ldots, i_j)} \binom{j-1}{d_\sigma}^{-1}
    \end{equation}
    where $\vec{\jmath} = (j_1,\ldots j_n)$ is the vector of counts determined by the $i_k$, and $S(i_1,\ldots, i_j) \subseteq S_j$ refers to the permutations which leave the sequence $(i_1,\ldots,i_j)$ invariant. There are exactly $j_1!j_2!\ldots j_n!$ of these. Thus,
    \begin{equation}
        \sum_{\sigma \in S(i_1,\ldots, i_j)} \binom{j-1}{d_\sigma}^{-1} \leq \sum_{\sigma \in S(i_1,\ldots, i_j)} 1 = j_1! j_2!\ldots j_n! 
    \end{equation}
    and therefore
    \begin{align}
    \begin{aligned}
        \norm{E_j} &\leq \frac{1}{j^2} \sum_{i_1,\dots i_j = 1}^n \norm{[H_{i_1}\ldots H_{i_j}]} \prod_{k=1}^j \abs{a_{i_k}} \\
        &\leq \frac{a_\mathrm{max}^j}{j^2} \sum_{i_1,\dots i_j = 1}^n \norm{[H_{i_1}\ldots H_{i_j}]} 
    \end{aligned}
    \end{align}
    where $a_\mathrm{max} \coloneqq \max_i a_i$. We now remember that each $H_{i_n}$ runs $\Upsilon$ times over each term $H_\gamma$ in $H$. Thus, each sequence $H_{\gamma_1} H_{\gamma_2}\ldots H_{\gamma_j}$ is represented $\Upsilon^j$ times in the multi-index $(i_1,\ldots,i_j)$. Hence,
    \begin{equation} 
        \sum_{i_1,\dots i_j = 1}^n \norm{[H_{i_1}\ldots H_{i_j}]} = \Upsilon^j \sum_{\gamma_1,\ldots,\gamma_j = 1}^\Gamma  \norm{[H_{\gamma_1}\ldots H_{\gamma_j}]} = \Upsilon^j \acomm^{(j)}.
    \end{equation}
    This gives our upper bound on $E_j$ from the lemma statement. 

    To summarise, we have that $t \Heff(t) = tH + \sum_{j=1}^\infty E_{j+1} t^{j+1}$ provided that the series converges. A sufficient condition is absolute convergence, namely the convergence of 
    \begin{equation}
        \sum_{j=2}^\infty \abs{t}^j \norm{E_j}.
    \end{equation}
    We note that $\sum_j C/j^2$ is a convergent series for all $C \in \mathbb{R}_+$, and thus, by the squeeze theorem, it suffices that there exists an $J \in \mathbb{Z}_+$ such that for all $j\geq J$, $t^j \norm{E_j} \leq C/j^2$. Using our bound on $\norm{E_j}$, this is satisfied provided
    \begin{equation}
        \left(a_\mathrm{max} \Upsilon t\right)^j \acomm^{(j)} \leq C
    \end{equation}
    for such $j$. This is equivalent to the condition provided in the lemma.
\end{proof}

Observe that less stringent conditions for convergence could be derived, e.g., by bounding the series with a $1/j^{1 + \epsilon}$ decay for any $\epsilon > 0 $ instead of $1/j^2$, and indeed the condition under which such series converge is well studied \cite{lakos2017convergence, lakos2019convergence}. 
However, our condition here is simple enough and not too stringent. Using the bound
\begin{equation} \label{eq:acomm_bound}
    \acomm^{(j)} \leq \frac{1}{2} \left(2 \sum_\gamma \norm{H_\gamma}\right)^j
\end{equation}
we may obtain the simpler, sufficient condition
\begin{equation}
    2 a_\mathrm{max}\Upsilon \abs{t} \sum_{\gamma = 1}^\Gamma \norm{H_\gamma} < 1
\end{equation}
which shows that, for any $H$, there exists an open neighbourhood about $t = 0$ for which the BCH series converges. For the case of Trotter-Suzuki formulae $\mathcal{P} = S_{2k}$, we have~\cite[Appendix A]{wiebe2010higher} $a_{\max} \leq 2k/3^k$ and $\Upsilon = 2\times 5^{k-1}$.

A \emph{symmetric} product formula is one for which $\cP(-t) = \cP^{-1}(t)$. For $p$th order formulas, the lowest $E_j$ are zero, and for symmetric formulas, the error series for $\Heff$ is even, as is captured in the following lemma.
\begin{lemma} \label{Lemma:Zero_Error_Terms}
    Let $\cP(t)$ be a $p$th order staged product formula, and suppose the BCH convergence condition of~\cref{Lemma:Effective_Hamiltonian_Error_Series} holds. Then the error operators $E_{j+1}$ from~\cref{Lemma:Effective_Hamiltonian_Error_Series} are zero for all $j < p$. Moreover, for symmetric $\cP$, $E_{j+1} = 0$ for all odd $j$.
\end{lemma}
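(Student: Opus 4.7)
The plan is to handle the two claims separately: the first by induction on the lowest non-vanishing order, matching against the definition of a $p$th-order formula; the second by exploiting time-reversal symmetry to argue that $\Heff$ must be an even function of $t$. Throughout, I will rely on the explicit expansion $\Heff(t) = H + \sum_{j\geq 1} E_{j+1} t^j$ supplied by \cref{Lemma:Effective_Hamiltonian_Error_Series} and on the assumed BCH convergence, which lets me treat the series as an analytic function of $t$ near the origin.

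For the first claim, I will argue by strong induction: assume $E_2 = \ldots = E_j = 0$ for some $j < p$, and deduce $E_{j+1} = 0$. Under the induction hypothesis, $\Heff(t) = H + E_{j+1} t^{j} + O(t^{j+1})$, so
\begin{equation*}
\cP(t) = \exp\!\bigl(-iHt - iE_{j+1} t^{j+1} + O(t^{j+2})\bigr).
\end{equation*}
Applying the variation of parameters formula~\eqref{eq:var_of_param_general} with $A = -iHt$ and $B$ equal to the remaining (small) piece of the exponent, I expect to obtain
\begin{equation*}
\cP(t) - e^{-iHt} = -iE_{j+1}\, t^{j+1} + O(t^{j+2}),
\end{equation*}
because the leading contribution of the integral reduces to $B$ itself as $t\to 0$, and all further nested pieces are at least $O(t^{j+2})$. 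Since $\cP$ is $p$th order and $j+1 \leq p$, the left-hand side is $O(t^{p+1}) \subseteq O(t^{j+2})$, forcing $E_{j+1} = 0$. The main subtlety here is the non-commutativity of $H$ and $E_{j+1}$: one has to be careful that no lower-order ``cross terms'' from expanding the exponential contaminate the coefficient of $t^{j+1}$. This is exactly what variation of parameters (or equivalently a careful degree count on powers of $t$ in the Taylor expansion) is designed to handle, so I expect this to be the one place that requires care but not difficulty.

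For the symmetric case, I would first observe that each $E_j$ is Hermitian, since the $(-i)^{j-1}$ prefactor in its definition exactly compensates for the Hermiticity/anti-Hermiticity alternation of nested commutators of the Hermitian $H_\gamma$. Hence $\cP(t) = e^{-it\Heff(t)}$ is unitary and $\cP(t)^{-1} = e^{it\Heff(t)}$. On the other hand, $\cP(-t) = e^{it\Heff(-t)}$ directly from the relation $\cP = e^{-it\Heff(t)}$ evaluated at $-t$. The symmetry $\cP(-t) = \cP(t)^{-1}$ therefore gives $e^{it\Heff(-t)} = e^{it\Heff(t)}$, and uniqueness of the matrix logarithm in a neighbourhood of the identity (valid for sufficiently small $t$ by the BCH convergence assumption) yields $\Heff(-t) = \Heff(t)$. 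Matching the power series $H + \sum_{j\geq 1} E_{j+1} t^j$ to its $t \mapsto -t$ counterpart forces $E_{j+1} = 0$ for every odd $j$, completing the proof.
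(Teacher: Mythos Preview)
Your proposal is correct on both counts. For the symmetric case your argument is essentially the paper's: both use $\cP(-t)=\cP(t)^{-1}$ together with unitarity of $\cP(t)$ to deduce $\Heff(-t)=\Heff(t)$ and hence the vanishing of odd coefficients. (A minor simplification: rather than verifying Hermiticity of each $E_j$ from its explicit nested-commutator form, you can simply observe that $\cP(t)$ is a finite product of unitaries and hence unitary, so $\Heff(t)=\tfrac{i}{t}\log\cP(t)$ is automatically Hermitian near $t=0$.)

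For the first claim your route genuinely differs from the paper's. The paper argues directly: writing $\Heff(t)-H=-\tfrac{1}{it}\bigl(\log\cP(t)-\log e^{-iHt}\bigr)$, it expands the logarithm as a power series in $\cP(t)-\1$ and $e^{-iHt}-\1$, then uses $\cP(t)-e^{-iHt}=O(t^{p+1})$ to factor $t^{p+1}$ out of every term, yielding $\Heff(t)-H=O(t^p)$ in one shot. Your approach instead peels off one coefficient at a time by induction, using variation of parameters to isolate the lowest surviving $E_{j+1}$ as the coefficient of $t^{j+1}$ in $\cP(t)-e^{-iHt}$ and then killing it with the $p$th-order hypothesis. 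Both are valid; the paper's logarithm argument is slightly slicker and avoids the inductive bookkeeping, while yours has the advantage of being more self-contained (it does not require separately justifying analyticity of the matrix logarithm series). One small point of care in your write-up: the formula~\eqref{eq:var_of_param_general} as stated has $A,B$ independent of $t$, so when you set $A=-iHt$ and $B=-iE_{j+1}t^{j+1}+O(t^{j+2})$ you are really invoking the $t=1$ instance $e^{A+B}=e^A+\int_0^1 e^{(1-s)A}Be^{s(A+B)}\,ds$ with $t$-dependent matrices substituted afterwards---worth making explicit.
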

\begin{proof}
    We will show that $\Heff(t) = H + O(t^p)$, which directly implies the first claim of the lemma. The Hamiltonians $\Heff(t)$ and $H$ may be defined through the logarithm
    \begin{equation}
        \Heff(t) = -\frac{1}{it} \log \cP (t), \qquad H = -\frac{1}{it} \log U(t)
    \end{equation}
    and, in a neighborhood of $t = 0$, the logarithm may be expanded in a power series.
    \begin{align}
    \begin{aligned}
        \Heff(t) - H &= -\frac{1}{it} \sum_{j=0}^\infty \frac{(-1)^j}{j+1} \left[(\cP(t) - \mathbbm{1})^{j+1} - (U(t) - \mathbbm{1})^{j+1}\right] \\
        &= \frac{1}{it}\sum_{j=1}^\infty \frac{(-1)^j}{j}\left[\sum_{k=0}^j (-1)^{j-k}(\cP^k(t) - U^k(t))\right]
    \end{aligned}
    \end{align}
    Since $\cP(t) - U(t) = O(t^{p+1})$ we have that $\cP(t) - U(t) = t^{p+1} E(t)$ for some analytic, operator-valued function $E$. In fact, this implies
    \begin{equation}
        \cP^k(t) - U^k(t) = t^{p+1} E^{(k)}(t)
    \end{equation}
    for some analytic $E^{(k)}$. Factoring out the $t^{p+1}$ from the series, we find that
    \begin{equation}
        \Heff(t) - H = t^p \tilde{E}(t)
    \end{equation}
    where, again, $\tilde{E}$ is some analytic operator-valued function. This shows that $\Heff(t) - H = O(t^p)$ and thus, provided the BCH series exists, all $E_j = 0$ for $j < p$. 
    
    For symmetric product formulas, the condition $\cP(-t) = \cP^\dagger(t)$ ($\cP$ is unitary) implies for the effective Hamiltonian that
    \begin{equation}
        (-i t \Heff(t))^\dagger = -i (-t) \Heff(-t)
    \end{equation}
    which further implies $\Heff(t)$ is an even function. Thus, the BCH error series is even provided it exists.
\end{proof}
\noindent As an aside, it is possible that the order conditions from~\cite[Theorem 9]{childs2021theory}, could be adapted for our purposes. However, they characterise their exponential error using a time-ordered exponential, which is suitable for their objectives but less so for ours.

We have thus characterised the error terms in $\Heff$ adequately for our purposes. Moving towards our primary interest, dynamical evolution of observables, we now wish to construct an error series for an observable evolved under a product formula. The following lemma provides what we need, and mimics the ideas of~\cite{aftab2024multi}.
\begin{lemma}\label{Lemma:Exact_Error_Form}
    Let $\cP$ be a staged $p$th order product formula and let $O$ be an observable. For any $s\in\mathbb{R}$, let
    \begin{equation*}
    \tilde{O}(T,s) \coloneqq \cP^{1/s\dagger}(sT) O \cP^{1/s}(sT)
    \end{equation*}
    be the approximate evolution of $O$ for duration $T\in\mathbb{R}$ and Trotter step size $sT$, with $s = 0$ defined via the limit. Suppose that there exists a $J\in \mathbb{Z}_+$ and $C\in\mathbb{R}_+$ such that
    \begin{equation*}
        \sup_{j\geq J} \acomm^{(j)}\left(a_\mathrm{max}\Upsilon \abs{sT}\right)^j\leq C,
    \end{equation*}
    with $a_\mathrm{max} \coloneqq \max_{\upsilon,\gamma} \abs{a_{(\upsilon,\gamma)}}$. 
    Let $\sym = 2$ if $\cP$ is symmetric, $1$ otherwise. 
    Then for any $K \in \mathbb{Z}_+$, the approximation error in $\tilde{O}(T,s)$ compared with the exact evolution $ O(T) \coloneqq e^{i H T} O e^{-i H T}$
    may be expressed as
    \begin{equation*} \label{Eq:Approx_Time_Evolution_High_Order}
        \tilde{O}(T,s) - O(T) = \sum_{j\in \sym\mathbb{Z}_+\geq p} s^j \tilde{E}_{j+1,K}(T)(O) + \tilde{F}_K(T,s)(O).
    \end{equation*}
    Here, $\tilde{E}_{j+1,K}(T)$ and $\tilde{F}_K(T,s)$ are superoperators whose induced spectral norm $\norm{\cdot}$ is bounded as 
    \begin{align*}
        \norm{\tilde{E}_{j+1,K}(T)} &\leq   (a_\mathrm{max}\Upsilon T)^j \sum_{l=1}^{\min\{K-1, \lfloor j/p \rfloor  \}}  \frac{(a_\mathrm{max}\Upsilon T)^l}{l!}\sum_{\substack{j_1\dots j_l\in \sym\mathbb{Z}_+\geq p \\ j_1+\dots+j_l=j}}  \left(\prod_{\kappa=1}^l 2 \frac{\acomm^{(j_\kappa + 1)}}{(j_\kappa + 1)^2}\right)\\
        \norm{\tilde{F}_K(T,s)} &\leq \frac{(a_\mathrm{max} \Upsilon T)^K}{K!} \sum_{j\in \sym\mathbb{Z}_+\geq Kp} (a_\mathrm{max} \Upsilon sT)^j \left( \sum_{\substack{j_1\dots j_K\in \sym\mathbb{Z}_+\geq p \\ j_1+\dots+j_K=j }} \left(\prod_{\kappa=1}^K 2 \frac{\acomm^{(j_\kappa+1)}}{(j_\kappa+1)^2}\right) \right).
    \end{align*}
\end{lemma}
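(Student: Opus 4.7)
The plan is to expand $\tilde O(T,s)$ about the exact evolution $O(T)$ by applying the variation of parameters formula \cref{eq:var_of_param_general} $K$ times and then substituting the BCH series of \cref{Lemma:Effective_Hamiltonian_Error_Series} at each level. Writing the generator of the approximate Heisenberg evolution as $i\ad_{\Heff(sT)} = i\ad_H + i\ad_{E(sT)}$, I treat $B = i\ad_{E(sT)}$ as the perturbation and $A = i\ad_H$ as the unperturbed generator.

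First I would record the one-step identity
\begin{equation*}
    \tilde O(T,s) = O(T) + i\int_0^T d\tau\, e^{i(T-\tau)\ad_H}\,\ad_{E(sT)}\,\tilde O(\tau,s),
\end{equation*}
and then recurse $K-1$ more times on the factor $\tilde O(\tau,s)$ inside the integrand. This produces a sum over $l = 1,\ldots,K-1$ of $l$-fold simplex integrals (with volume $T^l/l!$) of products of free-Hamiltonian evolutions $e^{i(\tau_{k-1}-\tau_k)\ad_H}$ interleaved with $\ad_{E(sT)}$ factors and acting on $O(\tau_l)$, plus a final $K$-fold remainder that still contains $\tilde O(\tau_K,s)$ at the innermost position.

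Next I would substitute the series $E(sT) = \sum_{j\in\sym\mathbb{Z}_+\geq p}(sT)^j E_{j+1}$ (from \cref{Lemma:Effective_Hamiltonian_Error_Series} combined with \cref{Lemma:Zero_Error_Terms}) into every occurrence of $\ad_{E(sT)}$. By multilinearity of $\ad$, each $l$-fold integral becomes a sum over multi-indices $(j_1,\ldots,j_l)$ with $j_k\in\sym\mathbb{Z}_+\geq p$, weighted by $s^{j_1+\cdots+j_l}T^{j_1+\cdots+j_l}$ and by an integral of free evolutions sandwiching the $\ad_{E_{j_k+1}}$. Collecting contributions at fixed $j=\sum_k j_k$ over $1\leq l\leq \min\{K-1,\lfloor j/p\rfloor\}$ (the upper bound because each $j_k\geq p$) defines the operators $\tilde E_{j+1,K}(T)$; the $K$-fold remainder supplies $\tilde F_K(T,s)$ and, since each of its $K$ insertions contributes at least $(sT)^p$, only contains orders $j\geq Kp$.

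Finally, the induced spectral norms are bounded by combining (i) unitarity of $e^{i(\cdot)\ad_H}$ and $e^{iT\ad_{\Heff}}$ as superoperators on observables, so all free evolutions disappear in norm and $\tilde O(\tau_K,s)$ contributes at most $\norm{O}$ inside the remainder; (ii) $\norm{\ad_X}\leq 2\norm{X}$; (iii) the simplex volumes $T^l/l!$; and (iv) the single-coefficient bound $\norm{E_{j_k+1}} \leq (a_\mathrm{max}\Upsilon)^{j_k+1}\acomm^{(j_k+1)}/(j_k+1)^2$ from \cref{Lemma:Effective_Hamiltonian_Error_Series}. Reshuffling these factors produces the $(a_\mathrm{max}\Upsilon T)^{j+l}/l!$ pattern together with the nested-commutator product $\prod_\kappa 2\acomm^{(j_\kappa+1)}/(j_\kappa+1)^2$, and summing over multi-indices with $\sum_k j_k = j$ gives exactly the claimed bounds. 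The main obstacle is the combinatorial bookkeeping: correctly tracking the power of $s$ in the remainder so that $\tilde F_K$ begins at order $s^{Kp}$, confirming that no contribution is double counted between $\tilde E$ and $\tilde F$, and justifying the term-by-term series manipulations using the BCH convergence hypothesis to dominate all partial sums.
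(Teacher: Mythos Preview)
Your proposal is correct and follows essentially the same route as the paper: iterate the variation-of-parameters identity $K$ times with $A=i\ad_H$, $B=i\ad_{E(sT)}$, substitute the BCH series from \cref{Lemma:Effective_Hamiltonian_Error_Series} and \cref{Lemma:Zero_Error_Terms}, regroup by total power of $s$, and bound norms using unitarity of the Heisenberg evolutions, $\norm{\ad_X}\leq 2\norm{X}$, the simplex volumes $T^l/l!$, and the single-coefficient bound on $\norm{E_{j+1}}$. The paper also performs the regrouping after a change of variables $s_i=\tau_i/T$, but this is cosmetic; your bookkeeping of the $\min\{K-1,\lfloor j/p\rfloor\}$ cutoff and the $s^{Kp}$ order of the remainder matches the paper's argument.
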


\begin{proof}
    By definition, we have that 
    \begin{equation}
        \cP^{1/s}(sT) = e^{-i T \Heff(t)},
    \end{equation}
    where $t := sT$ is the Trotter step size. By~\cref{Lemma:Effective_Hamiltonian_Error_Series}, $\Heff$ is expandable as a BCH series, and we write $\Heff(t) = H + E(t)$, where
    \begin{equation}
        E(t) = \sum_{j\in \sym\mathbb{Z}_+ \geq p} E_{j+1} t^j.
    \end{equation}
    In what follows, the independent parameters are $t$ and $T$, and any $t$-dependence is left implicit. Our analysis is based on the variation of parameters formula applied to the Hamiltonian evolution of observables. For observable $O$, and Hermitian $H$ and $E$, the evolution equations for $\tilde{O}(T) := e^{i(H+E)T}Oe^{-i(H+E)T}$ and $O(T) := e^{iHT}Oe^{-iHT}$ are given by
    \begin{align}
       \partial_T \tilde{O}(T) = i[H+E,\tilde{O}(T)], \qquad
       \partial_T O(T) = i[H,O(T)].
    \end{align}
    For ease of notation, we will write this in a super-operator formalism
    \begin{align}
        \tilde{O}(T,s) = e^{i T\ad_{H+E} }O,\qquad O(T) = e^{iT\ad_{H} }O
    \end{align}
    where $\ad_X(\cdot) := [X,\cdot]$ is (anti) Hermitian with respect to the Hilbert-Schmidt inner product when $X$ is (anti) Hermitian with respect to the standard inner product. In this context, the variation of parameters formula~(\ref{eq:var_of_param_general}) gives
    \begin{equation}
        \tilde{O}(T,s) = O(T) + \int^T_0 d\tau_1 e^{i(T-\tau_1)\ad_H } i\ad_E (\tilde{O}(\tau_1)).
    \end{equation}
    Iterating this formula once,
    \begin{multline}
        \tilde{O}(T,s) = O(T) \\
        + \int^T_0 d\tau_1 e^{i(T-\tau_1)\ad_H } i\ad_E (O(\tau_1)) + \int^T_0 d\tau_1 \int_0^{\tau_1} d\tau_2 e^{i(T-\tau_1)\ad_H } i\ad_E e^{i(\tau_1-\tau_2)\ad_H } i\ad_E (\tilde{O}(\tau_2)).
    \end{multline}
    Iterating $K-1$ times gives
    \begin{align}
         \tilde{O}(T,s) &= O(T)  \nonumber
         \\
         &+ \sum_{l=1}^{K-1} \int^T_0 d\tau_1 \int^{\tau_1}_0 d\tau_2 \dots \int^{\tau_{l-1}}_0 d\tau_l e^{i(T-\tau_1)\ad_H } i\ad_E e^{i(\tau_1-\tau_2)\ad_H } i\ad_E \dots  e^{i(\tau_{l-1}-\tau_l)\ad_H } i\ad_E(O(\tau_l))   \label{Eq:Iterate_K}
         \\
         &+ \int^T_0 d\tau_1\int^{\tau_1}_0 d\tau_2 \dots \int_0^{\tau_{p-1}} d\tau_K e^{i(T-\tau_1)\ad_H } i\ad_E e^{i(\tau_1-\tau_2)\ad_H } i\ad_E \dots  e^{i(\tau_{K-1}-\tau_K)\ad_H } i\ad_E(\tilde{O}(\tau_K)). \label{Eq:Remainder_Term}
    \end{align}
    Taking line~(\ref{Eq:Iterate_K}) and expanding the definition of $E$, 
    \begin{align}
        &\sum_{l=1}^{K-1} \int^T_0 d\tau_1\int^{\tau_1}_0 d\tau_2\dots \int^{\tau_{l-1}}_0 d\tau_l e^{i(T-\tau_1)\ad_H } i\ad_E e^{i(\tau_1-\tau_2)\ad_H} i\ad_E \dots  e^{i\tau_{l-1}-\tau_l)\ad_H } i\ad_E(O(\tau_l)) 
        \nonumber\\ 
        =& \sum_{l=1}^{K-1} \int^T_0 d\tau_1 \int^{\tau_1}_0 d\tau_2  \dots \int^{\tau_{l-1}}_0 d\tau_l \left( \prod_{\kappa=l}^1\left( \sum_{j_\kappa\in \sym\mathbb{Z}_+\geq p } e^{i(\tau_{\kappa-1}-\tau_\kappa)\ad_H}    i\ad_{E_{j_\kappa+1}} t^{j_\kappa}  \right)   \right)(O(\tau_l)) 
        \\ 
        =&\sum_{l=1}^{K-1} \int^T_0 d\tau_1 \int^{\tau_1}_0 d\tau_2 \dots \int^{\tau_{l-1}}_0 d\tau_l \left( \sum_{\substack{j\in \sym\mathbb{Z}_+\geq pl}} t^j\sum_{\substack{j_1\dots j_l\in \sym\mathbb{Z}_+\geq p\\ j_1+\dots+ j_l=j}}\left(\prod_{\kappa=l}^1 e^{i(\tau_{\kappa-1}-\tau_\kappa)\ad_H}    i\ad_{E_{j_\kappa+1}}\right) \right)(O(\tau_l)) \label{Eq:E_Derivation}
    \end{align}
    with $\tau_0 \equiv T$. We now reinsert $t = sT$, and make a change of variables $s_i = \tau_i /T$. This gives
    \begin{align}
        \sum_{l=1}^{K-1} T^l \int^1_0 ds_1 \int^{s_1}_0 ds_2 \dots \int^{s_{l-1}}_0 ds_l  \left( \sum_{j\in \sym\mathbb{Z}_+ \geq pl} (sT)^j\sum_{\substack{j_1\ldots j_l\in \sym\mathbb{Z}_+\geq p \\ j_1+\dots+j_l=j}} \left(\prod_{\kappa=l}^1  e^{i(s_{\kappa-1}-s_\kappa)T\ad_H}    i\ad_{E_{j_\kappa+1}}\right)\right)(O(Ts_l)). 
    \end{align}
    Next, we regroup the sum according to the degree of $s$, which yields
    \begin{align}
       &\sum_{j \in \sym\mathbb{Z}_+\geq p} (sT)^j \sum_{l=1}^{\min\{K-1, \lfloor j/p \rfloor \}} T^l \int^1_0 ds_1 \int^{s_1}_0 ds_2 \dots \int^{s_{l-1}}_0 ds_l \sum_{\substack{j_1\dots j_l\in \sym\mathbb{Z}_+\geq p \\ j_1+\dots+j_l=j}}  \left(\prod_{\kappa=l}^1 e^{iT(s_{\kappa-1}-s_\kappa)\ad_H}    i\ad_{E_{j_\kappa+1}}\right) (O(T s_l)) 
       \nonumber \\
       = &\sum_{j \in \sym\mathbb{Z}_+\geq p}  s^j \tilde{E}_{j+1,K}(T)(O).
    \end{align}
    Here, we have defined
    \begin{multline} \label{eq:tilde_E_explicit}
        \tilde{E}_{j+1,K}(T) \coloneqq\\
        \sum_{l=1}^{\min\{K-1, \lfloor j/p \rfloor\}} T^{j+l} \int^1_0 ds_1 \int^{s_1}_0 ds_2 \dots \int^{s_{l-1}}_0 ds_l \sum_{\substack{j_1\dots j_l\in \sym\mathbb{Z}_+\geq p \\ j_1+\dots+j_l=j}}  \left(\prod_{\kappa=l}^1 e^{i(s_{\kappa-1}-s_\kappa)T\ad_H} i\ad_{E_{j_\kappa+1}}\right) e^{is_l T \ad_H}.
    \end{multline}
    We now want to put bounds on the norm of $\tilde{E}_{j+1,K}$. Using the triangle inequality, unitarity of $e^{i \tau \ad_H}$, and evaluating the remaining integral,
     \begin{align}
     \begin{aligned}
        \norm{\tilde{E}_{j+1,K}(T)} &\leq \sum_{l=1}^{\min\{K-1, \lfloor j/p \rfloor  \}} T^{j+l} \int^1_0\int^{s_1}_0 \dots \int^{s_{l-1}}_0 ds_l \sum_{\substack{j_1\dots j_l\in \sym\mathbb{Z}_+\geq p \\ j_1+\dots+j_l=j}}  \left( \prod_{\kappa=1}^l \norm{\ad_{E_{j_\kappa+1}}} \right)   \\
        &\leq  T^j \sum_{l=1}^{\min\{K-1, \lfloor j/p \rfloor  \}}  \frac{T^l}{l!}\sum_{\substack{j_1\dots j_l\in \sym\mathbb{Z}_+\geq p \\ j_1+\dots+j_l=j}}  \left(\prod_{\kappa=1}^l     \norm{\ad_{E_{j_\kappa+1}} }  \right).   
    \end{aligned}
    \end{align}
    Noting that $\norm{\ad_x} \leq 2\norm{X}$ and applying~\cref{Lemma:Effective_Hamiltonian_Error_Series},
    \begin{align}
    \begin{aligned}
        \norm{\tilde{E}_{j+1,K}(T)} &\leq   T^j \sum_{l=1}^{\min\{K-1, \lfloor j/p \rfloor  \}}  \frac{T^l}{l!}\sum_{\substack{j_1\dots j_l\in \sym\mathbb{Z}_+\geq p \\ j_1+\dots+j_l=j}}  \left(\prod_{\kappa=1}^l 2 \acomm^{(j_\kappa + 1)} \frac{(a_\mathrm{max} \Upsilon)^{j_\kappa + 1}}{(j_\kappa + 1)^2}\right) \\
        &= (a_\mathrm{max}\Upsilon T)^j \sum_{l=1}^{\min\{K-1, \lfloor j/p \rfloor  \}}  \frac{(a_\mathrm{max}\Upsilon T)^l}{l!}\sum_{\substack{j_1\dots j_l\in \sym\mathbb{Z}_+\geq p \\ j_1+\dots+j_l=j}}  \left(\prod_{\kappa=1}^l 2 \frac{\acomm^{(j_\kappa + 1)}}{(j_\kappa + 1)^2}\right).
    \end{aligned}
    \end{align} 

    So far we have considered the terms in line~(\ref{Eq:Iterate_K}).
    We now consider line~(\ref{Eq:Remainder_Term}), which we will denote as $\tilde{F}_K(T,s)$.
    Applying the triangle inequality and utilising unitarity in a similar manner as above, we can check that the operator norm of $\tilde{F}_{K}(T,s)$ is bounded by
    \begin{align}
    \begin{aligned}
        \norm{\tilde{F}_K(T,s)} &\leq \frac{T^K}{K!}\norm{\ad_E}^K
        \\
        &\leq \frac{T^K}{K!}2^K \norm{E}^K \\
        &\leq \frac{T^K}{K!}2^K\left( \sum_{j\in \sym\mathbb{Z}_+\geq p} \norm{E_{j+1}} (sT)^j\right)^K \nonumber
        \\
        &= \frac{T^K}{K!}2^K \sum_{j_1\dots j_K\in \sym\mathbb{Z}_+\geq p} \left(\prod_{\kappa=1}^K \norm{E_{j_\kappa + 1}}\right)(sT)^{j_1+\dots+j_K} 
        \\
        &\leq \frac{T^K}{K!} \sum_{j\in \sym\mathbb{Z}_+\geq Kp} (sT)^j \left( \sum_{\substack{j_1\dots j_K\in \sym\mathbb{Z}_+\geq p\\ j_1+\dots+j_K=j}} \prod_{\kappa=1}^K 2\norm{E_{j_\kappa + 1}}\right).
    \end{aligned}
    \end{align}
    Using, as before, the bounds from~\Cref{Lemma:Effective_Hamiltonian_Error_Series},
    \begin{equation}
        \norm{\tilde{F}_K(T,s)} \leq \frac{(a_\mathrm{max}\Upsilon T)^K}{K!} \sum_{j\in \sym\mathbb{Z}_+\geq Kp} (a_\mathrm{max}\Upsilon sT)^j\left( \sum_{\substack{j_1\dots j_K\in \sym\mathbb{Z}_+\geq p\\ j_1+\dots+j_K=j}} \prod_{\kappa=1}^K 2 \frac{\acomm^{(j_\kappa + 1)}}{(j_\kappa+1)^2}\right)
    \end{equation}
    giving the second bound of the lemma.
    
\end{proof}

\section{Richardson Extrapolation}
\label{sec:richardson-extrapolation}

Having laid the technical groundwork in the previous section, we now apply these results to analyse Richardson extrapolation for time evolved observables. Before we begin, we briefly provide a more detailed overview Richardson extrapolation to supplement~\cref{sec:richardson-background}. A more detailed and general discussion of the method can be found in~\cite{sidi_2003}. 

In our context $f:[0,\width] \rightarrow \mathbb{R}$ is a smooth function, such that
\begin{equation} \label{eq:Richardson_problem_f}
    f(s) = f(0) + \sum_{j=1}^n c_j s^{\sigma_j} + O(s^{\sigma_{n+1}})
\end{equation}
where the $\sigma_j\in \mathbb{Z}_+$ form an increasing sequence. 
Moreover, suppose $f(s_k)$ can be computed for a monotonically decreasing sequence of inputs $s_k \in (0,\width]$. Richardson extrapolation provides a new function $F^{(m)}(s)$ satisfying 
\begin{equation}
    F^{(m)}(s) = f(0) + \sum_{j=m}^n \tilde{c}_j s^{\sigma_j} + O(s^{\sigma_{n+1}})
\end{equation}
for some $1 \leq m \leq n$. Thus, the convergence rate to $f(0)$ for small $s$ is boosted from $O(s^{\sigma_1})$ to $O(s^{\sigma_m})$. The procedure can be applied to $C^m$ functions on $[0,c]$, and in this case the relevant expansion is the Taylor polynomial.

There are several algorithms for performing Richardson extrapolation. Regardless of how it is performed, the result is a linear combination
\begin{equation} \label{eq:Richardson_lin_sum}
    F^{(m)}(s) =\sum_{k = 1}^m b_k f(s_k)
\end{equation}
where the $s_k$ are themselves functions of $s$, and $s=\max_k s_k = s_1$. 
Moreover, $b = (b_1,\ldots, b_m)$ solves the linear system
\begin{align} \label{eq:Richard_linalg}
    V b = \hat{e}_1
\end{align}
where $\hat{e}_j$ is the $j$th standard basis vector and $V$ is a generalised $m\times m$ Vandermonde matrix, with elements
\begin{equation}
    V_{jk} = s_k^{\sigma_{j-1}}
\end{equation}
and $\sigma_0 \equiv 0$. For certain values of $\sigma_j$, an exact solution to~\cref{eq:Richard_linalg} is known. In particular, if 
\begin{align} \label{eq:basic_Vand_condition}
    \sigma_j = \eta j 
\end{align} 
for some $\eta \in \mathbb{Z}\setminus\{0\}$ and $j=0,1,\ldots,m-1$, then $V$ is the standard Vandermonde matrix in terms of $y_k \equiv x_k^\eta$. 
Thus, the inverse is known via the theory of Lagrange interpolation. 
In particular,
\begin{equation} \label{eq:exact_b}
    b_k = (V^{-1})_{k0} = \prod_{i\neq k} \frac{x_i^\eta}{x_i^\eta - x_k^\eta}.
\end{equation}
For more general $\sigma_j$, \cref{eq:Richard_linalg} still holds, but it is less clear if a closed form solution exists in the mathematics literature. This has important implications for higher order product formula simulations that we will discuss later in this section.



A proper choice of sample points is crucial for the Richardson method to be well-conditioned, hence robust to computational imprecision. 
For example, certain natural choices, such as $s_k = s/k$ for $k>0$, lead to poor conditioning. In our context, the relevant condition number is the one norm $\norm{b}_1$ of $b$, which can grow very large despite the summation constraint
\begin{align}
    \sum_k b_k = 1
\end{align}
enforced by \cref{eq:Richard_linalg}. This results in a sign problem, where small numerical errors in $f(x_k)$ are magnified by enormous $b_k$ and hence large $\norm{b}_1$, while the size of the answer $f(0)$ remains $O(1)$. 
Some choices of $s_k$, such as a geometric sequence $s_k =\omega^{k-1} s_1$ for $\omega \in (0,1)$ are provably well-conditioned. 
However, in many applications such as our own, computing $s_k$ closer to $0$ becomes intolerably expensive. To achieve low-depth Trotter evolutions, it is preferable to keep the sample points $s_k$ as far from the origin as allowable. 

Additionally, for product formula simulation, we would also like to choose $s_k = 1/r_k$ for nonzero integer $r_k$, since otherwise we would have to apply fractional Trotter steps. 
Although this is possible using quantum signal processing techniques, the overhead is potentially large and undesirable --- particularly if using a NISQ-era device.
$s_1$ is a parameter chosen appropriately to minimise the number of Trotter steps necessary while also ensuring sufficiently accurate simulations.

To meet the well-conditioning and integer query conditions we desire, we make use of results from~\cite{low2019well}. 
In particular, we let
\begin{equation}\label{Eq:Step_Size}
    r_k = r_\mathrm{scale} \left\lceil \frac{R}{\sin(\pi(2k-1)/8m)}  \right\rceil, \quad k\in \{1,\dots, m\}
\end{equation}
and we will make the explicit choice $R = \sqrt8 m/\pi$, chosen such that the $r_k$ are distinct, and $r_\mathrm{scale}\in \mathbb{Z}_+$ scales to ensure the $r_k$ are large enough to accommodate longer evolutions. This scaling has no effect on $b$ and hence the conditioning. A useful upper bound for the purpose of resource estimates is~\cite{watkins2024clock}
\begin{equation}\label{eq:Trot_step_bounds}
   m \leq r_k/r_\mathrm{scale} \leq 3 m^2 
\end{equation}

One can prove the following about this choice of extrapolation points. 
\begin{lemma}[Well-conditioned Richardson Extrapolation~\cite{low2019well}]
\label{Lemma:General_Richardson_Extrapolation}
    Let $f \in C^{2m + 2}([-1,1])$ be an even, real-valued function of $s$, and let $P_{j}$ and $R_{j}$ be the degree $j$ Taylor polynomial and Taylor remainder, respectively, such that $f(s) = P_{j}(s) + R_{j}(s)$. 
    Let 
    \begin{equation*}
        F^{(m)}(s) = \sum_{k=1}^m  b_k f(s_k)
    \end{equation*}
    be the unique Richardson extrapolation of $f$ at points $s_1, s_2, \dots s_m$ given by 
    \begin{equation*} \label{eq:well_cond_s}
        s_k = \frac{s}{r_k},
    \end{equation*}
     for $r_k$ defined in \cref{Eq:Step_Size} and $b_k$ given in \cref{eq:exact_b} for $\eta = 2$. 
     Then
     \begin{equation*} \label{eq:Richardson_cancels_errors}
         F^{(m)}(s) = f(0) + \sum_{k=1}^m b_k R_{2m}(s_k)
     \end{equation*}
     and $\norm{b}_1=O(\log m)$.
\end{lemma}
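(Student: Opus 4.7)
My plan is to split the proof into two largely independent parts: establish the exact identity $F^{(m)}(s) = f(0) + \sum_k b_k R_{2m}(s_k)$ from the Vandermonde structure that defines $b$, and separately establish the well-conditioning bound $\norm{b}_1 = O(\log m)$ from the Chebyshev-like placement of the nodes $r_k$. Both pieces are essentially reproducing / specialising the results of \cite{low2019well} to the present setup, but each requires checking a concrete computation.

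For the identity, I would start from the Taylor expansion of $f$ at the origin. Since $f$ is even and $C^{2m+2}$, all odd derivatives at $0$ vanish, so the Taylor polynomial through order $2m-1$ can be written as $P_{2m-1}(s) = f(0) + \sum_{j=1}^{m-1} c_j s^{2j}$ with $c_j = f^{(2j)}(0)/(2j)!$, and Taylor's theorem gives $f(s) = P_{2m-1}(s) + R_{2m}(s)$ where $R_{2m}(s) = O(s^{2m})$ (absorbing the fact that the $s^{2m-1}$ Taylor coefficient vanishes). Substituting into $F^{(m)}(s) = \sum_k b_k f(s_k)$ yields
\begin{equation*}
F^{(m)}(s) = f(0)\sum_{k=1}^m b_k + \sum_{j=1}^{m-1} c_j\, s^{2j}\sum_{k=1}^m \frac{b_k}{r_k^{2j}} + \sum_{k=1}^m b_k R_{2m}(s_k).
\end{equation*}
Because $b$ is the first column of $V^{-1}$ with $V_{jk} = s_k^{2(j-1)}$, the defining system $Vb = \hat e_1$ gives exactly $\sum_k b_k = 1$ and $\sum_k b_k s_k^{2j} = 0$ for $j=1,\ldots,m-1$, which collapses the middle sum and leaves the claimed identity.

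For the norm bound, the right picture is to set $y_k \coloneqq s_k^2 = s^2/r_k^2$ and view $b_k = \prod_{i\neq k}\frac{y_i}{y_i-y_k}$ as the Lagrange interpolation weights evaluated at the point $y = 0$, based on the nodes $\{y_k\}$. Hence $\norm{b}_1 = \sum_k |b_k| = \sum_k |L_k(0)|$, which is the Lebesgue function at $0$ for this node set. The definition $r_k \propto \lceil R/\sin(\pi(2k-1)/(8m))\rceil$ with $R = \sqrt8\,m/\pi$ makes $1/r_k$ proportional (up to the ceiling perturbation) to $\sin(\pi(2k-1)/(8m))$, i.e.\ the nodes $y_k$ are an affine image of the squared Chebyshev extrema on a suitable interval containing $0$ near one endpoint. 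I would then invoke the classical fact that the Lebesgue constant for Chebyshev node interpolation grows as $O(\log m)$, and verify that the point $y = 0$ (and the mild $r_k \mapsto \lceil r_k\rceil$ rounding) do not break this bound; both of these verifications are carried out in \cite{low2019well} and I would essentially quote and apply their computation.

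The main obstacle is the second part: the cancellation identity is a short computation once the Vandermonde constraints are written out, but controlling $\norm{b}_1$ requires the Chebyshev-theoretic input, and in particular requires care that the ceiling function in the definition of $r_k$ perturbs the nodes by amounts small enough that the logarithmic Lebesgue bound is preserved. I would handle this by appealing directly to the machinery already developed in \cite{low2019well}, rather than redeveloping the Chebyshev interpolation theory from scratch.
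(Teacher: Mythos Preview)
Your proposal is correct and follows essentially the same route as the paper: the cancellation identity is obtained directly from the Vandermonde constraints $Vb=\hat e_1$ (what the paper summarises as ``by the extrapolation properties of $F^{(m)}$''), and the bound $\norm{b}_1=O(\log m)$ is delegated to \cite{low2019well}. The only additional remark in the paper's proof that you do not make explicit is that $b$ is invariant under the overall rescaling $r_k\mapsto r_\mathrm{scale}\,r_k$, so the $O(\log m)$ bound proved in \cite{low2019well} for $r_\mathrm{scale}=1$ transfers unchanged.
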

\begin{proof}
    By the extrapolation properties of $F^{(m)}$, $\sum_{k=0}^{m-1} b_k P_{2m}(x_k) = f(0)$. Equation~\eqref{eq:Richardson_cancels_errors} follows immediately. 
    The scaling of $\norm{b}_1$  with $m$ is proven in reference~\cite{low2019well} for $r_\mathrm{scale} = 1$. For $r_\mathrm{scale} \neq 0$, it is relatively straightforward to show that the same $b$ solves~\cref{eq:Richard_linalg}. Hence, $\norm{b}_1 = O(\log m)$.
\end{proof}

\subsection{Application to Time Evolved Observables}
\label{Sec:Richardson_Extrapolation_Observables}
Richardson extrapolation may be applied to the problem of computing time-evolved expectation values, where the Trotter step $s$ is the extrapolation parameter and the function of interest is
\begin{equation}
    f(s) = \expval{O_p(T, s)} \coloneqq \tr\left(\rho_0 \cP^{1/s \dagger}(sT) O \cP^{1/s}(sT)\right).
\end{equation}
Here, $p$ is the order of the product formula $\cP$. Note that $f(0)$ may be defined via the limit, and corresponds to the ideal value. 

In what follows, we will assume the extrapolation always starts from the linear or quadratic error terms, depending on the symmetry of the product formula. We do not start the extrapolation from the smallest nonzero power according to $p$. Although this may seem unnecessary for large $p$, we wish to eventually utilise the well-conditioned extrapolation of~\cref{Lemma:General_Richardson_Extrapolation}, which is begins at 2nd order. It is possible that these concerns are primarily academic, and practitioners may find it more reasonable to use an alternative scheme without the theoretical guarantees. 

The following lemma characterises the error in Richardson approach thus described, without committing yet to a specific choice of sampling points. 
\begin{lemma}[Richardson Extrapolation Error]
\label{Lemma:Richardson_Extrapolation_Error}
    Let $O$ be an observable, $H=\sum_\gamma^\Gamma H_\gamma$ be a time independent Hamiltonian, and $\rho_0$ a quantum state. Let $\cP$ be a staged $p$th order product formula of symmetry class $\sigma$, where $\sigma = 2$ if $\cP$ is symmetric, $1$ otherwise. Let
    \begin{equation*}
        \expval{O_{p,m}(T)} \coloneqq \sum_{k=1}^m b_k \expval{O_p (T,s_k)}
    \end{equation*}
    be an $m$-term Richardson extrapolation, with ascending sequence of Trotter steps $r_k=1/s_k \in \mathbb{Z}_+$, which cancel the powers $s^\sym, s^{\sym 2},\ldots, s^{\sym(m-1)}$. Suppose that, for all $k = 1,\ldots,m$, there exist $J \in \mathbb{Z}_+$ and $C \in \mathbb{R}_+$ such that 
    \begin{equation*}
        \sup_{j\geq J} (a_\mathrm{max} \Upsilon\abs{s_k T})^j \acomm^{(j)} \leq C.
    \end{equation*}
    Then, the error in the extrapolation, as compared to the exact evolution $\expval{O(T)}$, satisfies 
    \begin{equation*}
        \abs{\expval{O(T)} - \expval{O_{p,m}(T)}} \leq \norm{O} \norm{b}_1 \sum_{\substack{j\in \sym\mathbb{Z}_+\\ j\geq \sym m}}  s_1^j\left(\sum_{l=1}^{K}  \frac{(a_\mathrm{max} \Upsilon T\lambda_{j,l})^{j+l}}{l!}\right)
    \end{equation*}
    where $\norm{b}_1 = \sum_k \abs{b_k}$, $K = \left\lceil\frac{\sym m}{p}\right\rceil$, and 
    \begin{equation*}
        \lambda_{j,l} \coloneqq \bigg(\sum_{\substack{j_1\dots j_l\in \sym\mathbb{Z}_+\geq p \\ j_1+\dots+j_l=j}}  \prod_{\kappa=1}^l 2 \frac{\acomm^{(j_\kappa + 1)}}{(j_\kappa + 1)^2}\bigg)^{1/(j+l)}.
    \end{equation*}
\end{lemma}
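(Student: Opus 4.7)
The plan is to combine the superoperator series expansion from~\cref{Lemma:Exact_Error_Form} with the cancellation properties of the Richardson coefficients $b_k$, and then merge the $\tilde F_K$ remainder cleanly into the final bound. Since $\sum_k b_k = 1$ (the first row of~\cref{eq:Richard_linalg}), I would begin by writing
\begin{align*}
    \expval{O_{p,m}(T)} - \expval{O(T)} = \sum_{k=1}^m b_k \tr\!\bigl(\rho_0 (\tilde O(T,s_k) - O(T))\bigr),
\end{align*}
and then invoke~\cref{Lemma:Exact_Error_Form} with truncation $K = \lceil \sym m / p \rceil$ to expand each $\tilde O(T,s_k) - O(T)$ as a power series in $s_k$ plus a $\tilde F_K$ remainder.

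After swapping the $j$- and $k$-sums, the expression splits into
\begin{align*}
    \sum_{j \in \sym\mathbb{Z}_+ \geq p} \Bigl(\sum_k b_k s_k^j\Bigr) \tr\!\bigl(\rho_0\,\tilde E_{j+1,K}(T)(O)\bigr) + \sum_{k=1}^m b_k \tr\!\bigl(\rho_0\,\tilde F_K(T,s_k)(O)\bigr).
\end{align*}
By hypothesis $\sum_k b_k s_k^{\sym j} = 0$ for $j = 1,\dots,m-1$, so every term in the first sum with $j < \sym m$ vanishes. For the surviving $j \geq \sym m$ terms I would bound $\bigl|\sum_k b_k s_k^j\bigr| \leq \norm{b}_1 s_1^j$ using $0 < s_k \leq s_1$, estimate the trace by $\norm{O}\cdot \norm{\tilde E_{j+1,K}(T)}$, and substitute the operator-norm bound from~\cref{Lemma:Exact_Error_Form}. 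This produces exactly the $l = 1, \dots, K-1$ summands of the target bound once one factors $(a_\mathrm{max}\Upsilon T)^{j+l}$ out of $\lambda_{j,l}^{j+l}$. The upper limit $\min\{K-1, \lfloor j/p\rfloor\}$ appearing in~\cref{Lemma:Exact_Error_Form} collapses to $K-1$ here because $j \geq \sym m$ together with $K = \lceil \sym m/p \rceil$ forces $\lfloor j/p \rfloor \geq K-1$.

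The remainder piece $\sum_k b_k \tr(\rho_0\,\tilde F_K(T,s_k)(O))$ supplies the missing $l = K$ summand. Using the $\tilde F_K$ bound from~\cref{Lemma:Exact_Error_Form} together with $|s_k| \leq s_1$ and $\sum_k |b_k| = \norm{b}_1$, this piece is at most $\norm{O}\,\norm{b}_1\,(a_\mathrm{max}\Upsilon T)^K/K!$ times a sum over $j \in \sym\mathbb{Z}_+ \geq Kp$ of $(a_\mathrm{max}\Upsilon s_1 T)^j$ multiplied by the same nested-commutator factor that defines $\lambda_{j,K}^{j+K}$. Because $Kp \geq \sym m$, the $j$-range is contained in $\{j \in \sym\mathbb{Z}_+ : j \geq \sym m\}$, so the contribution merges with the $\tilde E$ sum and yields the $l = K$ term $(a_\mathrm{max}\Upsilon T\lambda_{j,K})^{j+K}/K!$.

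The main bookkeeping obstacle I expect is tracking the index ranges during the merger of the $\tilde E$ and $\tilde F$ contributions --- in particular verifying that no spurious $(j,l)$ pairs appear outside the stated double sum and that the $K!$ and $l!$ denominators align with the powers $(a_\mathrm{max}\Upsilon T)^{j+l}$ exactly as written in the target inequality. Everything else amounts to a routine application of the triangle inequality together with the bounds already established in~\cref{Lemma:Exact_Error_Form}; the convergence hypothesis on each $s_k$ is precisely what justifies invoking that lemma term-by-term.
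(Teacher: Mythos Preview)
Your proposal is correct and follows essentially the same approach as the paper: invoke~\cref{Lemma:Exact_Error_Form} with $K=\lceil\sym m/p\rceil$, use the Richardson cancellation to kill the $j<\sym m$ terms, bound what remains via $\norm{b}_1 s_1^j$, and merge the $\tilde F_K$ contribution as the $l=K$ summand using $Kp\geq\sym m$. The only cosmetic difference is that the paper packages everything into a Taylor remainder $R_{\sym(m-1)}$ and applies H\"older's inequality to extract $\norm{b}_1\max_k\norm{R_{\sym(m-1)}(T,s_k)}$ before bounding, whereas you swap the $j$- and $k$-sums first and bound $\bigl|\sum_k b_k s_k^j\bigr|$ directly; the underlying estimates and index-tracking (including your observation that $\lfloor j/p\rfloor\geq K-1$ once $j\geq\sym m$) are identical.
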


\begin{proof}
    By~\cref{Lemma:Exact_Error_Form},
    \begin{equation}
        \expval{\tilde{O}(T,s)} = \expval{O(T)} + \sum_{j\in \sym\mathbb{Z}_+\geq p} s^j \expval{\tilde{E}_{j+1,K}(T)(O)} + \expval{\tilde{F}_K(T,s)(O)}
    \end{equation}
    and $\expval{\tilde{F}_K(T,s)} = O(s^{Kp})$. 
    The Richardson extrapolation procedure with $m$ samples will remove all terms up to $O(s^{\sym(m-1)})$ in the series.
    Choose a value of $K$ such that 
    \begin{equation} \label{eq:smallest_K}
        Kp > \sym (m-1), 
    \end{equation}
    such as $K=\left\lceil\frac{\sym m}{p} \right\rceil$.
    Then, an $m$-term Richardson extrapolation will cancel only terms in the $\tilde{E}$ series, but leave $\tilde{F}_K$ intact. 
    Thus, the Richardson extrapolation satisfies
    \begin{equation}
        \expval{O_{p,m}(T)} - \expval{O(T)} = \sum_{k=1}^m b_k \expval{R_{\sym (m-1)}(T,s_k)(O)}
    \end{equation}
    where $R_{q}$ is the Taylor remainder of degree $q$ and satisfies
    \begin{equation}\label{eq:Remainder_def}
        R_{\sym (m-1)}(T,s) \coloneqq \sum_{\substack{j\in \sym\mathbb{Z}_+\\ j\geq \sym m}} s^j \tilde{E}_{j+1,K}(T) + \tilde{F}_K(T,s).
    \end{equation}
    Applying a H{\"o}lder's inequality,
    \begin{equation}\label{eq:Holder_split}
        \abs{\expval{O_{p,m}(T)} - \expval{O(T)}} \leq \norm{b}_1 \max_k \norm{R_{\sym (m-1)}(T,s_k)(O)} \leq \norm{b}_1 \norm{O} \max_k \norm{R_{\sym (m-1)}(T,s_k)},
    \end{equation}
    and we now focus on the remainder of~\cref{eq:Remainder_def}. By the triangle inequality,
    \begin{equation} \label{eq:triangle_R}
        \norm{R_{\sym (m-1)}(T,s)} \leq \sum_{\substack{j\in \sym\mathbb{Z}_+\\ j \geq \sym m}} s^j \norm{\tilde{E}_{j+1,K}(T)} + \norm{\tilde{F}_K(T,s)}.
    \end{equation}
    For our choice of $K$, $K - 1 \leq \lfloor j/p \rfloor$ for all $j \geq  \sym m$, and we may write
    \begin{align} \label{eq:Etilde_boundK}
    \begin{aligned}
        \norm{\tilde{E}_{j+1, K}(T)} &\leq (a_\mathrm{max} \Upsilon T)^j \sum_{l = 1}^{K-1} \frac{(a_\mathrm{max} \Upsilon T)^l}{l!} \sum_{\substack{j_1\dots j_l\in \sym\mathbb{Z}_+\geq p \\ j_1+\dots+j_l=j}}  \left(\prod_{\kappa=1}^l 2 \frac{\acomm^{(j_\kappa + 1)}}{(j_\kappa + 1)^2}\right) \\
        &= \sum_{l = 1}^{K-1} \frac{(a_\mathrm{max} \Upsilon T \lambda_{j,l})^{j+l}}{l!}.
    \end{aligned}
    \end{align}
    Applying the bound on $\tilde{F}_K(T,s)$ from~\cref{Lemma:Exact_Error_Form} and the bound on $\tilde{E}_{j+1,K}$ from~\cref{eq:Etilde_boundK} into~\cref{eq:triangle_R},
    \begin{equation}
        \norm{R_{\sym (m-1)}(T,s)}\leq \sum_{\substack{j\in \sym\mathbb{Z}_+\\j\geq \sym m}} s^j \sum_{l=1}^{K-1} \frac{(a_\mathrm{max}\Upsilon T\lambda_{j,l})^{j+l}}{l!} + \sum_{\substack{j \in \sym\mathbb{Z}_+\\j\geq Kp}} s^j \frac{(a_\mathrm{max}\Upsilon T\lambda_{j,l})^{j+K}}{K!}.
    \end{equation}
    In turn, this is upper bounded by starting the $j \geq Kp$ index at $ \sym m$. 
    Rearranging, we see that there is a matching of terms and hence we can combine the sums as
    \begin{equation}
        \norm{R_{\sym (m-1)}(T,s)}\leq \sum_{\substack{j\in \sym\mathbb{Z}_+\\j\geq \sym m}} s^j \sum_{l=1}^K \frac{(a_\mathrm{max}\Upsilon T\lambda_{j,l})^{j+l}}{l!}.
    \end{equation}
    This almost amounts to the statement of the lemma. To conclude, we observe that our bound increases with larger $s$. 
    Since $s_k$ is decreasing in $k$, $\max_k s_k = s_1$. Thus,
    \begin{equation}
        \max_k \norm{R_{\sym (m-1)}(T,s_k)}\leq \sum_{\substack{j\in \sym\mathbb{Z}_+\\j\geq \sym m}} s_1^j \sum_{l=1}^K \frac{(a_\mathrm{max}\Upsilon T\lambda_{j,l})^{j+l}}{l!}.
    \end{equation}
    which, when combined with~\cref{eq:Holder_split}, gives the result of the lemma.
\end{proof}
To make use of this lemma, we would like to ensure that $\lambda_{j,l}$ adequately captures the "size" of $H$, and does not grow too large with the indices $j, l$. In particular, we will subsequently describe error bounds using the simpler parameter
\begin{equation}
    \lambda \coloneqq \sup_{\substack{j\in \sym\mathbb{Z}_+\geq \sym m \\ 1\leq l \leq K}} \lambda_{j,l}
\end{equation}
but first, we must ensure $\lambda$ exists. In fact, we have that
\begin{equation}
    \lambda_{j,l} \leq 4 \sum_{\gamma=1}^\Gamma \norm{H_\gamma}
\end{equation}
and thus $\lambda$ exists and satisfies the same bound. To prove this, we first take a triangle inequality through the expression in~\Cref{Lemma:Richardson_Extrapolation_Error}, and use the bound on $\acomm^{(j)}$ given in~\cref{eq:acomm_bound}.
\begin{equation}
    \lambda_{j,l} \leq 2 \left(\sum_\gamma \norm{H_\gamma}\right) \left(\sum_{\substack{j_1\ldots j_l \in \sym\mathbb{Z}_+\geq p\\j_1+\dots+j_l=j}} \prod_{\kappa=1}^l \frac{1}{(j_\kappa + 1)^2}\right)^{1/{(j+l)}}
\end{equation}
Next,
\begin{equation}
    \left(\sum_{\substack{j_1\ldots j_l \in \sym\mathbb{Z}_+\geq p\\j_1+\dots+j_l=j}} \prod_{\kappa=1}^l \frac{1}{(j_\kappa + 1)^2}\right)^{1/{(j+l)}} \leq \left(\sum_{\substack{j_1\ldots j_l \in \mathbb{N} \\ j_1+\dots+j_l = j}} 1\right)^{1/(j+l)} = \binom{j + l - 1}{l-1}^{1/(j+l)}.
\end{equation}
We can then use that $\binom{j + l - 1}{l-1}\leq 2^{j+l-1}<2^{j+l}$ to get that $\binom{j + l - 1}{l-1}^{1/(j+l)}\leq 2$. Put together, this provides the stated bound.

With the relevant error bounds in hand, we turn to the question of algorithmic cost to achieve an error within tolerance $\epsilon$. Given Trotter steps $(r_1, \ldots, r_m) \in \mathbb{Z}_+^m$ listed in ascending order, the maximum Trotter depth is $r_m$ and the total Trotter depth is $\sum_k r_k$. These are both important parameters for discussing the true simulation cost. Note that, in the full algorithm, the true number of Trotter steps will be higher as a certain number of repetitions are necessary for any chosen measurement protocol.

We start by using our error bounds to derive a sufficient Trotter depth to achieve precision $\epsilon$ in the estimator, assuming exactly computed expectation values throughout.
\begin{lemma}[Sufficient Trotter Depth] \label{Lemma:Sufficient_Trotter_Depth}
    Consider the Richardson extrapolation scenario described in~\cref{Lemma:Richardson_Extrapolation_Error}. Define 
    \begin{equation*}
        \lambda \coloneqq \sup_{\substack{j\in \sym\mathbb{Z}_+\geq \sym m \\ 1\leq l \leq K}} \lambda_{j,l}
    \end{equation*}
    and $s_1$ is chosen such that $ a_\mathrm{max}\Upsilon s_1 \lambda T < 1/2$.
    To achieve a relative error $\epsilon$, namely
    \begin{align*}
        \abs{\expval{O(T)} - \expval{O_{p,m}(T)}}\leq \epsilon \norm{O},
    \end{align*}
    it suffices to choose a minimum number of Trotter steps
    \begin{equation*}
         r_1 \geq (a_\mathrm{max} \Upsilon \lambda T) \left( \frac{4 \norm{b}_1}{ \epsilon}\right)^{\frac{1}{\sym m}}
    \end{equation*}
    for $a_\mathrm{max} \Upsilon \lambda T \leq 1$ ("short times") and 
    \begin{equation*}
        r_1 \geq (a_\mathrm{max} \Upsilon \lambda T)^{1 + \frac{1}{\sym m}\left\lceil \frac{\sym m}{p}\right\rceil} \left(\frac{4 \norm{b}_1}{ \epsilon} \right)^{\frac{1}{\sym m}}
    \end{equation*}
     for $a_\mathrm{max} \Upsilon \lambda T > 1$. 
\end{lemma}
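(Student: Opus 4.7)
The plan is to start from the error bound in \Cref{Lemma:Richardson_Extrapolation_Error}, uniformly replace each $\lambda_{j,l}$ with the supremum $\lambda$, and reduce the resulting double sum to a product of two one-dimensional series: a geometric series in $j$ and a finite series in $l$. Introducing the abbreviation $\alpha \coloneqq a_\mathrm{max}\Upsilon\lambda T$, and using that the $l$-factor no longer depends on $j$ after the replacement, the bound collapses to
\begin{equation*}
\abs{\expval{O(T)} - \expval{O_{p,m}(T)}} \leq \norm{O}\norm{b}_1 \Big(\sum_{j\in\sym\mathbb{Z}_+,\, j\geq \sym m} (s_1\alpha)^j\Big)\Big(\sum_{l=1}^K \frac{\alpha^l}{l!}\Big).
\end{equation*}

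Next I would use the standing hypothesis $s_1\alpha < 1/2$ to estimate the $j$-series. The (possibly partial, in the $\sym=2$ case) geometric series with ratio $s_1\alpha$ or $(s_1\alpha)^2$ is bounded by $2(s_1\alpha)^{\sym m}$, the factor of $2$ coming from the denominator $1-s_1\alpha > 1/2$. This step is where the chosen normalisation of $s_1$ is used, and it is insensitive to the size of $\alpha$ itself.

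The two cases in the lemma statement then arise from two different crude bounds on the finite $l$-series. In the short-time regime $\alpha \leq 1$, I would use $\sum_{l=1}^K \alpha^l/l! \leq e-1 < 2$, which produces an overall error bound of $4\norm{O}\norm{b}_1 (s_1\alpha)^{\sym m}$; requiring this to be at most $\epsilon\norm{O}$ and solving for $r_1 = 1/s_1$ yields the first displayed bound $r_1 \geq \alpha (4\norm{b}_1/\epsilon)^{1/(\sym m)}$. In the regime $\alpha > 1$, I would instead factor out $\alpha^K$ and use $\alpha^l \leq \alpha^K$ for $1\leq l\leq K$ together with $\sum_{l=1}^K 1/l! < 2$ to obtain $\sum_{l=1}^K \alpha^l/l! \leq 2\alpha^K$, giving the overall bound $4\norm{O}\norm{b}_1 (s_1\alpha)^{\sym m}\alpha^K$; isolating $r_1$ and recalling $K = \lceil \sym m/p\rceil$ from \Cref{Lemma:Richardson_Extrapolation_Error} produces the second displayed bound with exponent $1+\lceil \sym m/p\rceil/(\sym m)$.

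The argument is essentially bookkeeping once the series is factored. The main point to be careful about is matching constants: one must select crude bounds on $\sum \alpha^l/l!$ tight enough that the final constant is $4$ in both regimes, rather than a worse number like $2e$ that would follow from simply bounding the sum by $e^\alpha$. The only subtle aspect is tracking the $K/(\sym m) = \lceil \sym m/p\rceil/(\sym m)$ exponent in the large-$\alpha$ case, since this is precisely the source of the characteristic $(a_\mathrm{max}\Upsilon\lambda T)^{1+1/p}$ scaling of the ultimate resource count.
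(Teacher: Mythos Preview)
Your proposal is correct and follows essentially the same approach as the paper: replace $\lambda_{j,l}$ by $\lambda$, factor the double sum into a geometric $j$-series (bounded via $s_1\alpha<1/2$) times the finite $l$-series, then bound the latter by $(e-1)\max\{1,\alpha\}^K$ to obtain the constant $4$ and the two regimes. The only cosmetic difference is that the paper introduces $\eta\coloneqq\max\{a_\mathrm{max}\Upsilon\lambda T,1\}$ to treat both cases uniformly before splitting at the end, whereas you split earlier; the resulting bounds are identical.
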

\begin{proof}
    From~\cref{Lemma:Richardson_Extrapolation_Error}, 
    \begin{align}
    \begin{aligned}
        \abs{\expval{O(T)} - \expval{O_{p,m}(T)}} &\leq  \norm{O} \norm{b}_1 \sum_{\substack{j\in \sym\mathbb{Z}_+ \\ j\geq \sym m }} s_1^j \left(\sum_{l=1}^K \frac{(a_\mathrm{max}\Upsilon T\lambda_{j,l})^{j+l}}{l!}\right) \\
        &\leq \norm{O} \norm{b}_1 \sum_{\substack{j\in \sym\mathbb{Z}_+ \\ j\geq \sym m}} (s_1a_\mathrm{max} \Upsilon \lambda T)^j \sum_{l=1}^{K}   \frac{(a_\mathrm{max} \Upsilon \lambda T)^l}{l!}.
    \end{aligned}
    \end{align}
    The inner sum is a partial sum of the exponential, which we wish to upper bound with an elementary expression. Using the upper bound,
    \begin{equation}
        (a_\mathrm{max} \Upsilon \lambda T)^l \leq \max\{a_\mathrm{max} \Upsilon \lambda T, 1\}^K \equiv \eta^K
    \end{equation}
    we have 
    \begin{equation}
        \sum_{l=1}^K \frac{(a_\mathrm{max} \Upsilon \lambda T)^l}{l!} \leq \eta^K (e - 1)
    \end{equation}
    and thus,
    \begin{align}
    \begin{aligned}
        \abs{\expval{O(T)} - \expval{O_{p,m}(T)}} &\leq (e-1)\norm{O}\norm{b}_1 \eta^K \sum_{\substack{j\in \sym\mathbb{Z}_+ \\ j \geq \sym m}} (s_1 a_\mathrm{max} \Upsilon \lambda T)^j \\
        &= (e-1) \norm{O} \norm{b}_1 \eta^K (s_1a_\mathrm{max} \Upsilon \lambda T)^{\sym m} \sum_{\substack{j\in \sym\mathbb{Z}_+\\j\geq0}} (s_1 a_\mathrm{max} \Upsilon \lambda T)^j \\
        &\leq (e-1) \norm{O} \norm{b}_1 \eta^K (s_1a_\mathrm{max} \Upsilon \lambda T)^{\sym m} \sum_{\substack{j\in \sym\mathbb{Z}_+\\j\geq0}} \left(\frac{1}{2}\right)^j\\
        &\leq 4 \norm{O}\norm{b}_1 \eta^K (s_1 a_\mathrm{max} \Upsilon \lambda T)^{\sym m}.
    \end{aligned}
    \end{align}
    To achieve an relative error (i.e., normalised by $\norm{O}$) of $\epsilon$, it suffices then to choose $r_1=1/s_1$ satisfying
    \begin{equation}
        r_1 \geq a_\mathrm{max} \Upsilon \lambda T \left(\frac{4 \norm{b}_1 \eta^K}{\epsilon}\right)^{\frac{1}{\sym m}},
    \end{equation}
    and we may simply take the ceiling of the right hand side as our value. 

    We now split into the short and long-time regimes. For short times $a_\mathrm{max} \Upsilon \lambda T \leq 1$, $\eta = 1$ and we have
    \begin{equation}
        r_1 = \left\lceil a_\mathrm{max} \Upsilon \lambda T \left(\frac{4\norm{b}_1}{\epsilon}\right)^{\frac{1}{\sym m}}\right\rceil.
    \end{equation}
    On the other hand, for long times $(a_\mathrm{max} \Upsilon \lambda T > 1)$, we choose
    \begin{equation}
        r_1 = \left\lceil (a_\mathrm{max} \Upsilon \lambda T)^{1 + \frac{1}{\sym m}\left\lceil \frac{\sym m}{p}\right\rceil}\left(\frac{4\norm{b}_1}{\epsilon}\right)^{\frac{1}{\sym m}}\right\rceil.
    \end{equation}
This gives the lemma statement.
\end{proof}

\noindent So far we have derived a set of bounds for the minimum number of Trotter steps required to reach a given error.
We now examine the asymptotic scaling of the parameters.

\begin{corollary}[Asymptotic Trotter Costs]\label{Corollary:Max_Trotter_Steps}
    Consider an $m$-term Richardson extrapolation of a time-evolved expectation value in the setting of the previous lemma, where $\cP$ is symmetric.
    Choose the Trotter step size according to~\cref{Lemma:General_Richardson_Extrapolation}, with $r_\mathrm{scale}$ large enough such that $r_1$ satisfies the "long time" condition of~\cref{Lemma:Sufficient_Trotter_Depth} for a choice of $m$ scaling as $O(\log(1/\epsilon))$.
    Then, the maximum number of Trotter steps that needs to be implemented scales as
    \begin{equation*}
        \max_k r_k = O\left((a_\mathrm{max} \Upsilon \lambda T)^{(1+1/p)}\log(1/\epsilon)\right).
    \end{equation*}
\end{corollary}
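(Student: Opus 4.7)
My plan is to combine \cref{Lemma:Sufficient_Trotter_Depth} in the long-time regime with the explicit Chebyshev-like sample points of \cref{Lemma:General_Richardson_Extrapolation}. Because $\cP$ is symmetric, $\sym = 2$, and the long-time case of \cref{Lemma:Sufficient_Trotter_Depth} guarantees that a minimum Trotter count
$$r_1 \geq (a_\mathrm{max}\Upsilon\lambda T)^{1 + \frac{1}{2m}\lceil 2m/p\rceil}\left(\frac{4\norm{b}_1}{\epsilon}\right)^{1/(2m)}$$
suffices for relative error $\epsilon$. The corollary then reduces to showing that, with $m = \Theta(\log(1/\epsilon))$, both the exponent on $a_\mathrm{max}\Upsilon\lambda T$ collapses to $1+1/p$ and the $\epsilon$-dependent prefactor collapses to $O(1)$, and then propagating the resulting bound on the smallest $r_k$ to a bound on the largest via the spacing of the sample points.

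First I would dispose of the prefactor. By \cref{Lemma:General_Richardson_Extrapolation}, $\norm{b}_1 = O(\log m)$. Taking $m = c\log(1/\epsilon)$ for a suitable constant $c$ gives
$$\left(\frac{4\norm{b}_1}{\epsilon}\right)^{1/(2m)} = \exp\!\left(\frac{\log(4\norm{b}_1) + \log(1/\epsilon)}{2m}\right) = e^{O(1)} = O(1).$$
For the exponent, the bound $\lceil 2m/p\rceil \leq 2m/p + 1$ yields $1 + \lceil 2m/p\rceil/(2m) \leq 1 + 1/p + 1/(2m)$, so the residual factor $(a_\mathrm{max}\Upsilon\lambda T)^{1/(2m)}$ is itself $O(1)$ provided $\log(a_\mathrm{max}\Upsilon\lambda T) = O(m)$, which is automatic in the polynomial-in-$1/\epsilon$ regime of usual interest. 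Combining these yields $r_1 = O\!\bigl((a_\mathrm{max}\Upsilon\lambda T)^{1+1/p}\bigr)$.

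Finally I would invoke the step-size bounds from \cref{eq:Trot_step_bounds}, namely $m \leq r_k/r_\mathrm{scale} \leq 3m^2$ for every $k$. Since the smallest sample point satisfies $r_1 \geq m\, r_\mathrm{scale}$, choosing $r_\mathrm{scale} = \lceil r_1/m\rceil$ is enough to realise the lower bound above, and then
$$\max_k r_k \leq 3m^2 r_\mathrm{scale} \leq 3m\, r_1 + 3m^2 = O\!\left(\log(1/\epsilon)\cdot (a_\mathrm{max}\Upsilon\lambda T)^{1+1/p}\right),$$
the $3m^2$ term being subleading in the long-time regime. The main subtlety, rather than a true obstacle, is ensuring that $(a_\mathrm{max}\Upsilon\lambda T)^{1/(2m)}$ does not smuggle a hidden $T$- or $\epsilon$-dependence into the base of the scaling: this is precisely why $m$ must grow logarithmically in $1/\epsilon$, and is where the constant implicit in $O$ acquires its dependence on $p$.
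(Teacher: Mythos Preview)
Your overall strategy matches the paper's, but there is one genuine gap in how you handle the exponent on $a_\mathrm{max}\Upsilon\lambda T$. You bound $\lceil 2m/p\rceil \leq 2m/p + 1$ and then assert that the residual factor $(a_\mathrm{max}\Upsilon\lambda T)^{1/(2m)}$ is $O(1)$ provided $\log(a_\mathrm{max}\Upsilon\lambda T) = O(m)$. But $T$ and $\epsilon$ are \emph{independent} parameters in the corollary: for fixed $\epsilon$ (hence fixed $m$) and growing $T$, this factor diverges, so you do not recover the clean $(a_\mathrm{max}\Upsilon\lambda T)^{1+1/p}$ scaling. Your remark that this is ``automatic in the polynomial-in-$1/\epsilon$ regime of usual interest'' quietly couples $T$ to $\epsilon$ in a way the statement does not assume, and your closing sentence misdiagnoses the issue as being about the growth rate of $m$ in $1/\epsilon$ rather than about the decoupling of $m$ from $T$.

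The paper sidesteps this entirely by taking $m = p\lceil \log(1/\epsilon)\rceil$, i.e.\ a \emph{multiple of $p$}, so that $\frac{1}{2m}\lceil 2m/p\rceil = 1/p$ exactly and the exponent is $1+1/p$ with no residual $T$-dependent factor at all. This is a one-line fix to your argument. The remainder of your proposal --- showing $(4\norm{b}_1/\epsilon)^{1/(2m)} = O(1)$ from $\norm{b}_1 = O(\log m)$, and passing from $r_1$ to $\max_k r_k$ via the ratio bound $\max_k r_k \leq 3m\, r_1$ implied by \cref{eq:Trot_step_bounds} --- is exactly what the paper does.
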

\begin{proof}
By satisfying the conditions of~\cref{Lemma:Sufficient_Trotter_Depth}, the extrapolation scheme achieves a relative error $\epsilon$, in the sense stated in that lemma. Since $\norm{b}_1=O(\log m)$, $\norm{b}_1^{1/2m}=O(1)$. Choose $m= p \left\lceil\log\left( \frac{1}{\epsilon}\right) \right\rceil $. Then,
\begin{align}
    \left(\frac{4}{\epsilon} \right)^{\frac{1}{2p\lceil \log(1/\epsilon)\rceil}} = O(1).
\end{align}
Moreover, since $m$ is a multiple of $p$, $\frac{1}{\sym m}\lceil \frac{\sym m}{p} \rceil=\frac{1}{p}$.
Putting these into~\cref{Lemma:Sufficient_Trotter_Depth},
\begin{align}
    r_1 = O\left((a_\mathrm{max} \Upsilon \lambda T)^{(1+1/p)}\right)
\end{align}
is the minimum number of Trotter steps. To obtain an upper bound on the maximum number of steps, we utilise the bounds~\cref{eq:Trot_step_bounds} to obtain
\begin{align}
\begin{aligned}
    r_k &\leq 3 m r_1 \\
    &= O\left((a_\mathrm{max} \Upsilon \lambda T)^{(1+1/p)} m\right)
\end{aligned}
\end{align}
given our choice of $m$, this yields the scaling stated in the corollary.
\end{proof}
\noindent We remark that, in the above proof, the choice to make $m$ a multiple of $p$ is mainly for simplicity of proof. For our purposes, we treat $p$ as fixed and not scaling with the simulation parameters.

\subsection{Resource Estimates for Richardson Extrapolation}
\label{Sec:Resource_Estimates_Richardson}

So far, we have determined error bounds for the Richardson procedure and provided partial results on the resources required. 
In this section, we will derive full resource costs. We examine two metrics: the maximum circuit depth $D_\mathrm{max}$ of Trotter steps needed, and the total number $\Cexp$ of Trotter steps required.
The former is arguably the most relevant metric for NISQ-era devices where only short-depth circuits are possible to implement, whereas $\Cexp$ is more relevant for fault tolerant devices.
We note that $\Cexp$ is proportional to the total number of elementary operations required for the protocol.

There are two primary sources of error we consider: the extrapolation error (associated with the Richardson extrapolation procedure) and the error associated with the measurement protocol (e.g., shot noise). Notably, we fully neglect "physical" errors such as gate imperfections or decoherence.
We suppose the estimates for the function $f$ at points $\{s_1\}_{k=1}^m$ are given by $\tilde{f}(s_k)$, such that the final estimate we have is $\tF^{(m)}(s) = \sum_{k=1}^m b_k \tilde{f}(s_k)$.
The error in our final prediction is $\epsilon = |f_B(0) - \tF^{(m)}(s)|$ which can be broken down as
\begin{align}
    |f_B(0) - \tilde{F}^{(m)}(s)| &\leq |f_B(0) - F^{(m)}(s)| + |F^{(m)}(s) - \tilde{F}^{(m)}(s)|  \nonumber \\
     &\leq |f_B(0) - F^{(m)}(s)| + \left|\sum_{k=0}^{m-1} b_k f(s_k) - \sum_{k=0}^{m-1} b_k \tilde{f}(s_k) \right|  \nonumber \\
    &\leq \norm{O}(\epsint + \norm{b}_1\epsdata), \label{Eq:Error_Decomposition}
\end{align}
where  $\epsint$ is the (relative) interpolation error and $ |f(s_k) - \tilde{f}(s_k)| \leq \norm{O}\epsdata$ is the maximum (relative) error associated with each individual measurement point.
To satisfy a total relative error tolerance $\epsilon$, it thus suffices to ensure that
 \begin{equation} \label{eq:Error_Conditions}
    \epsint \leq \frac{\epsilon}{2}, \quad \quad \epsdata \leq \frac{\epsilon}{2\norm{b}_1}.
\end{equation}
The resources required to satisfy the first of these inequalities is essentially the content of~\cref{Corollary:Max_Trotter_Steps}, because the factor of $1/2$ will not affect the asymptotics. In the following subsections, we look more closely at the resources needed to have sufficiently small error in the data, then obtain an overall cost bound.

\subsubsection{Incoherent Measurements} \label{Sec:Resource_Costs_Incoherent}

Within the incoherent scheme, let's consider how many measurements are needed to ensure $\epsdata \leq \epsilon/(2\norm{b}_1)$. From Hoeffding's inequality, we see that to achieve an estimate $\expval{O}'$ satisfying 
\begin{equation}
    |\tr[\cP^{\dagger 1/s}(sT)O\cP^{1/s}(sT)\rho_0]- \expval{O}'|\leq \epsdata \norm{O}
\end{equation}
with probability $\geq (1-\delta')$, it suffices to choose a number of samples $N$ satisfying
\begin{align} \label{eq:Hoeffding_incoherent}
    N \geq \frac{1}{2\epsdata^2}\log\left(\frac{2}{\delta'}\right).
\end{align}
By the union bound, it suffices to choose $\delta' = \delta/m$ to have an overall success probability of $1-\delta$ for all measurements. Taking $m = O(\log(1/\epsilon))$ from~\cref{Corollary:Max_Trotter_Steps}, we have that $N$ scales as
\begin{equation} \label{eq:N_scaling}
    N = O\left(\frac{\norm{b}_1^2}{\epsilon^2} \log\left(\frac{m}{\delta}\right)\right) = O\left(\frac{(\log m)^3}{\epsilon^2 }\right) = O\left(\frac{(\log\log 1/\epsilon)^3}{\epsilon^2}\right).
\end{equation}

Each measurement requires one product formula evolution. The maximum Trotter step size is given in \cref{Corollary:Max_Trotter_Steps}, and this directly gives the maximum Trotter depth. Meanwhile, the total number of Trotter steps required (which is proportional to the total resources) scales as
\begin{align}
\begin{aligned}
     \Cexp &\leq N  \sum_{k=1}^m r_k  \\
     &\leq N r_1 \sum_{k=1}^m (r_k/r_1) \\
     &\leq N r_1 \frac{1}{m}\sum_{k=1}^m \left\lceil \frac{R}{\sin(\pi(2k+1)/8m)}  \right\rceil
\end{aligned}
\end{align}
where we have used that $1/r_1 \leq 1/(r_\mathrm{scale} m)$ from~\cref{eq:Trot_step_bounds}. Next, we have
\begin{align}
\begin{aligned}
    &\leq N r_1 \frac{1}{m}\sum_{k=1}^m  \left( \frac{R}{\sin\left(\frac{\pi(2k+1)}{8m}\right)} +1\right) \label{Eq:Sum_Bound} \\
    &\leq N r_1  \left( \sum_{k=1}^{m}  \frac{8 R}{\pi(2k+1)} + 1\right)  \\
    &\leq  N r_1 \left(R \frac{4}{\pi} (2+ \log m) + 1\right) \\
    &= O(N(a_\mathrm{max} \Upsilon \lambda T)^{1+1/p} m \log(m)).
\end{aligned}
\end{align}
Using the scaling from~\cref{eq:N_scaling}, and $m = O(\log(1/\epsilon))$, we arrive at our main result concerning the use of Richardson extrapolation with incoherent measurements.
\begin{theorem}[Resource Costs for Incoherent Measurements]
Let $\expval{O_{p,m}(T)}$ be the $m$-term Richardson extrapolation estimate for $\expval{O(T)}$, taken by varying the step size of a $p^{th}$-order staged product formula, with samples taken at the rescaled Chebyshev nodes as specified in \cref{Eq:Step_Size}. 
The resource costs for computing this estimate such that 
\begin{align*}
    |\expval{O(T)} - \expval{O_{p,m}(T)}|\leq \epsilon\norm{O},
\end{align*}
when using incoherent measurements uses a number of sample points $m = O(1/\epsilon)$. 
Moreover, the maximum Trotter depth and total Trotter steps scales as
\begin{equation*}
    D_\mathrm{max} = O\left( (a_\mathrm{max} \Upsilon \lambda T)^{1+1/p}\log( 1/\epsilon) \right),\qquad C_\mathrm{Trot} = O\left( \frac{(a_\mathrm{max} \Upsilon \lambda T)^{1+1/p}}{\epsilon^2}\log(1/\epsilon) (\log\log(1/\epsilon))^4\right).
\end{equation*}
\noindent Here $\lambda\leq 4 \sum_\gamma \norm{H_\gamma}$ is defined in \cref{Lemma:Sufficient_Trotter_Depth} and there is a failure probability of $\delta = 0.01$.
\end{theorem}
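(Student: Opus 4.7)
The plan is to assemble this theorem as a synthesis of the error decomposition in \cref{Eq:Error_Decomposition}, the interpolation bound from \cref{Corollary:Max_Trotter_Steps}, and a Hoeffding-plus-union-bound argument for the incoherent measurement noise. The key observation is that the total error splits cleanly as $\epsilon \leq \epsint + \norm{b}_1 \epsdata$, so it suffices to ensure each piece is at most $\epsilon/2$ and then count resources for each budget independently. I would first fix the budget \cref{eq:Error_Conditions} and then handle the two contributions.

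For the interpolation budget $\epsint \leq \epsilon/2$, the plan is to directly invoke \cref{Corollary:Max_Trotter_Steps} with the choice $m = p\lceil \log(1/\epsilon) \rceil = O(\log(1/\epsilon))$. This choice ensures that the relative interpolation error bound from \cref{Lemma:Sufficient_Trotter_Depth} falls below $\epsilon/2$ (the constant $1/2$ is absorbed in asymptotic notation). The Chebyshev-node choice \cref{Eq:Step_Size} gives the well-conditioned $\norm{b}_1 = O(\log m) = O(\log\log(1/\epsilon))$ from \cref{Lemma:General_Richardson_Extrapolation}, and the corollary then yields the maximum Trotter depth $D_\mathrm{max} = O\bigl((a_\mathrm{max}\Upsilon \lambda T)^{1+1/p}\log(1/\epsilon)\bigr)$, which is exactly the first resource bound stated.

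For the data budget $\epsdata \leq \epsilon/(2\norm{b}_1)$, I would apply Hoeffding's inequality \cref{eq:Hoeffding_incoherent} at each of the $m$ sample points, choosing the per-point failure probability $\delta' = \delta/m$ so that a union bound yields overall success probability $1 - \delta$. Fixing $\delta = 0.01$ and substituting $\norm{b}_1 = O(\log\log(1/\epsilon))$, the required number of shots per sample is $N = O\bigl(\epsdata^{-2}\log(m/\delta)\bigr) = O\bigl((\log\log(1/\epsilon))^3/\epsilon^2\bigr)$ as in \cref{eq:N_scaling}.

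Finally, for the total gate count, I would sum $N$ copies of each circuit over the $m$ sample points: $\Cexp \leq N\sum_{k=1}^m r_k$. Following the chain of inequalities already set up in \cref{Eq:Sum_Bound}, the Chebyshev structure of the $r_k$ gives $\sum_k r_k = O(r_1 \log m)$, and inserting $r_1 = O\bigl((a_\mathrm{max}\Upsilon \lambda T)^{1+1/p}\bigr)$ along with the value of $N$ and $m$ yields
\begin{equation*}
\Cexp = O\left(\frac{(a_\mathrm{max}\Upsilon \lambda T)^{1+1/p}}{\epsilon^2} \log(1/\epsilon)\,(\log\log(1/\epsilon))^4\right),
\end{equation*}
where one $\log\log$ comes from $\log m$ in the Chebyshev sum and the other three from $\norm{b}_1^2 \log(m/\delta)$ in $N$. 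The main piece of bookkeeping, and the one most prone to error, is the careful tracking of $\log\log$ factors — in particular making sure that $\norm{b}_1$ appears squared in $N$ (once from $\epsdata \propto 1/\norm{b}_1$ in Hoeffding and once from the Chebyshev-sum factor), so that the exponent on $\log\log(1/\epsilon)$ is $4$ rather than $3$ or $5$. Everything else is a direct substitution from results already established in the excerpt.
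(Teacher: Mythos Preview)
Your approach is identical to the paper's: split the error via \cref{Eq:Error_Decomposition}, handle $\epsint$ with \cref{Corollary:Max_Trotter_Steps}, handle $\epsdata$ with Hoeffding plus a union bound, and sum $N\sum_k r_k$ using the Chebyshev structure from \cref{Eq:Sum_Bound}. One bookkeeping slip to fix: you claim the Chebyshev structure gives $\sum_k r_k = O(r_1 \log m)$, but following \cref{Eq:Sum_Bound} actually yields $\sum_k r_k = O(r_1\, m\log m)$ (recall $R = \sqrt{8}m/\pi = O(m)$ there, and trivially $\sum_k r_k \geq m\,r_1$ since $r_1$ is the minimum). That missing factor of $m = O(\log(1/\epsilon))$ is exactly the $\log(1/\epsilon)$ appearing in the final $\Cexp$ bound you write down --- so your stated conclusion is right, but the intermediate sentence would not produce it; also, your parenthetical says one factor of $\norm{b}_1$ in $N$ comes ``from the Chebyshev-sum factor'', which is not quite right: both factors of $\norm{b}_1$ come from $N\propto 1/\epsdata^2$ with $\epsdata \propto 1/\norm{b}_1$, while the Chebyshev sum contributes the separate $\log m$.
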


\subsubsection{Coherent Measurements} \label{Sec:Resource_Costs_Coherent}

Although incoherent measurements are conceptually simple and easy to implement, they give $O(1/\epsilon^2)$ scaling rather than the optimal Heisenberg scaling of $O(1/\epsilon)$.
In this section, we consider the Iterative Quantum Amplitude Estimation (IQAE) scheme developed by~\citeauthor{grinko2021iterative} for this problem. Although the method deviates by $O(\log\log 1/\epsilon)$ from the ideal Heisenberg scaling, it achieves better constant factors among rigorously-analysed methods with true Heisenberg scaling that do not require standard Quantum Phase Estimation (QPE), a relatively intensive routine with large qubit overhead~\cite{grinko2021iterative}. Thus, the method represents a practical yet rigorous algorithm suitable for our purposes. We follow a similar analysis to~\cite{rendon2024improved}. 

We assume the problem of measuring the expectation value can be written as the problem of estimating an amplitude, as is often done in practice via, say, the Hadamard test.
Assuming $\norm{O}\leq 1$ (if otherwise, we can rescale it), then performing a Hadamard test gives an amplitude $\frac{1+\expval{O_p(T,s)}}{2}$.
This amplitude can now be estimated using IQAE.
In particular, for each sample point $s_k$, we need a number of calls to a Grover oracle $N_G$ scaling as \cite[Section E]{rendon2024improved}
\begin{align}
    N_G \leq \frac{100}{\epsdata}\log\left(\frac{2m}{\delta}\log\left(\frac{\pi}{\epsdata}\right) \right)
\end{align}
where $\delta/m$ is the probability of failure per data point, which by the union bound ensures an overall success rate of $1-\delta$. For a particular value of Trotter step $s$, we need a total Trotter depth scaling as $N_G/s$. The maximum Trotter step size needed to compute $F^{(m)}(s)$ is $\max_k r_k$, which is characterised asymptotically in~\cref{Corollary:Max_Trotter_Steps}. Thus, the maximum Trotter depth is
\begin{align*}
    \Dmax &\leq N_G \max_k r_k \\
    &=  O\left((a_\mathrm{max} \Upsilon \lambda T)^{1+1/p} \log(1/\epsilon) \frac{\norm{b}_1}{\epsilon} \log\left( \frac{\log(1/\epsilon)}{\delta} \log \bigg( \frac{\norm{b}_1}{\epsilon} \bigg) \right)\right) \\
    &= O\left( \frac{(a_\mathrm{max} \Upsilon \lambda T)^{1+1/p}}{\epsilon}\log(1/\epsilon) (\log\log(1/\epsilon))^2\right)
\end{align*}
Meanwhile, the total number of Trotter steps, $\Cexp$, is bounded as
\begin{equation}
    \Cexp \leq N_G \sum_{k=1}^m r_k.
\end{equation}
Using the same reasoning as in~\cref{Eq:Sum_Bound}, with $N$ replaced by $N_G$, we obtain
\begin{align}
\begin{aligned}
    \Cexp &= O\left(N_G (a_\mathrm{max} \Upsilon \lambda T)^{1+1/p} m \log m\right)
    \\
    &=O\left(\frac{(a_\mathrm{max} \Upsilon \lambda T)^{1+1/p}}{\epsilon} \log(1/\epsilon) (\log\log 
 (1/\epsilon))^3\right).
\end{aligned}
\end{align}
We summarise the results of the above analysis in our main result for Richardson extrapolation using coherent measurements.
\begin{theorem}[Resource Costs for Coherent Measurements]
Let $\expval{O_{p,m}(T)}$ be the $m$-term Richardson extrapolation estimate for $\expval{O(T)}$, taken by varying the step size of a $p^{th}$-order symmetric staged product formula, with samples taken at the rescaled Chebyshev nodes as specified in \cref{Eq:Step_Size}. 
The resource costs for computing this estimate such that 
\begin{align*}
    |\expval{O(T)} - \expval{O_{p,m}(T)}|\leq \epsilon\norm{O},
\end{align*}
when using coherent measurements is given by
\begin{equation*}
    D_\mathrm{max} = O\left( \frac{(a_\mathrm{max} \Upsilon \lambda T)^{1+1/p}}{\epsilon}\log( 1/\epsilon) (\log\log(1/\epsilon))^2\right), \qquad C_\mathrm{Trot} = O\left( \frac{(a_\mathrm{max} \Upsilon \lambda T)^{1+1/p}}{\epsilon}\log(1/\epsilon)(\log\log(1/\epsilon))^3\right)
\end{equation*}
\noindent where $\lambda \leq \sum_\gamma \norm{H_\gamma}$ is defined in \cref{Lemma:Sufficient_Trotter_Depth} and there is a failure probability of $\delta = 0.01$. Moreover, $m = O(\log(1/\epsilon))$.
\end{theorem}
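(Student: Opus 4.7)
The plan is to combine the extrapolation error bound of \cref{Corollary:Max_Trotter_Steps} with the per-sample query complexity of iterative quantum amplitude estimation, tracked through the error decomposition of \cref{Eq:Error_Decomposition}. That decomposition splits the total error into an interpolation piece $\epsint$ (the Richardson error assuming exact data) and a measurement piece $\norm{b}_1 \epsdata$, so the cleanest path is to enforce $\epsint \leq \epsilon/2$ and $\epsdata \leq \epsilon/(2\norm{b}_1)$ separately, then multiply the per-sample circuit depth by the per-sample query count and sum appropriately over the $m$ Chebyshev nodes.

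The extrapolation half is essentially in hand: I would take $m = O(\log(1/\epsilon))$ to be a multiple of $p$ so that \cref{Corollary:Max_Trotter_Steps} applies, delivering $\max_k r_k = O\bigl((a_\mathrm{max}\Upsilon\lambda T)^{1+1/p}\log(1/\epsilon)\bigr)$ and simultaneously $\epsint \leq \epsilon/2$. The well-conditioning guarantee of \cref{Lemma:General_Richardson_Extrapolation} then gives $\norm{b}_1 = O(\log m) = O(\log\log(1/\epsilon))$, so the per-sample precision target becomes $\epsdata = \Theta\bigl(\epsilon/\log\log(1/\epsilon)\bigr)$.

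For the measurement half, I would reduce each expectation value to amplitude estimation (e.g.\ by a Hadamard test that embeds $\expval{\tilde{O}(T,s_k)}$ as the amplitude $\tfrac{1}{2}(1+\expval{\tilde{O}(T,s_k)})$ after rescaling so $\norm{O}\leq 1$) and invoke the IQAE protocol of \citeauthor{grinko2021iterative}. Allocating a per-sample failure budget $\delta/m$ and union-bounding over the $m$ samples yields overall success probability $1-\delta$; the Grover-oracle query count per sample is
\begin{equation*}
    N_G = O\!\left(\frac{1}{\epsdata}\log\!\left(\frac{m}{\delta}\log\frac{1}{\epsdata}\right)\right) = O\!\left(\frac{(\log\log(1/\epsilon))^2}{\epsilon}\right),
\end{equation*}
and each Grover iterate costs one forward plus one inverse product formula evolution of depth at most $\max_k r_k$.

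Assembly is then mechanical: $\Dmax \leq N_G \cdot \max_k r_k$ yields the claimed maximum-depth scaling, and $\Cexp \leq N_G \sum_{k=1}^m r_k$, with the sum bounded using the $1/\sin$ estimate from \cref{Eq:Sum_Bound} (which gives $\sum_k r_k = O(r_1 m \log m)$, contributing an extra $\log m = \log\log(1/\epsilon)$ factor beyond $\max_k r_k$), yields the claimed total-cost scaling. The main bookkeeping hazard, and the only real obstacle, is tracking the nested $\log\log$ factors consistently: one arises from $\norm{b}_1$ inflating $1/\epsdata$ and one from the outer $\log$ inside IQAE's sample complexity, together producing the $(\log\log)^2$ in $\Dmax$, while the third $\log\log$ that distinguishes $\Cexp$ comes from averaging the Chebyshev $r_k$ across the $m$ sample points.
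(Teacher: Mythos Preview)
Your proposal is correct and mirrors the paper's argument essentially step for step: the same error split via \cref{Eq:Error_Decomposition} and \cref{eq:Error_Conditions}, the same invocation of \cref{Corollary:Max_Trotter_Steps} for the extrapolation half, the same Hadamard-test reduction to IQAE with per-sample failure $\delta/m$, and the same assembly of $\Dmax \leq N_G \max_k r_k$ and $\Cexp \leq N_G \sum_k r_k$ using the $1/\sin$ bound of \cref{Eq:Sum_Bound}. Your accounting of the three $\log\log$ factors (one from $\norm{b}_1$ inside $1/\epsdata$, one from the outer logarithm in IQAE, one from summing $r_k$ over the Chebyshev nodes) is exactly the paper's.
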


\section{Polynomial Interpolation}
\label{sec:polynomial-interpolation}

We now consider the polynomial interpolation algorithm of~\citeauthor{rendon2024improved} for mitigating Trotter errors. Let $f(s)\in C^m[-\width,\width]$ be real-valued function, and suppose we have the value of $f$ at points $s_1, s_2, \dots s_m$. 
Let $P_{m-1} f$ be the unique degree $m-1$ polynomial interpolating $f$ at the $s_k$. It is possible to show~\cite{quarteroni2010numerical} that the error of the approximating polynomial in the interval $[-\width,\width]$ is bounded as
\begin{align*}
    |f(s) - P_{m-1} f(s)| \leq \max_{\xi \in [-\width,\width]} \frac{|f^{(m)}(\xi)|}{m!} |\omega_m(s)|,
\end{align*}
where 
\begin{align*}
    \omega_m(s) \coloneqq \prod^m_{k=1}(s-s_k) 
\end{align*}
is the monic nodal polynomial. If we choose our samples to be taken at the Chebyshev nodes on $[-\width,\width]$, given by
\begin{align}\label{Eq:Standard_Chebyshev_Nodes}
    s_i = \width \cos \left( \frac{2i-1}{2m}\pi \right), \quad \quad i\in \{1,2,\dots , m\},
\end{align}
then the interpolation satisfies a number of nice properties, such as robustness to errors in the interpolation values.
As in the Richardson extrapolation case, we are interesting in bounding the error at $s=0$.
To do so, we can use the following lemma.
\begin{lemma}[Lemma 2, \cite{rendon2024improved}]\label{Lemma:Polynomial_Approx_Error}
    Let $s_1, s_2, \ldots, s_m$ be the collection of Chebyshev interpolation points on the interval $[-\width,\width]$.
    Then the error of the interpolating polynomial $P_{m-1} f$ with respect to $f$ at $s=0$ is bounded as
    \begin{equation*}
        \abs{f(0) - P_{m-1}f(0)} \leq \max_{s \in [-\width,\width]} \abs{f^{(m)}(s)} \left(\frac{\width}{2m}\right)^m.
\end{equation*}
\end{lemma}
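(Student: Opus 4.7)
The plan is to start from the classical Lagrange remainder for polynomial interpolation and exploit the explicit structure of the Chebyshev nodes to bound the nodal polynomial at the evaluation point $s=0$. Concretely, since $f \in C^m([-\width,\width])$ and $P_{m-1}f$ is the unique polynomial of degree at most $m-1$ agreeing with $f$ at $s_1,\ldots,s_m$, there exists $\xi \in [-\width,\width]$ with
\[
    f(s) - P_{m-1}f(s) \;=\; \frac{f^{(m)}(\xi)}{m!}\,\omega_m(s), \qquad \omega_m(s) \coloneqq \prod_{k=1}^m (s - s_k).
\]
Setting $s=0$ and passing to absolute values reduces the proof to controlling the scalar $|\omega_m(0)|/m!$.

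For the nodal polynomial I would use that the Chebyshev nodes $s_k = \width \cos((2k-1)\pi/(2m))$ are exactly the zeros of $T_m(s/\width)$, where $T_m$ is the degree-$m$ Chebyshev polynomial of the first kind. Since $T_m$ has leading coefficient $2^{m-1}$, this yields the clean identity $\omega_m(s) = (\width^m/2^{m-1})\, T_m(s/\width)$, which combined with the uniform bound $|T_m(x)| \le 1$ on $[-1,1]$ gives $|\omega_m(0)| \le \width^m/2^{m-1}$. When $m$ is odd the point $0$ is itself a node and $|\omega_m(0)| = 0$, so the non-trivial case is even $m$. Substituting back produces
\[
    |f(0) - P_{m-1}f(0)| \;\le\; \frac{\max_{\xi \in [-\width,\width]}|f^{(m)}(\xi)|}{m!\cdot 2^{m-1}}\,\width^m.
\]

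It remains to convert the prefactor $\width^m/(2^{m-1}m!)$ into the advertised form $(\width/(2m))^m$, and this is the step I expect to be the main obstacle. A Stirling lower bound such as $m! \ge \sqrt{2\pi m}(m/e)^m$ turns $1/m!$ into $(e/m)^m$ up to a factor $1/\sqrt{2\pi m}$, producing an estimate of shape $\tfrac{2}{\sqrt{2\pi m}}(e\width/(2m))^m$, which is still loose by a factor of roughly $e^m$ relative to the target; indeed, the naive comparison $\width^m/(2^{m-1}m!) \le (\width/(2m))^m$ would require $m! \ge 2m^m$, which fails since $m! \le m^m$. Closing this gap will require either absorbing the extra constants into an implicit leading factor, or replacing the uniform bound $|T_m|\le 1$ by a sharper Chebyshev-specific analysis of the Lagrange remainder at $s=0$ along the lines of~\cite{rendon2024improved}. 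Once this final conversion is reconciled, the lemma follows from the chain of inequalities above.
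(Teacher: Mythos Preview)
Your approach is exactly the standard one, and the paper does not supply its own proof of this lemma: it is quoted verbatim from~\cite{rendon2024improved} without argument. So there is nothing in the paper to compare against, and the relevant question is simply whether your derivation can be completed.

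It cannot, and you have in fact located a genuine error in the lemma as stated. Your chain
\[
    |f(0) - P_{m-1}f(0)| \;\le\; \frac{\max_{\xi}|f^{(m)}(\xi)|}{m!}\,\frac{\width^m}{2^{m-1}}
\]
is correct and \emph{sharp} for even $m$, since $\omega_m(s) = (\width^m/2^{m-1})T_m(s/\width)$ and $|T_m(0)|=1$ when $m$ is even. The conversion to $(\width/(2m))^m$ would require $(2m)^m \le 2^{m-1} m!$, i.e.\ $2m^m \le m!$, which is false for every $m\ge 2$. Concretely, take $m=2$, $\width=1$, $f(s)=s^2$: the Chebyshev nodes are $\pm 1/\sqrt{2}$, the linear interpolant is the constant $1/2$, so the left-hand side equals $1/2$, while the right-hand side is $2\cdot(1/4)^2 = 1/8$. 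The inequality fails.

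The good news is that the bound you actually derived still delivers what the paper needs. In the application (Theorem~\ref{Theorem:Polynomial_Interpolation_Error}) one has $\max|f^{(m)}| \le 4\norm{O}\,m!/\width^m$, and your version gives
\[
    |f(0)-P_{m-1}f(0)| \;\le\; \frac{4\norm{O}\,m!/\width^m}{m!}\cdot\frac{\width^m}{2^{m-1}} \;=\; \frac{8\norm{O}}{2^m},
\]
which is already $c\,e^{-\gamma m}$ with $\gamma=\ln 2$. So the downstream conclusions survive with a smaller decay rate; it is only the stated form of the lemma, and the specific constant $\gamma>1.5$ claimed afterwards, that are affected.
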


\subsection{Application to Time Evolved Observables}
\label{Sec:Sec:Polynomial_Interpolation_Observables}
From \cref{Lemma:Polynomial_Approx_Error}, we see that the key property in bounding the error to polynomial interpolation is to bound the derivatives as a function of $s$.
We start by bounding the derivatives at the origin.

\begin{lemma} \label{Lemma:Derivative_Bound}
    Under the assumptions and notation of~\cref{Lemma:Exact_Error_Form}, consider an observable $O$ time evolved for time $T$ under a $p^{th}$-order staged product formulae with step size $sT$, denoted $\tilde{O}(T,s)$. Then $\tilde{O}(T,s)$ is analytic in $s$ within a neighbourhood of the origin, and the derivatives are given by
    \begin{align*}
        \partial_s^j \tilde{O}(T,0) = j! \tilde{E}_{j+1,K}(T)(O)
    \end{align*}
    for any choice of $K>\lceil j/p\rceil$. In particular, for any such $K$, $\tilde{E}_{j+1} \equiv \tilde{E}_{j+1, K}$ is the $j$th coefficient in the Taylor series for $\tilde{O}(T,s)$ at $s = 0$, and is independent of $K$.
\end{lemma}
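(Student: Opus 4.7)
The plan is to combine the finite series expansion from \cref{Lemma:Exact_Error_Form} with analyticity of $\tilde O(T,s)$ near $s=0$, and then read off Taylor coefficients by matching.

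First, I would establish analyticity. Under the convergence hypothesis inherited from \cref{Lemma:Exact_Error_Form}, the effective Hamiltonian $\Heff(sT) = H + \sum_{j\geq p} E_{j+1}(sT)^j$ is an absolutely convergent power series in $s$ on a neighborhood of the origin (by \cref{Lemma:Effective_Hamiltonian_Error_Series}). Composing with the matrix exponential and the bounded conjugation action $O \mapsto e^{iT\ad_{\Heff(sT)}}(O)$ yields analyticity of $\tilde O(T,s)$ in $s$ on the same neighborhood, which handles the first claim of the lemma.

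Next, for an arbitrary $K \in \mathbb{Z}_+$, \cref{Lemma:Exact_Error_Form} gives
\begin{equation*}
   \tilde O(T,s) = O(T) + \sum_{j \in \sym\mathbb{Z}_+ \geq p} s^{j}\, \tilde E_{j+1,K}(T)(O) + \tilde F_K(T,s)(O).
\end{equation*}
Inspecting the explicit formula \cref{eq:tilde_E_explicit}, the cap $\min\{K-1,\lfloor j/p\rfloor\}$ on the outer sum defining $\tilde E_{j+1,K}(T)$ saturates at $\lfloor j/p\rfloor$ precisely when $K > \lceil j/p\rceil$, so in that regime $\tilde E_{j+1,K}(T)$ is manifestly independent of $K$. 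This alone justifies writing $\tilde E_{j+1}(T)$ unambiguously. The norm bound on $\tilde F_K(T,s)$ supplied by \cref{Lemma:Exact_Error_Form} is a power series in $s$ starting at $s^{Kp}$; since $\tilde F_K(T,s)(O)$ equals the analytic function $\tilde O(T,s)$ minus a finite polynomial in $s$, it too is analytic at $0$, and the $O(s^{Kp})$ asymptotic bound then forces every Taylor coefficient of order strictly less than $Kp$ to vanish.

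Finally, for any $j$ with $K > \lceil j/p\rceil$, we have $Kp > j$, so $\partial_s^j \tilde F_K(T,0)(O) = 0$. Differentiating the finite polynomial part $j$ times at $s=0$ simply picks out $j!$ times the $s^j$ coefficient, giving $\partial_s^j \tilde O(T,0) = j!\,\tilde E_{j+1,K}(T)(O)$ when $j \in \sym\mathbb{Z}_+ \geq p$; when $j < p$, or $j$ is odd in the symmetric case, both sides vanish, using \cref{Lemma:Zero_Error_Terms} and the convention that $\tilde E_{j+1,K}$ is zero off the allowed index set. The main obstacle is the soft point about $\tilde F_K$: the norm bound controls only the asymptotic magnitude, and one must separately invoke analyticity (obtained by subtracting a polynomial from the already-analytic $\tilde O(T,s)$) to upgrade the $O(s^{Kp})$ estimate into the statement that all lower-order Taylor coefficients of $\tilde F_K$ actually vanish.
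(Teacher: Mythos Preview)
Your proposal is correct and follows essentially the same line as the paper's proof: analyticity via the convergent BCH series for $\Heff$, the expansion from \cref{Lemma:Exact_Error_Form}, the observation that $\tilde F_K$ is $O(s^{Kp})$, and then reading off Taylor coefficients. The one noteworthy variation is that you establish $K$-independence of $\tilde E_{j+1,K}$ directly from the explicit formula \cref{eq:tilde_E_explicit} (noting the $\min$ saturates), whereas the paper deduces it from uniqueness of Taylor coefficients and only mentions your route as an aside after the proof; your extra care in arguing that $\tilde F_K$ is itself analytic (as a difference of analytic functions) before upgrading the $O(s^{Kp})$ norm bound to vanishing of lower Taylor coefficients is a point the paper leaves implicit.
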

\begin{proof}
    First, $\tilde{O}(T,s)$ is an analytic function of $s$ in a neighbourhood of the origin, being an exponential of the effective Hamiltonian, which is analytic by~\cref{Lemma:Effective_Hamiltonian_Error_Series}. For convenience, we reproduce the primary equation of~\cref{Lemma:Exact_Error_Form}.
    \begin{equation} 
        \tilde{O}(T,s) = O(T) + \sum_{j\in \mathbb{Z}_+\geq p} s^j \tilde{E}_{j+1,K}(T)(O) + \tilde{F}_K(T,s)(O)
    \end{equation}
    The term $\tilde{F}_p(T,s)(O)$ is of order at least $s^{pK}$.
    Thus, choosing $K>\lceil j/p \rceil$ means that the Taylor coefficient of order $s^j$ is simply $\tilde{E}_{j+1,K}(T)(O)$. Since the Taylor coefficient of $\tilde{O}$ is $K$-independent, the $\tilde{E}_{j+1,K}$ are in fact $K$-independent for all such choice of $K$.
\end{proof}
\noindent We emphasise in passing that the independence of $\tilde{E}_{j+1,K}$ on $K$ for $K > \lceil j/p\rceil$ can be verified from the explicit expression for $\tilde{E}_{j+1,K}$ in~\cref{eq:tilde_E_explicit}. 

From the bounds previously derived in~\cref{Lemma:Exact_Error_Form} we can straightforwardly obtain a bound on the derivatives in terms of fundamental simulation parameters.
\begin{lemma}\label{Lemma:Derivative_Bounds}
    In the notation of~\cref{Lemma:Derivative_Bound}, for any $j \geq p$, we have
    \begin{equation*}
        \norm{\tilde{E}_{j+1}(T)} \leq 2 (a_\mathrm{max }\Upsilon \lambda T)^{j+1}.
    \end{equation*}
    for "short times" $\amax\Upsilon\lambda T \leq 1$ and 
    \begin{equation*}
        \norm{\tilde{E}_{j+1}(T)} \leq 2  (a_\mathrm{max }\Upsilon \lambda T)^{j(1+1/p)}
    \end{equation*}
    for "long times" $\amax\Upsilon\lambda T > 1$.
\end{lemma}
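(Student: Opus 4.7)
The plan is to bound $\norm{\tilde{E}_{j+1}(T)}$ by leveraging the explicit bound established in \cref{Lemma:Exact_Error_Form} together with the uniform bound $\lambda$ on the coefficients $\lambda_{j,l}$ introduced in \cref{Sec:Richardson_Extrapolation_Observables}. By \cref{Lemma:Derivative_Bound}, the operator $\tilde{E}_{j+1,K}(T)$ is independent of $K$ provided $K > \lceil j/p \rceil$, so I am free to pick any such $K$, say $K = \lceil j/p \rceil + 1$. Then the minimum in the upper summation limit of \cref{Lemma:Exact_Error_Form} is simply $\lfloor j/p \rfloor$, and the bound specialises to
\begin{equation*}
    \norm{\tilde{E}_{j+1}(T)} \leq \sum_{l=1}^{\lfloor j/p \rfloor} \frac{(a_\mathrm{max}\Upsilon T\, \lambda_{j,l})^{j+l}}{l!}.
\end{equation*}
Substituting the uniform bound $\lambda_{j,l} \leq \lambda$ established just after \cref{Lemma:Sufficient_Trotter_Depth} yields
\begin{equation*}
    \norm{\tilde{E}_{j+1}(T)} \leq \sum_{l=1}^{\lfloor j/p \rfloor} \frac{(a_\mathrm{max}\Upsilon \lambda T)^{j+l}}{l!}.
\end{equation*}

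From here the argument splits into the two regimes based on whether the quantity $x \coloneqq a_\mathrm{max}\Upsilon\lambda T$ is below or above unity. In the short-time regime $x \leq 1$, the factor $x^{j+l}$ is monotonically decreasing in $l$, hence maximised at $l = 1$, giving $x^{j+l} \leq x^{j+1}$. Pulling this out of the sum and bounding the residual factorial series by the full exponential,
\begin{equation*}
    \norm{\tilde{E}_{j+1}(T)} \leq x^{j+1} \sum_{l=1}^\infty \frac{1}{l!} = x^{j+1}(e-1) \leq 2\, (a_\mathrm{max}\Upsilon\lambda T)^{j+1},
\end{equation*}
which is the short-time claim.

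In the long-time regime $x > 1$, the factor $x^{j+l}$ is instead maximised at the largest admissible $l$, which is $\lfloor j/p \rfloor \leq j/p$. Thus $x^{j+l} \leq x^{j + j/p} = x^{j(1+1/p)}$ for every term in the sum, and applying the same $\sum 1/l! \leq e-1 < 2$ estimate yields
\begin{equation*}
    \norm{\tilde{E}_{j+1}(T)} \leq 2\, (a_\mathrm{max}\Upsilon \lambda T)^{j(1+1/p)},
\end{equation*}
as desired. No step looks particularly delicate: the main conceptual point is simply recognising that the worst-case $l$ flips between the two endpoints of the summation range as $x$ crosses $1$, which is precisely what produces the different exponents $j+1$ versus $j(1+1/p)$ in the two regimes.
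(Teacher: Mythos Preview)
Your proof is correct and follows essentially the same route as the paper: both choose $K > \lceil j/p \rceil$ so the sum runs to $\lfloor j/p \rfloor$, replace each $\lambda_{j,l}$ by the uniform bound $\lambda$, and then split on whether $x = a_\mathrm{max}\Upsilon\lambda T$ is at most or greater than $1$ to identify the extremal $l$ and bound the remaining $\sum 1/l!$ by $e-1 < 2$. The only cosmetic difference is that you articulate the ``worst-case $l$ flips between endpoints'' reasoning explicitly, whereas the paper simply writes down the corresponding inequalities.
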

\begin{proof}
    For convenience we reproduce the bound on $\norm{\tilde{E}_{j+1,K}(T)}$ from \cref{Eq:Approx_Time_Evolution_High_Order}.
    \begin{equation}
        \norm{\tilde{E}_{j+1,K}(T)} \leq   (a_\mathrm{max}\Upsilon T)^j \sum_{l=1}^{\min\{K-1, \lfloor j/p \rfloor  \}}  \frac{(a_\mathrm{max}\Upsilon T)^l}{l!}\sum_{\substack{j_1\dots j_l\in \sym\mathbb{Z}_+\geq p \\ j_1+\dots+j_l=j}}  \left(\prod_{\kappa=1}^l 2 \frac{\acomm^{(j_\kappa + 1)}}{(j_\kappa + 1)^2}\right)
    \end{equation}
    Choosing $K>\lceil j/p \rceil$ and invoking the definition of $\lambda$,
    \begin{align}
          \norm{\tilde{E}_{j+1,K}(T)} &\leq   (a_\mathrm{max}\Upsilon T)^j \sum_{l=1}^{\lfloor j/p \rfloor}  \frac{(a_\mathrm{max}\Upsilon T)^l}{l!}\sum_{\substack{j_1\dots j_l\in \sym\mathbb{Z}_+\geq p \\ j_1+\dots+j_l=j}}  \left(\prod_{\kappa=1}^l 2 \frac{\acomm^{(j_\kappa + 1)}}{(j_\kappa + 1)^2}\right)
         \\
         &\leq 
         \sum_{l=1}^{\lfloor j/p \rfloor}  \frac{(a_\mathrm{max}\Upsilon T \lambda )^{j+l}}{l!}.
    \end{align}
    First, consider the case $\amax\Upsilon\lambda T \leq 1$. We have
    \begin{align}
    \begin{aligned}
        \norm{\tilde{E}_{j+1,K}(T)} &\leq (\amax\Upsilon\lambda T)^{j+1} \sum_{l=1}^{\lfloor j/p\rfloor} \frac{1}{l!} \\
        &< (\amax\Upsilon\lambda T)^{j+1} (e-1) \\
        &< 2 (\amax\Upsilon\lambda T)^{j+1}.
    \end{aligned}
    \end{align}
    This provides the first bound of the corollary. For the long-time case, we instead use the bound
    \begin{equation}
        \norm{\tilde{E}_{j+1,K}(T)} \leq (\amax\Upsilon\lambda T)^{j + \lfloor j/p \rfloor}  \sum_{l=1}^{\lfloor j/p \rfloor} \frac{1}{l!} < 2(\amax\Upsilon\lambda T)^{j(1+1/p)}.
    \end{equation}
    These two bounds gives the stated result.
\end{proof}
\noindent As an aside, for $0 < j < p$, the $\tilde{E}_{j+1}$ simply vanish because of the properties of the error series for order $p$ formulae. Thus, we simply exclude this case from the above corollary.

\begin{corollary}[Taylor Series for Trotter-Evolved Observable]
\label{Corollary:Taylor_Series}
    The Trotter-evolved observable can be expressed as a Taylor series
     \begin{align*}
        \tilde{O}(T,s) = O(T) +\sum_{j\in \sym\mathbb{Z} \geq p} s^j \tilde{E}_{j+1}(T)(O),
    \end{align*}
    where $\norm{\tilde{E}_{j+1}}$ is bounded as per \cref{Lemma:Derivative_Bounds}.
\end{corollary}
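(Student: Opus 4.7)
The plan is to assemble this result directly from the three preceding ingredients: the analyticity and identification of Taylor coefficients in \cref{Lemma:Derivative_Bound}, the vanishing pattern of error terms in \cref{Lemma:Zero_Error_Terms}, and the norm bounds in \cref{Lemma:Derivative_Bounds}. There is essentially no new content to prove --- the task is to package these pieces into the claimed single statement.

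First I would invoke \cref{Lemma:Derivative_Bound} to conclude that $\tilde{O}(T,s)$ is analytic in $s$ in a neighbourhood of the origin (since $\cP^{1/s}(sT)= e^{-iT\Heff(sT)}$ and $\Heff$ is analytic by \cref{Lemma:Effective_Hamiltonian_Error_Series}), so it admits a convergent Taylor expansion. By the same lemma, the coefficient of $s^j$ equals $\tilde{E}_{j+1,K}(T)(O)$ for any $K>\lceil j/p\rceil$, and this coefficient is independent of the choice of such $K$; we denote it $\tilde{E}_{j+1}(T)(O)$. The zeroth-order term is $\tilde{O}(T,0)=O(T)$, obtained from the $s\to 0$ limit where $\Heff(0)=H$.

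Next, I would account for which $j$ appear. Because $\cP$ is a $p$th-order staged product formula, \cref{Lemma:Zero_Error_Terms} gives $E_{j+1}=0$ for $j<p$, and moreover $E_{j+1}=0$ for odd $j$ when $\cP$ is symmetric. Inspecting the explicit expression for $\tilde{E}_{j+1,K}(T)$ in the proof of \cref{Lemma:Exact_Error_Form} (where each factor in the nested product is of the form $\ad_{E_{j_\kappa+1}}$ with $j_\kappa \in \sym\mathbb{Z}_+\geq p$), this vanishing is inherited by $\tilde{E}_{j+1}$, so only indices $j\in\sym\mathbb{Z}$ with $j\geq p$ contribute. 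Substituting these into the Taylor series yields exactly the stated expansion. The bound on $\norm{\tilde{E}_{j+1}(T)}$ is then just a restatement of \cref{Lemma:Derivative_Bounds}.

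The only mild subtlety --- which I would address in a single sentence --- is confirming that the series actually converges in operator norm to $\tilde{O}(T,s)$, rather than merely as a formal object. This follows because the bounds in \cref{Lemma:Derivative_Bounds} grow at most like $(\amax\Upsilon\lambda T)^{j(1+1/p)}$, so $\sum_j \norm{s^j \tilde{E}_{j+1}(T)}$ converges absolutely for $|s|$ below a fixed threshold depending on $\amax,\Upsilon,\lambda,T,p$. This threshold is compatible with the radius of analyticity already guaranteed by \cref{Lemma:Effective_Hamiltonian_Error_Series}, so the Taylor series represents $\tilde{O}(T,s)$ on that neighbourhood.
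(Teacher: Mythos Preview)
Your proposal is correct and follows essentially the same approach as the paper, which simply states that the result follows by constructing the Taylor series from the derivatives via \cref{Lemma:Derivative_Bound} and then citing \cref{Lemma:Derivative_Bounds} for the bounds. Your version is more detailed --- explicitly invoking \cref{Lemma:Zero_Error_Terms} for the index restriction and adding a convergence remark --- but the underlying logic is identical.
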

\begin{proof}
    Follows by constructing the Taylor series from the derivatives using \cref{Lemma:Derivative_Bound}.
    The bounds on the derivatives are then given in \cref{Lemma:Derivative_Bounds}.
\end{proof}

We can now apply this to bound the error of the polynomial interpolation procedure.
\begin{theorem}[Polynomial Interpolation Error]
\label{Theorem:Polynomial_Interpolation_Error}
    Consider a Chebyshev interpolation $P_{m-1} f(s)$ of the time evolved expectation value $f(s)$ on the interval $[-\width,\width]$, for long simulation time $\amax\Upsilon\lambda T > 1$, with
    \begin{equation*}
        \width = \frac{1}{2}(\amax \Upsilon\lambda T)^{-(1+1/p)}.
    \end{equation*}
    The approximation error at $s=0$ may be bounded as
    \begin{equation*}
        \frac{\abs{f(0) - P_{m-1}f(0)}}{\norm{O}} \leq c e^{-\gamma m}
    \end{equation*}
    where $\gamma > 1.5$ and $c < 11$.
\end{theorem}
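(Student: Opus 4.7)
The plan is to apply \cref{Lemma:Polynomial_Approx_Error}, which reduces the task to bounding $\max_{s \in [-\width, \width]} |f^{(m)}(s)|$ for $f(s) = \langle \tilde{O}(T,s)\rangle$. Since $\tilde{O}(T,s)$ is analytic in $s$ near the origin by \cref{Lemma:Derivative_Bound} and $[-\width,\width]$ sits safely inside the radius of convergence of the Taylor series from \cref{Corollary:Taylor_Series}, I can differentiate that series term by term and use the triangle inequality to obtain
\[
|f^{(m)}(s)| \leq \|O\| \sum_{\substack{j \in \sym\mathbb{Z}_+\\ j \geq \max(p,m)}} \frac{j!}{(j-m)!}\, |s|^{j-m}\, \|\tilde{E}_{j+1}(T)\|.
\]

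Abbreviate $\alpha \coloneqq a_\mathrm{max}\Upsilon\lambda T$. Substituting the long-time bound $\|\tilde{E}_{j+1}\| \leq 2\alpha^{j(1+1/p)}$ from \cref{Lemma:Derivative_Bounds} and reindexing $k = j - m$, the sum becomes $2\alpha^{m(1+1/p)} \sum_{k\geq 0} \frac{(k+m)!}{k!} (|s|\alpha^{1+1/p})^k$. The generating-function identity $\sum_{k\geq 0}\frac{(k+m)!}{k!} x^k = m!\,(1-x)^{-(m+1)}$, combined with the fact that $|s|\alpha^{1+1/p} \leq \width \alpha^{1+1/p} = 1/2$ guaranteed by the prescribed choice of $\width$, collapses the series to give $|f^{(m)}(s)| \leq 4\|O\| m! (2\alpha^{1+1/p})^m$. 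Plugging this into \cref{Lemma:Polynomial_Approx_Error} and using the identity $(2\alpha^{1+1/p})\width = 1$ all Hamiltonian-dependent factors cancel, leaving
\[
\frac{|f(0) - P_{m-1}f(0)|}{\|O\|} \leq 4 m! \left(\frac{2\alpha^{1+1/p}\width}{2m}\right)^m = \frac{4\, m!}{(2m)^m}.
\]

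Stirling's approximation $m! \leq e\sqrt{m}\,(m/e)^m$ then converts this bound into $4 e \sqrt{m}\,(2e)^{-m}$. Since $\log(2e) = 1 + \log 2 \approx 1.693 > 1.5$, any $\gamma \in (1.5, \log 2e)$ absorbs the $\sqrt{m}$ prefactor into a universal constant. A direct numerical check of $4 e \sqrt{m}\, e^{(\gamma - \log 2e)m}$ over $m \geq 1$ shows the worst case near $m \approx 3$ and stays below $11$ for $\gamma$ taken just above $1.5$, yielding the claimed $\gamma > 1.5$ and $c < 11$.

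The main technical obstacle is the termwise differentiation and resummation of the Taylor series, since differentiating $m$ times could in principle blow up badly in $m$. The binomial-coefficient generating function extracts precisely a single factor of $m!$, which is then neatly tamed by the $(1/(2m))^m$ factor coming from the Chebyshev nodes. The remaining cancellation of $\alpha$ factors is entirely engineered by the prescribed value of $\width$, which equates $\alpha^{1+1/p} \width$ to the constant $1/2$; this is the key design choice that isolates an $m$-only bound and produces the exponential error decay.
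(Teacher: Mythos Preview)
Your proof is correct and follows essentially the same route as the paper: both differentiate the Taylor series of \cref{Corollary:Taylor_Series}, insert the long-time bound of \cref{Lemma:Derivative_Bounds}, collapse the resulting series, and arrive at the identical intermediate estimate $\abs{f(0)-P_{m-1}f(0)}\leq 4\norm{O}\,m!/(2m)^m$ before applying Stirling. Your generating-function identity $\sum_{k\geq 0}\tfrac{(k+m)!}{k!}x^k=m!(1-x)^{-(m+1)}$ is exactly the paper's identity $\sum_{j\geq m}\binom{j}{m}2^{-j}=2$ after the reindex $j=k+m$, so there is no genuine methodological difference. The only cosmetic distinction is in the last step: the paper extracts explicit constants via the clean inequality $\sqrt{m}\leq e^{m/(2e)}$, giving $\gamma=\ln 2+1-\tfrac{1}{2e}\approx 1.509$ and $c=4\sqrt{2\pi}\,e^{1/12}\approx 10.9$, whereas you appeal to a numerical check; you may wish to adopt the paper's inequality to make the constants fully analytic.
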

\begin{proof}
    From~\cref{Corollary:Taylor_Series}, the Taylor series for $\tilde{O}(T,s)$ is given by
    \begin{align*}
        \tilde{O}(T,s) = O(T) +\sum_{j\in \sym\mathbb{Z} \geq p} s^j \tilde{E}_{j+1}(T)(O).
    \end{align*}
    The $m$th derivative of $\tilde{O}$ is then given by
    \begin{align}
    \begin{aligned}
        \partial_s^m \tilde{O}(T,s) &= \sum_{\substack{j \in \sym\mathbb{Z}\\ j\geq \max\{p,m\}}} \frac{j!}{(j-m)!} s^{j-m} \tilde{E}_{j+1}(T)(O)\\
        &= m! \sum_{\substack{j \in \sym\mathbb{Z}\\ j\geq \max\{p,m\}}} \binom{j}{m} \tilde{E}_{j+1}(T)(O)  s^{j-m}.
    \end{aligned}
    \end{align}
    Applying the triangle inequality and noting that $\abs{s}\leq \width$,
    \begin{equation}
        \max_{s\in[-\width,\width]} \norm{\partial_s^m \tilde{O}(T,s)} \leq m! \norm{O}\sum_{\substack{j \in \sym\mathbb{Z}\\ j\geq \max\{p,m\}}} \binom{j}{m} \width^{j-m} \norm{\tilde{E}_{j+1}(T)}.
    \end{equation}
    Using~\cref{Lemma:Derivative_Bounds} in the long-time regime,
    \begin{equation}
       \max_{s\in[-\width,\width]} \norm{\partial_s^m \tilde{O}(T,s)} \leq 2\frac{m!}{\width^m}\norm{O} \sum_{\substack{j \in \sym\mathbb{Z}\\ j\geq \max\{p,m\}}} \binom{j}{m} \left(\width(\amax\Upsilon\lambda T)^{1+1/p}\right)^j.
    \end{equation}
    We now choose $\width = \frac{1}{2}(\amax \Upsilon\lambda T)^{-(1+1/p)}$. Doing so, the infinite series is given by
    \begin{equation}
        \sum_{\substack{j \in \sym\mathbb{Z}\\ j\geq \max\{p,m\}}} \binom{j}{m} \frac{1}{2^j}
    \end{equation}
    and may be upper bounded by
        \begin{equation}
        \sum_{j = m}^\infty \binom{j}{m} \frac{1}{2^j} = 2.
    \end{equation}
    Thus, the derivative is upper bounded as
    \begin{equation}
       \max_{s\in[-\width,\width]} \norm{\partial_s^m \tilde{O}(T,s)} \leq 4 \norm{O} \frac{m!}{\width^m}.
    \end{equation}
    Applying \cref{Lemma:Polynomial_Approx_Error} gives
    \begin{equation}
        \abs{f(0) - P_{m-1}f(0)} \leq 4 \norm{O} \frac{m!}{(2m)^m}.
    \end{equation}
    Using a simplified Stirling-type upper bound $m! < \sqrt{2\pi m} (m/e)^m e^{1/(12 m)}$, one obtains
    \begin{align}
    \begin{aligned}
        \abs{f(0) - P_{m-1}f(0)} &\leq 4 \sqrt{2\pi} e^{1/12} \sqrt{m} (2e)^{-m} \norm{O} \\
        &< c e^{-\gamma m} \norm{O}
    \end{aligned}
    \end{align}
    where $\gamma \equiv \ln 2 + 1 - 1/(2 e) \approx 1.509$ and $c \equiv 4 \sqrt{2\pi} e^{1/12} \approx 10.9$. This immediately leads to the statement of the lemma.
\end{proof}

\noindent It is interesting to note that, while the factor of $1/2$ in $\width$ was chosen for simplicity, other choices will lead to different values of $c, \gamma$. In particular, a factor approaching $1$ should make the exponential decay more shallow and increase the constant factor. However, it is interesting that the form of the error dependence remains exponentially decaying.

We now have the ability to estimate the number of Trotter steps necessary to achieve a certain accuracy in the interpolation, assuming the values $f(s_k)$ are computed exactly.
\begin{lemma}
    In the setting of~\cref{Theorem:Polynomial_Interpolation_Error}, to achieve a relative error $\epsilon$ such that
    \begin{align*}
       |O(T) -  P_{m-1}f(0)| \leq \epsilon\norm{O},
    \end{align*}
    it suffices to choose $m=O(\log(1/\epsilon))$ and $\width=\frac{1}{2}(\amax \Upsilon\lambda T)^{-(1+1/p)}$. This interpolation protocol requires a maximum number of Trotter steps
    \begin{align*}
        \max_k r_k = O\left((\amax \Upsilon\lambda T)^{(1+1/p)}\log(1/\epsilon)\right). 
    \end{align*}
\end{lemma}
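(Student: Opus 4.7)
The plan is to combine the exponential-decay bound from \cref{Theorem:Polynomial_Interpolation_Error} with an explicit estimate of the smallest Chebyshev node on $[-\width,\width]$.

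First, I would invoke \cref{Theorem:Polynomial_Interpolation_Error} with the prescribed $\width = \frac{1}{2}(\amax\Upsilon\lambda T)^{-(1+1/p)}$ to obtain
\begin{equation*}
    \abs{f(0) - P_{m-1}f(0)} \leq c\, e^{-\gamma m}\norm{O}
\end{equation*}
with $\gamma > 1.5$ and $c < 11$. Enforcing $c e^{-\gamma m} \leq \epsilon$ gives $m \geq \gamma^{-1}\log(c/\epsilon)$, so $m = O(\log(1/\epsilon))$ sample points suffice, proving the first assertion.

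For the maximum Trotter step count I would note that $\max_k r_k = 1/\min_k \abs{s_k}$ and use the explicit form $s_i = \width \cos((2i-1)\pi/(2m))$ of the Chebyshev nodes on $[-\width,\width]$. The minimum of $\abs{s_i}$ over $i$ is attained at the index for which $(2i-1)/(2m)$ is closest to $1/2$, and a Taylor expansion of $\cos$ about $\pi/2$ gives $\min_k \abs{s_k} = \Theta(\width/m)$. Therefore $\max_k r_k = O(m/\width) = O\bigl(m\,(\amax\Upsilon\lambda T)^{1+1/p}\bigr)$, and substituting $m = O(\log(1/\epsilon))$ yields the stated bound. Both claims of the lemma then follow.

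The main subtlety, and the only part of the argument that is not a direct substitution, is that the raw ratios $1/\abs{s_k}$ are not generally positive integers, whereas a product-formula query of the form $\cP^{1/s_k}$ requires an integer number of Trotter steps. I expect to handle this by rounding each $r_k$ to the nearest integer (optionally after an integer rescaling in the spirit of the $r_{\mathrm{scale}}$ device used in the Richardson analysis of \cref{Sec:Richardson_Extrapolation_Observables}). This perturbs each $s_k$ by a relative amount $O(1/m)$; since Chebyshev interpolation at the evaluation point $s=0$ is continuous in the nodes, and since $\min_k \abs{s_k}$ shrinks by at most a constant factor, neither the exponential convergence in $m$ nor the $(\amax\Upsilon\lambda T)^{1+1/p}$ scaling of $\max_k r_k$ is affected.
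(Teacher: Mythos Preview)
Your proposal is correct and follows essentially the same route as the paper: invoke the exponential bound $ce^{-\gamma m}\leq\epsilon$ from \cref{Theorem:Polynomial_Interpolation_Error} to fix $m=O(\log(1/\epsilon))$, then identify the Chebyshev node of smallest modulus as $\Theta(\width/m)$ to obtain $\max_k r_k = O(m/\width)$. The paper's proof does exactly this (though with a small typo writing $O(\width m)$ for $O(m/\width)$); the integer-rounding issue you raise is not treated within this lemma's proof but is deferred to the subsequent ``Stability to Imperfect Chebyshev Nodes'' subsection and \cref{sec:Imperfect_Chebyshev_Nodes}.
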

\begin{proof}
    From~\cref{Theorem:Polynomial_Interpolation_Error}, to achieve relative error $\epsilon$, it suffices to choose $m$ such that 
    \begin{equation}
        c e^{-\gamma m} \leq \epsilon 
    \end{equation}
    i.e., $m = O(\log(1/\epsilon))$. To find the maximum number of Trotter steps needed for an particular sampling point, we see from \cref{Eq:Standard_Chebyshev_Nodes} that the smallest $s_k$ is given by $k = m/2$ and is of size $O(m^{-1})$. Hence,
    \begin{equation}
        \max_k r_k = O(\width m) = O((\amax \Upsilon\lambda T)^{(1+1/p)}\log(1/\epsilon)).
    \end{equation}
\end{proof}

\subsection{Resource Estimates}
\label{Sec:Resource_Estimates_Polynomial}
As for the Richardson extrapolation case, for polynomial interpolation we will have to deal with both the algorithmic extrapolation error $\epsint$ and the measurement error $\epsdata$.
We assume that each $f(s_i)$ is measured with error $\leq \epsdata$, i.e. we measure a set of data points $\{f_i\}_i$ such that $|f(s_i) - \tilde{f}_i|\leq \epsdata$.
Let $\tilde{P}_mf$ be the polynomial obtained from fitting to the data $\{\tilde{f}_i\}_i$, then it can be shown \cite{rivlin2020chebyshev} that
\begin{align*}
    \max_{s\in[-\width,\width]}|P_mf(s) - \tilde{P}_mf(s)|\leq L_m\epsdata,
\end{align*}
where $L_m$ is the Lebesgue constant, which for Chebyshev interpolation is bounded by $\frac{2}{\pi}\log(m+1)+1$.

Thus, in terms of the total error, $\epsilon$, we can make the partition
 \begin{align*}
    \epsint =\frac{\epsilon}{2}, \qquad \epsdata = \frac{\epsilon}{2L_m }.
\end{align*}

\noindent To find the total resource costs for the polynomial interpolation methods, we realise that the analysis is identical to the analysis performed in \cref{Sec:Resource_Costs_Incoherent} and \cref{Sec:Resource_Costs_Coherent}.

\subsubsection{Stability to Imperfect Chebyshev Nodes}
As in the Richardson extrapolation case, when performing polynomial interpolation, we wish to ensure our samples are taken from inverse integer values which are at or close to the ideal Chebyshev nodes. 
However, in \cref{sec:Imperfect_Chebyshev_Nodes}, we see that by choosing a slightly larger value of $\width$, we get robustness to this sampling error but with a new Lesbegue constant $L_m'$ which satisfies $L_m'\leq 2L_m$.
Thus, rather than sampling from the Chebyshev nodes exactly, we can instead sample from the closest inverse integer.
The robustness condition is satisfied by choosing
\begin{align}
\begin{aligned}
     1/\width  &=   O\left( (a_\mathrm{max} \Upsilon \lambda T)^{1+1/p} m^2\log\left(m \right) \right)
    \\  
    &= O\left( (a_\mathrm{max} \Upsilon \lambda T)^{1+1/p} \log^2\left(\frac{1}{\epsilon} \right)\log\log\left( \frac{1}{\epsilon}\right) \right).
\end{aligned}
\end{align}
Thus the maximum number of Trotter steps scales as
\begin{align*}
    O\left( (a_\mathrm{max} \Upsilon \lambda T)^{1+1/p} \log^3\left(\frac{1}{\epsilon} \right)\log\log\left( \frac{1}{\epsilon}\right) \right).
\end{align*}

\noindent  
Considering the total number $\Cexp$ of Trotter steps only changes the overall scaling by $\log\log$ factors in $1/\epsilon$, which we neglect. We summarise these findings in the following theorem. 
\begin{theorem}[Resource Costs for Polynomial Interpolation]
Let $\tilde{P}_{m}f$ be the $m$-degree polynomial fit taken at the inverse integers closest to the Chebyshev nodes in the interval $[-\width,\width]$.
The resource requirements to achieve error
\begin{align*}
    |\expval{O(T)} - \tilde{P}_mf(0)|\leq \epsilon \norm{O}
\end{align*}
for incoherent and coherent measurement scales as the following table.
\begin{table}[H]
    \centering
    \begin{tabular}{c|c|c}
        \textbf{Scaling}  & \textbf{Max Depth}, $\Dmax$ & \textbf{Total Resources}, $\Cexp$ \\
        \hline
        \textbf{Coherent} & $\tilde{O}\left( \frac{(a_\mathrm{max} \Upsilon \lambda T)^{1+1/p}}{\epsilon}\log^3\big( \frac{1}{\epsilon}\big) \right)$  & $\tilde{O}\left( \frac{(a_\mathrm{max} \Upsilon \lambda T)^{1+1/p}}{\epsilon}\log^3\big( \frac{1}{\epsilon}\big)\right)$  \\
        \hline
        \textbf{Incoherent} & $\tilde{O}\left( (a_\mathrm{max} \Upsilon \lambda T)^{1+1/p}\log^3\big( \frac{1}{\epsilon}\big) \right)$  & $\tilde{O}\left( \frac{(a_\mathrm{max} \Upsilon \lambda T)^{1+1/p}}{\epsilon^2}\log^3\big( \frac{1}{\epsilon}\big) \right)$  
    \end{tabular}
    \label{Table:Resource_Costs_Incoherent_Interpolation}
\end{table}
  \noindent Here, $\lambda\leq \sum_\gamma \norm{H_\gamma}$ is defined in \cref{Lemma:Sufficient_Trotter_Depth}, $\tilde{O}$ hides $\log\log$ factors, and there is a failure probability of $\delta = 0.01$. Moreover, $m = O(\log 1/\epsilon)$.
\end{theorem}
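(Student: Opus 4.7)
The plan is to combine the algorithmic interpolation error bound from \Cref{Theorem:Polynomial_Interpolation_Error}, the noise-stability of Chebyshev interpolation via the Lebesgue constant, and the two measurement subroutines that were already analysed for Richardson extrapolation in \Cref{Sec:Resource_Costs_Incoherent,Sec:Resource_Costs_Coherent}. The structure mirrors the Richardson case closely: the width $\width$ of the sampling interval plays the same role as the reciprocal of the largest Trotter step $s_1$ in the Richardson analysis.

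First I would budget the error as $\epsilon = \epsint + L_m' \epsdata$ by the decomposition $|f(0)-\tilde P_{m-1}f(0)| \le |f(0)-P_{m-1}f(0)| + |P_{m-1}f(0)-\tilde P_{m-1}f(0)|$ and Lebesgue-constant stability of Chebyshev interpolation, using the imperfect-node Lebesgue constant $L_m' \le 2L_m = O(\log m)$ discussed just before the theorem. Splitting evenly, set $\epsint = \epsdata L_m' = \epsilon/2$. By \Cref{Theorem:Polynomial_Interpolation_Error}, $\epsint \le c e^{-\gamma m}\norm{O}$, so it suffices to take $m = O(\log(1/\epsilon))$, and then $\epsdata = \Omega(\epsilon / \log\log(1/\epsilon))$.

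Next I would plug in the maximum single-sample Trotter depth for the imperfect-node interval derived at the end of the subsection: $1/\width = O((\amax\Upsilon\lambda T)^{1+1/p} m^2 \log m)$, and hence, since the smallest (in absolute value) Chebyshev node is $\Theta(\width/m)$, the integer Trotter count satisfies
\begin{equation*}
    \max_k r_k = O\!\left((\amax\Upsilon\lambda T)^{1+1/p} m^3 \log m\right) = \tilde O\!\left((\amax\Upsilon\lambda T)^{1+1/p}\log^3(1/\epsilon)\right).
\end{equation*}
This number is exactly the per-point product-formula circuit depth, which gives the $\Dmax$ entry in the incoherent row directly. Summing the Chebyshev nodes over $k=1,\ldots,m$ contributes only an extra $O(m\log m)$ factor relative to $\max_k r_k$, by the same type of harmonic-series bound used in \Cref{Eq:Sum_Bound}, so this factor gets absorbed into the $\tilde O$.

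For the two measurement protocols I would reuse the sample-count estimates already derived. For the incoherent scheme, Hoeffding's inequality plus a union bound over the $m$ nodes gives $N = O(\epsdata^{-2}\log(m/\delta)) = \tilde O(\epsilon^{-2})$ repetitions per point; each repetition runs a circuit of depth $\max_k r_k$, yielding the claimed $\Dmax$ and, after multiplying by $N \cdot m$, the claimed $\Cexp = \tilde O((\amax\Upsilon\lambda T)^{1+1/p}\epsilon^{-2}\log^3(1/\epsilon))$. For the coherent scheme, the IQAE bound $N_G = O(\epsdata^{-1}\log(\tfrac{m}{\delta}\log(1/\epsdata))) = \tilde O(1/\epsilon)$ Grover calls per point multiplies the circuit depth, giving $\Dmax = \tilde O((\amax\Upsilon\lambda T)^{1+1/p}\epsilon^{-1}\log^3(1/\epsilon))$, and the total cost is of the same order up to a further $O(m\log m) \subset \tilde O(1)$ factor.

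The main obstacle, and the only nontrivial technical piece, is confirming the imperfect-node stability claim $L_m' \le 2L_m$ and the corresponding shrinkage of $\width$ (handled in \Cref{sec:Imperfect_Chebyshev_Nodes}); everything else is bookkeeping that parallels the Richardson analysis. A small care point is to verify that replacing the ideal nodes by the closest inverse integers does not spoil the derivative bound used in \Cref{Theorem:Polynomial_Interpolation_Error}, which holds because the analyticity of $\tilde O(T,s)$ and the bound on $\norm{\tilde E_{j+1}(T)}$ from \Cref{Lemma:Derivative_Bounds} apply for any $s$ in the (slightly enlarged) interval, not just at the sample points.
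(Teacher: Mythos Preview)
Your proposal is correct and follows essentially the same approach as the paper: split the error into interpolation and data components via the Lebesgue constant, invoke \Cref{Theorem:Polynomial_Interpolation_Error} to fix $m=O(\log(1/\epsilon))$, use the imperfect-node stability result from \Cref{sec:Imperfect_Chebyshev_Nodes} to set $1/\width$, and then defer the measurement-cost analysis verbatim to \Cref{Sec:Resource_Costs_Incoherent,Sec:Resource_Costs_Coherent}. The paper itself does not give a self-contained proof of this theorem but rather states it as a summary of the preceding subsections, so your write-up is, if anything, more explicit than the paper's.
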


\section{Numerical Demonstrations}
\label{sec:numerical-demonstration}

Here we test the above results using the Heisenberg model on a length $L$ 1D chain, defined as:
\begin{align*}
    H = \sum_{i=1}^{L-1} \left( X_{i}X_{i+1} + Y_{i}Y_{i+1} +Z_{i}Z_{i+1} \right) + \sum_{i=1}^{L} h_i Z_i
\end{align*}
where we choose the values $\{h_i\}_{i=1}^L$ uniformly randomly in the interval $[-1,1]$.
This Hamiltonian is well studied in the context of Trotter simulation \cite{childs2018toward}.
We choose to work with a $p=2$ product formula on a system of $L=6$ qubits.
For the initial state, we choose a randomly chosen classical bit string: $\ket{x}, \ x \in \{0,1\}^n$ and we choose a randomly chosen sum of 3 Pauli strings to act as the observable.
In order to generate the figures in the following section, we have used a minimum number of Trotter steps $r=1/s_{\min} = \left\lceil (\Lambda T)^{3/2} \right\rceil$.

\subsection{Error with Fixed Maximum Circuit Depth}
Here we consider the error associated with choosing a fixed maximum circuit depth that we can utilise (e.g. if we are limited by noise in the physical circuit), and then use this to compare results from time simulation.
That is, suppose we want to predict $\langle O(T) \rangle $ but are restricted to some maximum circuit depth (i.e. Trotter steps).
How does the unextrapolated error compare to the extrapolated error?

\Cref{Fig:Fixed_Time_Change_Degree} (left) shows that if we consider a fixed time and as we increase the number of steps from a minimum value of $r= \left\lceil (\Lambda T)^{3/2} \right\rceil$, using Trotter extrapolation drastically reduces the error compared with just measuring the state directly.
However, as demonstrated by \cref{Fig:Fixed_Time_Change_Degree} (right), the benefits of this are limited by the error associated by the measurements at each point.
As one might expect, neither extrapolation techniques can improve results beyond the limit to which we measure the observable we are extrapolating.

\begin{figure}[h!]
    \centering
    \begin{minipage}{0.45\textwidth}
        \centering
        \includegraphics[width=1.0\textwidth]{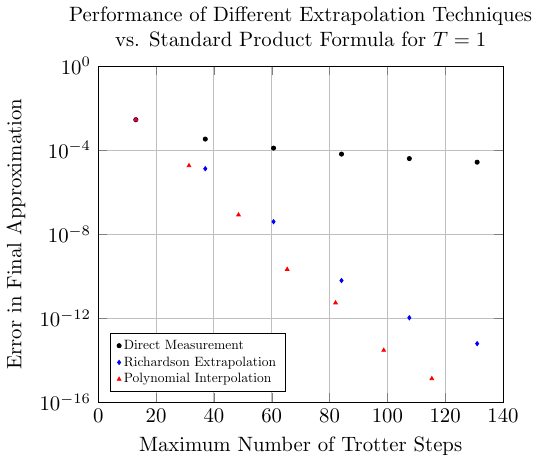}
    \end{minipage}\hfill
    \begin{minipage}{0.47\textwidth}
        \centering
        \includegraphics[width=1.0\textwidth]{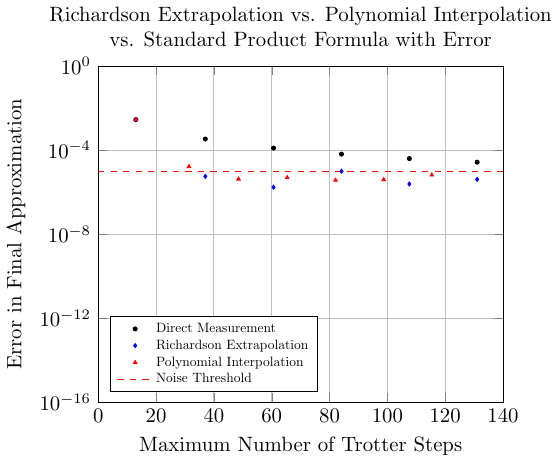} 
    \end{minipage}
    \caption{Left: Error comparison between the observable measured on the time-evolved state using Trotterisation vs. the extrapolated error for different maximum numbers of Trotter steps on a system of $6$ qubits. 
    The initial number of steps corresponds to $\sim (\Lambda T)^{3/2}$.
    We see the plot levels out at the bottom due to floating point precision.
    Right: The same as left, but where the extrapolation procedures are performed with a measurement error of $10^{-6}$ for each measurement.}
    \label{Fig:Fixed_Time_Change_Degree}
\end{figure}

\noindent We can also consider the case where we run the simulation for different amounts of time and see how the error scales with the number of nodes.
From \cref{Fig:Variable_Time} we see that the extrapolation holds for much longer simulation times. 
We point the reader to \cite[Chapter 3]{watkins2024thesis} for similar numerics\footnote{We note a small difference in the scaling of the error compared to \cite{watkins2024thesis} where the gradient changes with $T$. This is due to the different choice in the minimum number of Trotter steps. }.

\begin{figure}[H]
    \centering
    \includegraphics[width=0.6\textwidth]{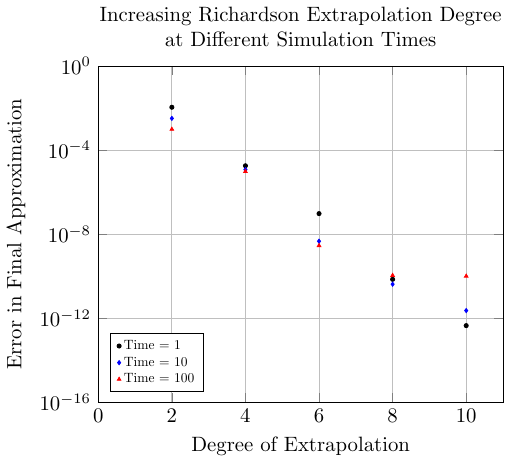}
    \caption{A comparison in the performance of Richardson extrapolation as a function of the Richardson extrapolation degree for different simulation times.
  }
    \label{Fig:Variable_Time}
\end{figure}

\section{Aside: Compatibility with Classical Shadows}
\label{sec:classical_shadows}

In this section, we consider the estimation of many time evolved expectation values using the classical shadows technique. To use classical shadows alongside extrapolation, simply run the classical shadows protocol on each of the $m$ states $\{\rho_k\}_{k=1}^m$ resulting from a Trotter evolution with step size $s_k$. Then compute the expectation values of interest using these shadows and perform the extrapolation.

Classical shadows requires repeated computational basis measurements measurements on the time-evolved states $\rho_k$, which is cheap but limits us to shot-noise accuracy. From \cite{huang2020predicting}, to estimate the expectation value of $M$ local observables $\{O_i\}_{i=1}^M$ to accuracy $|\tr[\rho O_i] - O_{i,est}|\leq \epsdata \norm{O_i}$ with probability $\geq (1-\delta)$, it suffices to take a number of samples of $\rho_k$ which is bounded by
\begin{align*}
	\leq  \frac{128}{\epsdata^2}\max_i\norm{O_i}^2 \log\left( \frac{M}{\delta} \right).
\end{align*}
Each of these samples requires a single Trotter evolution whose cost is determined by $k$, etc. The following theorem is a simple corollary of the above results on Richardson and polynomial extrapolation.
\begin{theorem}
	Let $\{O_i(T)\}_{i=1}^M$ be a set of $M$ $k$-local time-evolved observables $O(T) = e^{iHT}Oe^{-iHt}$.
	To get an estimate $\abs{\tr(\rho O_i(T)) - O_{i,est}}\leq \epsdata \norm{O_i}$ with probability $\geq (1-\delta)$, the maximum Trotter depth scales as
	\begin{align*}
		\Dmax = O\left((\lambda T)^{1+1/p}\polylog(1/\epsilon)\right),
	\end{align*}
	where we need to run experiments at no more than $m = O(\log(1/\epsilon))$ sampling points. Meanwhile, the total number of Trotter steps for the entire protocol scales as
	\begin{align*}
		\Cexp = \tilde{O}\left( \frac{1}{\epsilon^2}(\lambda T)^{1+1/p} \polylog \left(\frac{1}{\epsilon} \right)\log\left(\frac{M}{\delta} \right) \right),
    \end{align*}
    where $\tilde{O}$ suppresses doubly-logarithmic multiplicative factors.
\end{theorem}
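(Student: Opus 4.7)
The plan is to treat this as a direct corollary of the Richardson extrapolation resource bounds derived in \cref{Sec:Resource_Estimates_Richardson} combined with the classical shadows sample complexity of \cite{huang2020predicting}. The only new ingredient is that, at each extrapolation sample point $s_k$, instead of performing a dedicated measurement of a single observable, one runs the shadows protocol on the Trotter-evolved state $\rho_k = \cP^{1/s_k}(s_k T)\rho\, \cP^{\dagger 1/s_k}(s_k T)$, and then combines the resulting per-observable estimates with the Richardson coefficients $b_k$. The proof is essentially a repackaging exercise.

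First, I would partition the overall error budget $\epsilon$ into the extrapolation error and the per-observable data error exactly as in \cref{Eq:Error_Decomposition}, setting $\epsint \leq \epsilon/2$ and $\epsdata \leq \epsilon/(2\norm{b}_1)$, so that for every observable the final estimator $\sum_k b_k \widehat{O}_{i,k}$ lies within $\epsilon\norm{O_i}$ of $\expval{O_i(T)}$. Since $\norm{b}_1 = O(\log m)$ by \cref{Lemma:General_Richardson_Extrapolation}, this constraint only contributes doubly-logarithmic factors. Next, I would invoke \cref{Corollary:Max_Trotter_Steps} to fix $m = O(\log(1/\epsilon))$ and the maximum Trotter depth $\max_k r_k = O((\lambda T)^{1+1/p}\log(1/\epsilon))$; this directly yields the $\Dmax$ bound stated in the theorem, since the shadow protocol appends only random single-qubit bases that do not increase the Trotter circuit depth.

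For the per-state sample count, I would apply the shadow bound of \cite{huang2020predicting}: for $k$-local observables, $N_{\mathrm{sh}} = O\bigl(\epsdata^{-2}\log(M/\delta')\bigr)$ shadow samples of a single state suffice to estimate $M$ observables simultaneously to precision $\epsdata \norm{O_i}$ with probability $1-\delta'$. To ensure joint success across all $m$ extrapolation points, I would take $\delta' = \delta/m$ by a union bound, giving $N_{\mathrm{sh}} = \tilde{O}\bigl(\epsilon^{-2}\log(M/\delta)\bigr)$ after absorbing $\norm{b}_1^2$ into the $\tilde{O}$. For the total number of Trotter steps, I would bound $\Cexp \leq N_{\mathrm{sh}} \sum_{k=1}^m r_k$ and use the sum estimate already derived in \cref{Eq:Sum_Bound}, namely $\sum_{k=1}^m r_k = O\bigl((\lambda T)^{1+1/p}\log(1/\epsilon)\log\log(1/\epsilon)\bigr)$. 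Multiplying the two pieces reproduces the stated $\Cexp$.

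The only conceptual point worth verifying, rather than a real obstacle, is that the linear combination $\sum_k b_k \widehat{O}_{i,k}$ of \emph{independent} shadow estimates across different $s_k$ inherits the robustness guarantee of \cref{Lemma:Richardson_Extrapolation_Error}. This is immediate: the shadow estimator at each $s_k$ is an unbiased estimator of $\tr[\rho_k O_i]$ whose deviation is uniformly controlled by $\epsdata \norm{O_i}$ with high probability, and so the error decomposition \cref{Eq:Error_Decomposition} applies per observable with no modification. Consequently, the proof amounts to substituting the shadow sample complexity of \cite{huang2020predicting} for the Hoeffding bound \cref{eq:Hoeffding_incoherent} used in the incoherent-measurement analysis, at the price of an extra multiplicative $\log M$ inside the $\tilde{O}$.
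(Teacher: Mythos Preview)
Your proposal is correct and matches the paper's intent: the paper explicitly states this theorem as ``a simple corollary of the above results on Richardson and polynomial extrapolation'' and provides no further proof beyond quoting the classical-shadows sample bound from \cite{huang2020predicting}. Your sketch fills in the details the paper leaves implicit---the error split via \cref{Eq:Error_Decomposition}, the union bound over the $m$ sample points, and the reuse of the $\sum_k r_k$ estimate from \cref{Eq:Sum_Bound}---all in the expected way.
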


\section{Discussion and Conclusions}
\label{sec:discussion-and-conclusions}

In this work, we have shown that by performing product formulae evolutions with different time-step sizes, we can use classical extrapolation techniques to predict the expectation values of observables with gate complexity scaling as $O\left(\frac{T^{1+1/p}}{\epsilon} \polylog\left(\frac{1}{\epsilon}\right)\right)$, where we are ignoring $\log\log$ factors.
We emphasise that the $O(1/\epsilon)$ scaling arises from the measurement error, and that the effective Trotter error scales as $O(\polylog(1/\epsilon))$.

In the NISQ regime, circuit depth is arguably the most important cost metric. For this, we consider extrapolation  using shot-limited measurements of the observable or classical shadows. Here the fact coherent measurements are not needed drastically reduces the constant overheads needed and we see only circuit depths scaling as $O\left( T^{1+1/p} \polylog\left( \frac{1}{\epsilon}\right) \right)$ are required.
We expect the resource scaling with error to be optimal up to $\log$ factors.
The Heisenberg measurement limit is fundamental measurement limit which requires at least $\Omega(1/\epsilon)$ scaling \cite{giovannetti2006quantum}.
Thus we should expect our total resource cost to scale as $\Omega(1/\epsilon)$ in general.

\paragraph{Beyond Standard Product Formulae}
There are many proposals for modifying product formulae, or otherwise combining them with simulations algorithms \cite{rajput2022hybridized,sharma2024hamiltonian}. It remains open whether these algorithms can be fruitfully combined with the Richardson or polynomial extrapolation approaches considered here to improve their performance.

\paragraph{State Dependent Bounds}
The bounds given in this work are also sensitive to the ``physics'' of the system.
That is, if the system has symmetries, or is restricted to a particular subspace, then we can improve the error estimates given here by replacing the spectral norm with a symmetry-respecting norm.
We demonstrate this in \cref{sec:Symmetries}, where we show that if the evolution is restricted to a particular subspace, the various norms or commutators characterising the error can be replaced with quantities projected onto the relevant subspace.

\paragraph{More Tractable Expressions for Commutator Scaling}
Although the results in this paper demonstrate commutator scaling in the error of Richardson extrapolation, actually computing these expressions is likely to be extremely computationally expensive.
Developing an expression which can be computed efficiently is an enormously important task, as it allows us to upper bound the amount of quantum resources needed to reach a guaranteed precision.
We hope that future work will find simplified expressions for the commutator error.

\paragraph{Lower Bounds} The work here gives an improvement on the Trotter error for measured observables.
In the general setting, lower bounds have been proven for the performance of Trotterization methods \cite{hahn2024lower}, but it remains to be seen if similar bounds on the query complexity can be proven if post-processing is allowed. 

\section*{Acknowledgements}

{\begingroup
	\hypersetup{urlcolor=navyblue}
	
	The authors would like to acknowledge very useful discussions with \href{https://orcid.org/0000-0001-7331-2759}{Gyula Lakos} on the convergence of the BCH and Magnus expansions, as well as \href{https://orcid.org/0000-0002-2964-3603}{Dong An}, \href{https://orcid.org/0000-0002-9903-837X}{Andrew Childs}, and \href{https://orcid.org/0000-0002-8334-1120}{Alessandro Roggero} on the topic of Trotter error bounds. 
    The authors would also like to thank \href{https://orcid.org/0009-0006-1500-505X}{Abhishek Rajput} for useful discussions.

	J.D.W. acknowledges support from the United States Department of Energy, Office of Science, Office of Advanced Scientific Computing Research, Accelerated Research in Quantum Computing program, and also NSF QLCI grant OMA-2120757. 
	J.W. acknowledges support from the National Science Foundation under grants DGE-1848739 \& PHY-2310620, as well as the Department of Energy under grant DE-SC0023658. 
	We thank \href{https://www.ectstar.eu/}{ECT*} and the \href{https://fias.institute/en/projects/emmi/}{ExtreMe Matter Institute EMMI at GSI, Darmstadt}, for support in the framework of an ECT*/EMMI Workshop during which this work has been initiated.

\printbibliography[heading=bibintoc]

\endgroup}

\newpage

\appendix

\section{Improved Bounds with Physical Knowledge}
\label{sec:Symmetries}

Here we consider the case where the Hamiltonian may have some physical symmetry. 
In this case, we may be able to exploit this symmetry to improve our convergence bounds.
Let $H =\sum_j h_j$, where $\{h_j\}_j$ are the terms used in the Trotter decomposition.
Then assume that there exists some symmetry subspace denoted $\cS$, such that $\Pi_{\cS}$ is a projector onto the subspace $\cS$.
We assume that the both the overall Hamiltonian and the individual terms in the Trotter decomposition preserve the symmetry.
That is, the following holds.
\begin{align*}
    [H, \Pi_\cS] = 0 \quad \quad  [h_j, \Pi_\cS] =  0
\end{align*}
Consider a state $\ket{\psi_\cS} \in \cS$. 
Then we see that
\begin{align*}
    e^{-ih_jt}\ket{\psi_\cS} &= e^{-ih_jt}\Pi_\cS\ket{\psi_\cS} \\
    &= \Pi_\cS e^{-i\Pi_\cS h_j\Pi_\cS t} \Pi_\cS\ket{\psi_\cS},
\end{align*}
where we have used that $\ket{\psi_\cS} = \Pi_\cS \ket{\psi_\cS}$ and $\Pi_\cS = \Pi_\cS\Pi_\cS$.
Applying this to product formulae,
\begin{align*}
    \cP(t)\ket{\psi_\cS} &= \prod_{j} e^{-ia_jh_j t_j}\ket{\psi_\cS} \\ 
    &= \prod_je^{-ia_j  \Pi_\cS h_j\Pi_\cS t_j}\ket{\psi_\cS}.
\end{align*}
Thus, because the simulation is restricted to a particular subspace, we can consider the Hamiltonian restricted to that subspace.
In this case we no longer need to consider the full operator norm and can consider a ``symmetry projected'' norm.
We can define ``symmetry-respecting'' norms which characterise the rate of convergence.
\begin{align*}
    \Lambda_{ \cS} \coloneqq \sum_j \norm{\Pi_\cS h_j\Pi_\cS} \quad \quad \quad  
\end{align*}
For the case of commutators, from \cref{Lemma:Richardson_Extrapolation_Error},
\begin{align*}
    \lambda_{j,l} &\coloneqq \bigg(\sum_{\substack{j_1\dots j_l\in \sym\mathbb{Z}_+\geq p \\ j_1+\dots+j_l=j}}  \prod_{\kappa=1}^l 2 \frac{\acomm^{(j_\kappa + 1)}}{(j_\kappa + 1)^2}\bigg)^{1/(j+l)}
\end{align*}
and
\begin{align*}
    \alpha^{(j)}_{\text{comm}, \cS} &= \sum_{\gamma_1\gamma_2\ldots \gamma_j = 1}^\Gamma \norm{\Pi_\cS[H_{\gamma_1} H_{\gamma_2} \ldots H_{\gamma_j}]\Pi_\cS}.
\end{align*}
These symmetry respecting quantities can then be used in place of $\Lambda, \acomm^{(j)}$ in \cref{Lemma:Richardson_Extrapolation_Error} and elsewhere, thus giving us improved convergence results.
We believe that similar results should hold when the restriction to the subspace $\cS$ is not strictly preserved, i.e. there is some leakage to other subspaces.
For example, if one restricts to low-energy states as per \cite{csahinouglu2021hamiltonian}.

\section{Polynomial Interpolation: Error from Imperfect Chebyshev Nodes} \label{sec:Imperfect_Chebyshev_Nodes}

As described earlier, if the Cheybshev nodes do not coincide with the inverse integers, we may not be able to sample exactly from the Chebyshev nodes as, if we restrict ourselves to integer applications of the Trotter evolution, we require $s = 1/r, $ for $r\in \mathbb{Z}$.
For any $\xi\in\mathbb{Z}$, the spacing between inverse integers is
\begin{align}
   \left(\frac{1}{\xi } - \frac{1}{\xi +1 } \right)  
    &= \frac{1}{\xi} \sum_{k=1}^\infty \frac{(-1)^k}{\xi^k} \nonumber \\
    &\leq  \frac{1}{\xi} \sum_{k=1}^\infty \frac{1}{\xi^k} \nonumber  \\
    &\leq \frac{2}{\xi^2}, \ \  \text{for sufficiently large $\xi$,} \label{Eq:Inverse_Interger_Spacing}
\end{align}
where one can use the standard bound on a geometric sum to get the last line.
We then use the following theorem.

\begin{theorem}[Perturbed Chebyshev Nodes, Section 3 of \cite{vianello2018stability}] \label{Theorem:Perturbed_Lebesgue_Constant}
    Let $x\in [c,d]$, and let $L_m$ Lebesgue constant of the $m+1$ Chebyshev points in $[c,d]$.
    The Lebesgue constant of the points after they have been perturbed by $\leq \epsilon_N$ away from the Chebyshev node, denoted $L'_m$, is bounded by
    \begin{align*}
        L_m' \leq \frac{L_m}{1-\alpha},
    \end{align*}
    where $\epsilon_m$ satisfies
    \begin{align*}
        \epsilon_m = \frac{\alpha(d-c)}{m^2 L_m}.
    \end{align*}
\end{theorem}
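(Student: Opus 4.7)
The plan is to interpret the Lebesgue constant as the operator norm of the interpolation projector $I_m \colon C[c,d] \to \mathbb{P}_m$ (polynomials of degree at most $m$), with the sup norm on both sides; the exact Chebyshev nodes give $\|I_m\| = L_m$, and the perturbed nodes give $\|\tilde I_m\| = L_m'$. The idea is to bound $\|\tilde I_m - I_m\|$ and then close a self-referential inequality to recover a bound on $L_m'$ alone.

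First I would estimate $\tilde I_m f - I_m f$ for a generic $f$. Writing $p \coloneqq I_m f \in \mathbb{P}_m$, observe that $\tilde I_m$ reproduces polynomials of degree at most $m$, so $\tilde I_m p = p$, and therefore $\tilde I_m f - I_m f = \tilde I_m(f - p)$. The data for $\tilde I_m(f-p)$ is $\{f(\tilde x_i) - p(\tilde x_i)\}_i = \{p(x_i) - p(\tilde x_i)\}_i$ (using $f(x_i) = p(x_i)$ by the defining property of $I_m$), and each entry is bounded in magnitude by $\epsilon_m \|p'\|_\infty$. Markov's inequality for $[c,d]$ gives $\|p'\|_\infty \leq \tfrac{2m^2}{d-c}\|p\|_\infty \leq \tfrac{2m^2}{d-c} L_m \|f\|_\infty$, while the interpolant of a data vector of sup-norm $\leq \delta$ at the perturbed nodes has sup-norm at most $L_m' \delta$.

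Chaining these bounds yields an inequality of the form
\begin{equation*}
    \|\tilde I_m f - I_m f\|_\infty \;\leq\; L_m' \cdot \epsilon_m \cdot \tfrac{2m^2}{d-c} L_m \|f\|_\infty,
\end{equation*}
hence $L_m' \leq L_m + \alpha L_m'$ with $\alpha$ of the form $\tfrac{C m^2 \epsilon_m L_m}{d-c}$ for an absolute constant $C$ (the precise constant is absorbed into the identification of $\alpha$ in the statement). Solving gives $L_m' \leq L_m/(1-\alpha)$, provided $\alpha < 1$. Inverting the definition of $\alpha$ then produces the stated choice $\epsilon_m = \alpha(d-c)/(m^2 L_m)$ up to the constant convention.

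The main obstacle is the apparently circular character of the estimate, in which $L_m'$ appears on both sides. The resolution is to treat the bound as an a priori inequality valid as soon as both sides are finite; under the hypothesis $\alpha < 1$ one may rearrange legitimately. A secondary technicality is factoring $\|I_m f\|_\infty$ rather than $\|f\|_\infty$ out of the Markov step, which is what makes the extra factor $L_m$ appear and is precisely what forces the $m^2 L_m$ scaling in the permissible perturbation $\epsilon_m$. With these two points handled, the rest is a routine rearrangement.
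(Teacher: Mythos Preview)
The paper does not supply its own proof of this theorem; it is quoted from \cite{vianello2018stability} and used as a black box, so there is nothing in the paper to compare against directly.

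That said, your argument contains a genuine gap. When you evaluate the data for $\tilde I_m(f-p)$ at the perturbed nodes, you assert
\[
f(\tilde x_i) - p(\tilde x_i) \;=\; p(x_i) - p(\tilde x_i),
\]
justified by ``$f(x_i) = p(x_i)$''. But the left-hand side involves $f(\tilde x_i)$, not $f(x_i)$; for a generic $f \in C[c,d]$ there is no relation between $f(\tilde x_i)$ and $f(x_i)$ (continuity alone gives no quantitative control), so this identity fails. Consequently the data cannot be bounded by $\epsilon_m \|p'\|_\infty$, and the chain producing the self-referential inequality breaks.

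The repair---which is essentially the argument in the cited reference---is to swap the roles of $I_m$ and $\tilde I_m$. Set $q \coloneqq \tilde I_m f \in \mathbb{P}_m$; then $I_m q = q$, whence $\|q\|_\infty \le L_m \max_i |q(x_i)|$. Now write
\[
q(x_i) \;=\; q(\tilde x_i) + \bigl(q(x_i) - q(\tilde x_i)\bigr) \;=\; f(\tilde x_i) + \bigl(q(x_i) - q(\tilde x_i)\bigr).
\]
The second term involves only the polynomial $q$ at two nearby points, so Markov's inequality legitimately bounds it by $\epsilon_m \tfrac{2m^2}{d-c}\|q\|_\infty$. This yields
\[
\|q\|_\infty \;\le\; L_m\Bigl(\|f\|_\infty + \epsilon_m \tfrac{2m^2}{d-c}\|q\|_\infty\Bigr),
\]
and rearranging (with $\alpha$ proportional to $m^2 L_m \epsilon_m/(d-c)$, $\alpha<1$) gives $L_m' \le L_m/(1-\alpha)$. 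The essential point is that Markov must be applied to a \emph{single polynomial} evaluated at two nearby points; only the order ``first $\tilde I_m$, then $I_m$'' arranges this, because $q(\tilde x_i) = f(\tilde x_i)$ is known data while $p(\tilde x_i)$ has no useful relation to $f(\tilde x_i)$.
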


We can apply this to the case where we are not sampling from the exact Chebyshev nodes.
We now consider the interval $[-\width,\width]$ on which we wish to learn the function $f_B(s)$.
Applying \cref{Theorem:Perturbed_Lebesgue_Constant} with $\alpha=1/2$ gives
\begin{align*}
    L'_m \leq 2L_m,
\end{align*}
provided for
\begin{align*}
    \epsilon_m =  \frac{\width}{m^2 L_m}.
\end{align*}
If we consider the interval $[-\width,\width]$, then the maximum distance between inverse point is at the boundaries.
Hence from \cref{Eq:Inverse_Interger_Spacing}, the maximum distance between inverse integers in $[-\width,\width]$ is
\begin{align*}
   \epsilon_m &\leq \max_{\xi \in [-1/\width,1/\width]}  \frac{1}{\xi^2} \\
   &\leq \width^2.
\end{align*}
Thus, for a given $m$, we need to ensure that $\width\leq (m^2L_m)^{-1}$. We note that in general $\width$ scales in terms of $O(T^{1+1/q})$, while $m=O(\log (1/\epsilon))$ and $L_m =O(\log\log(1/\epsilon)$.
To satisfy both bounds we can choose
\begin{align*}
    \width &\leq \frac{1}{(a_\mathrm{max} \Upsilon \lambda T)^{1+1/q} m^2 L_m},  
\end{align*}
and thus the minimum number of Trotter steps scales as
\begin{align*}
  O\left( m/\width \right) =  O\left( (a_\mathrm{max} \Upsilon \lambda T)^{1+1/q} \log^3\left(\frac{1}{\epsilon} \right)\log\log\left( \frac{1}{\epsilon}\right) \right).
\end{align*}

\section{Errata for Previous Literature on Richardson Extrapolation}
\label{sec:Errata}

Here we discuss the previous work on Richardson extrapolation by \citeauthor{vazquez2023well} \cite{vazquez2023well}, which appears to have incorrect derivations for the error in Richardson extrapolation of observables.

In Appendix A, equation (13) of Ref. \cite{vazquez2023well}, it is claimed that the iterated product formula has an expansion
\begin{align} \label{Eq:Incorrect_Expansion_Vazquez}
    \cP^k(T/k) = e^{-iHT} + \sum_{n=1}^\infty A_n \frac{T^{n+1}}{k^n},
\end{align}
where the $A_n$ are size $O(1)$ in $T$ and consist of nested-commutators of the Hamiltonian terms.
However, this expansion is incorrect, which can be seen by comparison to a counter-example\footnote{We add that the series in \cref{Eq:Incorrect_Expansion_Vazquez} (i.e. \cite[eq. (13)]{vazquez2023well}) is claimed to come from Ref. \cite{chin2010multi}, however, we were unable to verify this explicit series appears in  Ref. \cite{chin2010multi}.}.
Consider the Hamiltonian $H=X+Z$ and the associated Trotterization $e^{-iXT/k}e^{-iZT/k}$.
Making the identification $s=1/k$, using the standard BCH formula we get an effective Hamiltonian
\begin{align}
\begin{split}
    \log\left( e^{-iXsT}e^{-iZsT} \right) &=  -isT(X+Z)  -(sT)^2 \frac{1}{2}[X,Z] + \frac{i(sT)^{3}}{12}([X,[X,Z]] + [Z,[Z,X]]) \\  
    &-\frac{(sT)^4}{24}\underbrace{[Z[X,[X,Z]]]}_{=0}   +O(s^5T^5)   \\
    &= -i(X+Z)(sT) + iY(sT)^2 + i(Z+X)\frac{(sT)^3}{3} + O((sT)^5).
\end{split}
\end{align}
We can then exponentiate this and consider the iterated first-order product formula to get
\begin{align}
    \cP^{1/s}(sT) &= \left( e^{-iXsT}e^{-iZsT} \right)^{1/s} \nonumber\\ 
    &= \exp\left(-i(X+Z)T + iYsT^2 + i(Z+X)\frac{s^2T^3}{3} + O(s^4T^5)\right).
\end{align}

To obtain a polynomial expansion in $s$, we use the variation of parameters formula
\begin{align}
    e^{A+B} = e^A + \int_0^1 d\tau e^{(1-\tau)A} B e^{(A+B)\tau}
\end{align}
which we iterate $4$ times to get the following.
\begin{align}
\begin{aligned}
     e^{A+B} &= e^A +\int_0^1 d\tau_1 e^{(1-\tau_1)A} B e^{A\tau_1} \\ 
     &+ \int_0^1 d\tau_1 \int_0^{\tau_1} d\tau_2  e^{(1-\tau_1)A} B e^{A(\tau_1-\tau_2)}B e^{\tau_2A} \\
     &+ \int_0^1 d\tau_1\int_0^{\tau_1} d\tau_2\int_0^{\tau_2} d\tau_3 \ e^{(1-\tau_1)A} B e^{A(\tau_1-\tau_2)}B e^{(\tau_2-\tau_3)A} B e^{\tau_3 A} \\
     &+ \int_0^1 d\tau_1 \int_0^{\tau_1} d\tau_2 \int_0^{\tau_2} d\tau_3\int_0^{\tau_3} d\tau_4 \ e^{(1-\tau_1)A} B e^{A(\tau_1-\tau_2)}B e^{(\tau_2-\tau_3)A} B e^{(\tau_3-\tau_4) A} B  e^{\tau_4 (A+B)} \label{Eq:Higher_Order}
\end{aligned}
\end{align}
We now make the explicit choice
\begin{align}
\begin{aligned}
    A &= -iT(X+Z) \\
    B &= iYsT^2 + i(Z+X)\frac{s^2T^3}{3} + O(s^4T^5)
\end{aligned}
\end{align}
and neglect any terms above $O(s^3)$.
Since $B$ is $O(s)$, then the last line of \cref{Eq:Higher_Order} is $O(s^4)$, hence we neglect it. Keeping some $A$ terms implicit for brevity, we have
\begin{align}
\begin{aligned}
    &\cP^{1/s}(sT) = e^{-iT(X+Z)} \\
    &+ \int_0^1 d\tau_1 e^{A(1-\tau_1)} (iYsT^2 + i(Z+X)\frac{s^2T^3}{3}) e^{A\tau_1}  \\
    &+ \int_0^1 d\tau_1 \int_0^{\tau_1} d\tau_2 \bigg[  e^{A(1-\tau_1)} (iYsT^2 + i(Z+X)\frac{s^2T^3}{3}) e^{A(\tau_1-\tau_2)}(iYsT^2 + i(Z+X)\frac{s^2T^3}{3}) e^{A\tau_2} \bigg]  \\
    &+ \int_0^1 d\tau_1 \int_0^{\tau_1} d\tau_2 \int_0^{\tau_2} d\tau_3 \ e^{A(1-\tau_1)} (iYsT^2) e^{A(\tau_1-\tau_2)}(iYsT^2) e^{A(\tau_2-\tau_3)} (iYsT^2) e^{A\tau_3} + O(s^4).
\end{aligned}
\end{align}
Grouping the above according to combinations of $s$ and $T$, the expression may be written as
\begin{align}
    \cP^{1/s}(sT) &=e^{-iH T} + sT^2G_1(T) + s^2T^3G_2(T) + s^2T^4 G_3(T) + s^3 T^5G_4(T) + s^3 T^6G_5(T) + O(s^4) \label{Eq:CounterExample_2} 
\end{align}
where $G_1,G_2, G_3, G_4, G_5 = O(1)$ are matrix functions of $T$ only.
We note in particular that there are terms with scaling $O(s^2T^4),O(s^3T^6)$, and $O(s^2T^5)$ which do not appear with the ratio $s^mT^{m+1}$ which would be necessary for \cref{Eq:Incorrect_Expansion_Vazquez} to be valid.
It can be checked that this occurs for higher order terms as well.

The incorrect series expansion given in \cref{Eq:Incorrect_Expansion_Vazquez} is then used in Appendix C of Ref. \cite{vazquez2023well} and hence the errors unfortunately propagate into this later section. 
In particular, through equations (15)-(18) which is where the error in the Richardson extrapolation is derived.
Following from \cite[eq. (17)]{vazquez2023well} (and broadly using their notation) with $\cP^{k_j}(T/k_j) = U + E_j$, then
\begin{align*}
    \langle O(k_j^{-1},T) \rangle &= \bra{\psi}U^\dagger O U \ket{\psi} + \bra{\psi}U^\dagger O E_j \ket{\psi} \\
    &+\bra{\psi}E_j^\dagger O U \ket{\psi} + \bra{\psi}E_j^\dagger O E_j \ket{\psi}.
\end{align*}
We now explicitly consider Richardson extrapolation with $l=3$ extrapolation points, for the Hamiltonian $H=X+Z$.
Following \cite[eq. (17)]{vazquez2023well},
\begin{align}
    \epsilon &= \sum_{j=1}^{l=3} a_j  \bra{\psi}U^\dagger O E_j \ket{\psi} \\
     &= \sum_{j=1}^{l=3} a_j \bra{\psi}U^\dagger O \left( T^2\frac{1}{k_j}G_1 + \frac{1}{k_j^2}T^3G_2 + \frac{1}{k_j^2}T^4G_3 + \frac{1}{k_j^3}T^5G_4 + \frac{1}{k_j^3}T^6G_5  \right)  \ket{\psi} + \cO\left( \frac{1}{k_1^4}\right).
\end{align}
We see that the Richardson conditions cancels off all terms up to order $O(k_1^{-2})$. 
This gives
\begin{align*}
    \epsilon = \sum_{j=1}^{l=3} a_j \bra{\psi}U^\dagger O \left(  \frac{1}{k_j^3}T^4G_4 + \frac{1}{k_j^3}T^6G_5  \right)  \ket{\psi} + \cO\left( \frac{1}{k_1^4}\right).
\end{align*}
Thus $\epsilon = \cO\left(\frac{T^6}{k_1^3}\right)$ is the asymptotic scaling of the error.
This gives an overall error scaling of the Richardson estimator as
\begin{align}\label{Eq:Explicit_Richardson_Error}
    \sum_{j=1}^{l=3} a_j\langle O(k_j^{-1},T) \rangle = \bra{\psi}U^\dagger O U \ket{\psi} + \cO\left(\frac{T^6}{k_1^3}\right).
\end{align}

The result is that the asymptotic behaviour of the error given in equation (7) in the main text of Ref. \cite{vazquez2023well} appears to be incorrect.
The correct version of \cref{Eq:Incorrect_Expansion_Vazquez} appears in \cite[Lemma 7]{aftab2024multi}.
We also realise that for the error in \cref{Eq:Explicit_Richardson_Error} to shrink, we require that $k_1 = O(T^2),$ and hence we would require a number of Trotter steps (and hence circuit depths) scaling as $O(T^2)$. 
As such, the error scaling in \cref{Eq:Explicit_Richardson_Error} is consistent with the bounds present in the current work, as seen in \cref{Theorem:Informal_Main_Theorem} where the time scaling is $O(T^2)$ for $p=1$ product formulae.

\end{document}